\newcommand{\R}{\mathbb{R}}
\newtcbox{\linedbox}[1][red]{on line,
arc=0pt,outer arc=0pt,colback=#1!10!white,colframe=#1!50!black, boxsep=0pt,left=1pt,right=1pt,top=2pt,bottom=2pt, boxrule=0pt,bottomrule=1pt,toprule=1pt}
\newtcbox{\roundbox}[1][red]{on line, arc=7pt,colback=#1!10!white,colframe=#1!50!black, before upper={\rule[-3pt]{0pt}{10pt}},boxrule=1pt, boxsep=0pt,left=6pt,right=6pt,top=2pt,bottom=2pt}
\newtcbox{\mybox}[1]{title=#1}
\providecommand*{\Dom}[1]{{\rm dom}}							
\providecommand{\diag}{\operatorname{diag}}						
\newcommand{\tr}{\operatorname{Tr}}
\newtheorem{theorem}{Theorem}[section]
\newtheorem{remark}[theorem]{Remark}
\newtheorem{lemma}[theorem]{Lemma}
\newtheorem{corollary}[theorem]{Corollary}
\newtheorem{proposition}[theorem]{Proposition}
\newtheorem{definition}[theorem]{Definition}
\newtheorem{question}{Question}[section]
\newtheorem{example}[theorem]{Example}
\title{Random Walks, Conductance, and Resistance for the Connection Graph Laplacian}
\author[1,2]{Alexander Cloninger}
\author[2]{Gal Mishne}
\author[3]{Andreas Oslandsbotn}
\author[1]{Sawyer Jack Robertson}
\author[2]{Zhengchao Wan}
\author[2]{Yusu Wang}
\affil[1]{Department of Mathematics, University of California San Diego}
\affil[2]{Hal{\i}c{\i}o\u{g}lu Data Science Institute, University of California San Diego}
\affil[3]{Department of Informatics, University of Oslo}
\date{}
\begin{document}

\maketitle

\begin{abstract}
    We investigate the concept of effective resistance in connection graphs, expanding its traditional application from undirected graphs. We propose a robust definition of effective resistance in connection graphs by focusing on the duality of Dirichlet-type and Poisson-type problems on connection graphs. Additionally, we delve into random walks, taking into account both node transitions and vector rotations. This approach introduces novel concepts of effective conductance and resistance matrices for connection graphs, capturing mean rotation matrices corresponding to random walk transitions. Thereby, it provides new theoretical insights for network analysis and optimization.
\end{abstract}

\section{Introduction}
Effective resistance is a widely used distance metric in graph theory, applicable in numerous fields such as dimensionality reduction
\cite{ham2004kernel} and graph sparsification \cite{spielman2011graph, chu2020graph} to graph clustering \cite{alev2018graph, zhang2019detecting}. Effective resistance is closely related to graph random walks through the concept of commute times \cite{chandra1989electrical} and is known for its ability to capture cluster structures \cite{bozzo2013resistancedistance} and its robustness to noise compared to the graph geodesic \cite{gaura2016resistance}. Furthermore, Mémoli et al.~\cite{memoli2022persistent} showed how effective conductance, the reciprocal of effective resistance, can be used to define Cheeger constants. The effective resistance has been successfully applied in several domains, including bioinformatics \cite{forcey2020phylogenetic}, social network systems \cite{zhang2019detecting} and electronics \cite{doyle1984random, kocc2014impact,tauch2015measuring,  wang2015network, cavraro2018graph}. Its presence as a popular tool for analyzing graphs showcases its effectiveness and adaptability in tackling complex real-world problems.

Meanwhile, the increasing complexity of data has motivated the study of more complex graph structures, such as directed graphs, magnetic graphs, and connection graphs \cite{fanuel2018magnetic,singer2012vector}. Of particular note are magnetic graphs and their Laplacian matrices. There has been much recent interest \cite{fiorini2023sigmanet,zhang2021magnet} in using magnetic graph Laplacians for directed graphs by adding an angular phase to directed edges. This has found application in directed graph visualization \cite{fanuel2018magnetic, cloninger2017note}, community detection \cite{fanuel2017magnetic}, as well as in the development of novel graph neural network architectures \cite{fiorini2023sigmanet,zhang2021magnet}. Our focus in this work is on connection graphs, which are a generalization of signed or magnetic graphs, or alternatively, a specific case of voltage graphs from algebraic graph theory \cite{gross1974voltage}. Notably, connection graphs find widespread application in tackling angular synchronization problems \cite{singer2011angular, bandeira2013cheeger}, as well as in leveraging diffusion maps for high-dimensional dataset analysis \cite{singer2012vector}.

Consequently, a pivotal question arises:
\begin{question}
    Can the notions of effective resistance and effective conductance be extended to connection graphs, and if so, would such an extension yield practical benefits?
\end{question}

\paragraph{Related work}

An extension of the concept of effective resistance to connection graphs was first proposed by Chung et al. \cite{chung2014connection} within the context of connection graph sparsification. Their proposed technique utilizes the pseudoinverse of the connection Laplacian matrix in a direct generalization of the definition for classical graphs. This approach, however, has limitations: it applies only to edges, does not extend to all pairs of vertices, and shows discontinuity with respect to changes in graph signatures. 

In recent years, at least two other approaches have been proposed for generalizing effective resistance to other graph models. In \cite{sugiyama2023kron}, the authors propose an extension of effective resistance to directed graphs using the Kron reduction of the Laplacian matrix and concepts from the random walk on the graphs. In \cite{song2019extension}, the authors propose a definition of effective resistance between sets in a graph using the Schur complement of the Laplacian. Both approaches, at least implicitly, rely on graph boundary value problems. For a detailed study on traditional graphs, see \cite{chung2000discrete}. 

\paragraph{Our contributions}

Traditional definitions of electrical resistance and effective conductance are intricately tied to Poisson and Dirichlet problems \cite{jorgensen2008operator} as well as random walks on graphs \cite{doyle1984random,lovasz1993random}. Inspired by this, our first contribution is to extend Dirichlet problems to connection graphs and establish several fundamental properties, such as the maximum norm principle. Furthermore, considering random walks on connection graphs, we pinpoint the pivotal concept of mean path signatures. These signatures track the expected rotation experienced by a random walk along the edges. By linking this concept with the Dirichlet problem on connection graphs and devising algorithms to compute mean path signatures, we establish a solid groundwork for defining effective resistance and effective conductance on connection graphs, generalizing classical definitions.

In particular, these concepts and techniques allow us to derive the "effective conductance matrix"—a counterpart to the classical effective conductance in the context of connection graphs. Our definition preserves continuity w.r.t. changes to signatures and permits the evaluation of all vertex pairs. Moreover, we establish a direct link between the effective conductance matrix and mean path signatures, detailing how graph signatures affect effective conductance matrices. This work broadens the classical relationship between effective conductance and escape probability for random walks, extending it to connection graphs.

Finally, following the derivation of the effective conductance matrix, we propose a Poisson-type problem on connection graphs, creating a pathway to define the "effective resistance matrix." Despite the discontinuity of the effective resistance matrix, we introduce a scalar version that preserves continuity in response to changes in the underlying graph's signatures and enables evaluation of all vertex pairs. This scalar version is derived by analyzing the energy of the solution to the Poisson-type problem, further generalizing the classical scenario.
Significantly, our effective resistance matrix offers more insightful data than the scalar version alone. For instance, in the case of a cyclic graph, the scalar version fails to provide any signature information. Thus, in certain applications, it is imperative to consider the effective resistance matrix in its entirety rather than focusing solely on the scalar version.

\paragraph{Paper Organization} The remainder of this paper is organized as follows.
In \Cref{sec:basics-graphs-conn-laplacian}, we establish the notations for graphs and connection graphs, and review essential concepts and results, such as the switching equivalence of signatures.
Next, in \Cref{subsec:effective-resistance}, we provide a comprehensive overview of effective resistance from multiple perspectives.
\Cref{sec:random walks} considers the analysis of Dirichlet problems and random walks on connection graphs, which serve as the foundation for defining conductance and resistance for connection graphs.
Finally, in \Cref{sec:connection conductance and resistance}, we introduce the conductance matrix and in \Cref{subsec:connection-resistance-matrices} we introduce the resistance matrix for connection graphs. Throughout these sections, we present noteworthy findings, including their connections to random walks and the Dirichlet problem.
It is important to highlight that in \Cref{sec:scalar resistance matrix}, we utilize the aforementioned results to propose a scalar version of effective resistance for connection graphs, emphasizing its continuity with respect to changes in the underlying signature.

We provide demos and examples in our GitHub Repository\footnote{\url{https://github.com/sawyer-jack-1/connection-resistance-demo}}.

\section{Preliminary}
\label{sec:basics-graphs-conn-laplacian}

In this section we will set up notations and discuss some important results that we will use in the rest of the paper.

\subsection{Basics on Matrices and Graphs}
\label{subsec:graphs-matrices}

For positive integers $m$ and $n$, let $I_{n\times n}$ represent the $n\times n$ identity matrix, and $0_{m\times n}$ represent the $m\times n$ zero matrix. If $A\in\mathbb{R}^{n\times n}$ is a square matrix, $A^\dagger$ denotes its Moore-Penrose generalized inverse.

\paragraph{Schur complements} If $M = \left[\begin{array}{c|c}
    A & B \\
    \hline
    C & D
\end{array}\right]$ is a block matrix, where $A,D$ are square matrices, we define the (generalized) \emph{Schur complements} $M/A$ and $M/D$ by the formulas
\begin{align}
    M/A = D - C A^\dagger B,\hspace{.5cm}M/D = A-BD^\dagger C.
\end{align}
We note that the Schur complement can be defined for any principal submatrix of $M$.

One important property of the Schur complement, known as the \emph{quotient identity} \cite{crabtree1969identity}, will be utilized later in our analysis. Consider a block matrix $M$ as described above, and assume that the submatrix $D= \left[\begin{array}{c|c}
    E & F \\
    \hline
    G & H
\end{array}\right]
$ is also structured as a block matrix, with $E$ and $H$ representing square matrices. If both $D$ and $H$ are invertible, we have the following relationship:   
    \begin{equation}\label{eq:quotient}
    M/D=(M/H)/(D/H).
    \end{equation} 
In this equation, we implicitly utilize the fact that $D/H$ is a block submatrix of $M/H$.

\paragraph{Graphs}
We consider graphs $G = (V,E,W)$ where $V = \{1,2,\dotsc,n\}$ is a finite set of vertices, $E\subset{V\choose 2}$ is a collection of undirected edges, and $W = (w_{ij})_{i,j\in V}$ is an edge weight matrix with $w_{ij}>0\iff \{i,j\}\in E$.
We exclude multiple edges and self-edges. Additionally, we define the set of \emph{oriented edges} as
$E^\text{or} := \left\{(i,j),(j,i):i,j\in V, i\sim j\right\}.$ The degree of each $i\in V$ is denoted $\deg(i) := \sum_{j\sim i}w_{ij}$. \emph{Throughout the paper, we assume that graphs are \textbf{connected} unless otherwise stated.} 

The \textit{degree matrix} $D = \diag{(\deg(1),\dotsc,\deg(n))}\in\R^{n\times n}$ is the diagonal matrix whose elements are the degrees of $i\in V$.
The \textit{Laplacian matrix} $L$ of $G$ is defined by the equation $L = D-W$. See Chung \cite{chung1997spectral} for properties of the Laplacian matrix. 

\subsection{Connection Graphs} 
\label{subsec:connection-graph-basics}

Let $d\geq 1$ be a positive integer. We let $\mathsf{O}(d)$ denote the group of $d\times d$ orthogonal matrices, i.e., $\mathsf{O}(d) := \left\{O\in\mathbb{R}^{d\times d}:O^\mathrm{T}O = I_{d\times d}\right\}.$
A $d$-dimensional \textit{connection} (or a \textit{signature}) on a graph $G$ is a map $\sigma:E^{\text{or}}\rightarrow \mathsf{O}(d)$ which satisfies $\sigma_{ji} = \sigma_{ij}^{-1} = \sigma_{ji}^\mathrm{T}$ for each $(i,j)\in E^{\text{or}}$. The pair $(G,\sigma)$ is called a \textit{connection graph}. To distinguish between a graph $G$ and a connection graph $(G,\sigma)$ we might call $G$ a \textit{classical graph} and refer to $G$ as the \textit{underlying graph} of  $(G,\sigma)$.

The connection graph is known by different names depending on the value of $d$. When $d=1$, $(G,\sigma)$ is also called a \emph{signed graph}. It finds applications in social networks and voter models \cite{cartwright1956structural, li2015voter}. When $d=2$ and $\mathsf{SO}(2)$ is considered instead of $\mathsf{O}(2)$, $(G,\sigma)$ is called a \emph{magnetic graph}. These have applications from physics \cite{lieb1993fluxes}, the visualization of directed graphs \cite{fanuel2018magnetic}, and the angular synchronization problem \cite{singer2011angular}.

The \textit{connection Laplacian matrix} of a connection graph $(G,\sigma)$ is the $nd\times nd$ block matrix $\mathcal{L} $ (we sometimes write $\mathcal{L}^\sigma$ to emphasize the signature) defined as follows:
\begin{equation}\label{def:connection-laplacian}
    \mathcal{L} := \left[\begin{array}{c|c|c|c}
        \mathcal{L}_{11} & \mathcal{L}_{12} & \dotsc & \mathcal{L}_{1n} \\
        \hline
        \mathcal{L}_{21} & \mathcal{L}_{22} & \dotsc & \mathcal{L}_{2n} \\
        \hline
        \vdots           & \vdots           & \ddots & \vdots           \\
        \hline
        \mathcal{L}_{n1} & \mathcal{L}_{n2} & \dotsc & \mathcal{L}_{nn}
    \end{array}\right]
\text{ where }\mathcal{L}_{ij} :=\begin{cases}
        \deg(i) I_{d\times d}  & \text{ if }i=j     \\
        -w_{ij}\sigma_{ij} & \text{ if }i\sim j \\
        0_{d\times d}      & \text{ otherwise}
    \end{cases},\hspace{.2cm}  1\leq i,j \leq n.
\end{equation}

Clearly, the connection Laplacian matrix is a symmetric, positive semidefinite matrix. The matrix is particularly useful for analyzing vector valued functions $f:V\rightarrow\R^d$ defined on the underlying graph. We collect such functions into the linear space
$\ell^2 (V;\R^d) = \{f:V\rightarrow\R^d\}$
equipped with the Hilbert-Schmidt inner product
\begin{align}\langle f, g\rangle_{\ell^2 (V;\R^d)} = \tr(g^\mathrm{T} f),\hspace{.2cm}f,g\in \ell^2 (V;\R^d),\end{align}
where we identify each $f\in\ell^2 (V;\R^d)$ as a column vector in $\R^{nd}$ in the canonical way.
By applying the connection Laplacian matrix to any $f\in\ell^2 (V;\R^d)$, we obtain a new function $\mathcal{L}f:V\rightarrow\R^d$ which can be written explicitly as follows.
  \[(\mathcal{L}f)(i) = \sum_{j\sim i}w_{ij}\left(f(i)-\sigma_{ij}f(j)\right).\]
  One can also explicitly write the quadratic form as follows.
              \begin{align}\label{eq:connection-dirichlet-energy-vector-case}
                  f^\mathrm{T}\mathcal{L}f = \sum_{\{i,j\}\in E}w_{ij}(f(i)-\sigma_{ij}f(j))^\mathrm{T}(f(i)-\sigma_{ij}f(j)).
              \end{align}

\paragraph{Consistency} A connection graph $(G,\sigma)$ is said to be \textit{consistent} (and \emph{inconsistent} if otherwise) if for every directed cycle $i_0,\dotsc,i_{n}=i_0$ in $G$ it holds
\begin{align}\label{def:consistent-sig}
    \prod_{\ell=0}^{n-1} \sigma_{i_\ell i_{\ell+1}} = I_{d\times d}.
\end{align}

There are several equivalent criteria for a graph to be consistent, some of which we highlight in the following lemma; a proof can be found in \cite[Theorem 1]{chung2014connection}.

\begin{lemma}\label{lemma:balance-characterizations}
    Let $(G,\sigma)$ be a connection graph. $G$ is consistent if and only if
    \begin{enumerate}
        \item 0 occurs as an eigenvalue of $\mathcal{L}$ with multiplicity exactly $d$ times the number of connected components of $G$,
        \item The eigenvalues of $\mathcal{L}$ are exactly those of $L$ where each occurs with multiplicity $d$,
        \item There exists a map $\tau:V\rightarrow \mathsf{O}(d)$ such that
              \[\tau(i)^{-1}\sigma_{ij} \tau(j) = I_{d\times d}\text{ for each }(i,j)\in E^{\text{or}}.\]
    \end{enumerate}
\end{lemma}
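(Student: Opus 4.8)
The plan is to prove the three-way equivalence by establishing a cycle of implications, with the algebraic condition (iii) serving as the natural pivot between the combinatorial notion of consistency and the spectral statements (i) and (ii). I would organize the argument as: consistency $\Rightarrow$ (iii) $\Rightarrow$ (ii) $\Rightarrow$ (i) $\Rightarrow$ consistency. The implication (ii) $\Rightarrow$ (i) is essentially immediate, since if the eigenvalues of $\mathcal{L}$ are exactly those of $L$ each with multiplicity $d$, then the multiplicity of $0$ as an eigenvalue of $\mathcal{L}$ is $d$ times the multiplicity of $0$ for $L$, which by standard graph Laplacian theory equals the number of connected components of $G$.

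\medskip

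For consistency $\Rightarrow$ (iii): I would fix a spanning tree $T$ of $G$ (using connectedness), pick a root vertex $r$, set $\tau(r) = I_{d\times d}$, and define $\tau(i)$ for every other vertex by propagating along the unique tree path from $r$ to $i$, i.e. $\tau(i) := \prod \sigma$ along that path. This forces $\tau(i)^{-1}\sigma_{ij}\tau(j) = I_{d\times d}$ for every tree edge $(i,j)$ by construction. For a non-tree edge $\{i,j\}$, the edge together with the tree paths from $r$ to $i$ and $r$ to $j$ forms a cycle; applying the consistency hypothesis \eqref{def:consistent-sig} to this cycle yields exactly that $\tau(i)^{-1}\sigma_{ij}\tau(j) = I_{d\times d}$ as well. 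One should double-check the orientation bookkeeping and the $\sigma_{ji} = \sigma_{ij}^{-1}$ convention here, but this is routine.

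\medskip

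For (iii) $\Rightarrow$ (ii): given the gauge map $\tau$, assemble the block-diagonal orthogonal matrix $\Theta := \diag(\tau(1),\dotsc,\tau(n)) \in \mathsf{O}(nd)$. A direct block computation using \eqref{def:connection-laplacian} shows that $\Theta^{-1}\mathcal{L}^\sigma\Theta$ has $(i,j)$-block equal to $\tau(i)^{-1}\mathcal{L}_{ij}\tau(j)$, which is $\deg(i)I_{d\times d}$ on the diagonal and $-w_{ij}\tau(i)^{-1}\sigma_{ij}\tau(j) = -w_{ij}I_{d\times d}$ off-diagonal — that is, $\Theta^{-1}\mathcal{L}^\sigma\Theta = L \otimes I_{d\times d}$. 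Since conjugation by an orthogonal (indeed invertible) matrix preserves the spectrum with multiplicities, and the eigenvalues of $L\otimes I_{d\times d}$ are precisely those of $L$ each repeated $d$ times, (ii) follows.

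\medskip

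The one implication requiring the most care is (i) $\Rightarrow$ consistency, which is the contrapositive direction and the likely main obstacle. Here I would argue by contraposition: suppose $(G,\sigma)$ is inconsistent. Using the energy formula \eqref{eq:connection-dirichlet-energy-vector-case}, note that $f^\mathrm{T}\mathcal{L}f = 0$ iff $f(i) = \sigma_{ij}f(j)$ for every edge, i.e. iff $f$ is a "parallel" (flat) section of the connection. If $G$ were consistent one could build a $d$-dimensional space of such sections (pull back constants through the gauge $\tau$), but inconsistency means propagating around some cycle returns a nontrivial rotation, which obstructs the existence of enough globally-defined flat sections — concretely the dimension of $\ker\mathcal{L}$ drops strictly below $d$ (times the number of components). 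Making this rigorous amounts to showing $\dim\ker\mathcal{L} = d \cdot (\#\text{components})$ forces every cycle holonomy to be trivial; one clean route is to observe that $\ker\mathcal{L}$ is the space of flat sections, that flat sections are determined by their value at a single vertex per component, and that the set of admissible values at that vertex is the subspace of $\R^d$ fixed by all cycle holonomies based there — this has dimension $d$ iff all those holonomies are the identity. Since the excerpt explicitly cites \cite[Theorem 1]{chung2014connection} for the full proof, I would at this point either reproduce this holonomy argument in full or simply cite that reference for the remaining details.
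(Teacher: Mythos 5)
Your proposal is correct, but note that the paper does not prove this lemma at all --- it defers entirely to \cite[Theorem 1]{chung2014connection} --- so what you have written is a self-contained reconstruction rather than a variant of an in-paper argument. Your cycle of implications is sound: the spanning-tree gauge for consistency $\Rightarrow$ (iii) is exactly the device the paper itself uses in \Cref{lemma:spanning tree} (and your caveat about orientation is the only real bookkeeping issue --- with the convention $\sigma_{ji}=\sigma_{ij}^{-1}$ you want $\tau(i)$ to be the signature product read from $i$ back to the root, so that $\tau(i)=\sigma_{ij}\tau(j)$ holds on tree edges); the conjugation $\Theta^{-1}\mathcal{L}^\sigma\Theta = L\otimes I_{d\times d}$ for (iii) $\Rightarrow$ (ii) is a two-line block computation; and (ii) $\Rightarrow$ (i) is immediate. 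For the key step (i) $\Rightarrow$ consistency, your holonomy argument is the right one and is essentially the argument the paper deploys in its proof of \Cref{thm:decomposition}: the quadratic form \eqref{eq:connection-dirichlet-energy-vector-case} identifies $\ker\mathcal{L}$ with flat sections, orthogonality of the $\sigma_{ij}$ makes $\|f(i)\|$ constant on each component (so a flat section is determined by its value at one base vertex per component), and the admissible values at the base vertex form the common fixed subspace of all cycle holonomies there, which has dimension $d$ iff every holonomy is the identity. The one thing I would make explicit in a full write-up is that last equivalence in both directions --- that the fixed subspace having full dimension $d$ forces each holonomy (an orthogonal matrix fixing all of $\R^d$) to \emph{be} the identity, which is what the definition \eqref{def:consistent-sig} of consistency demands --- but that is a one-line observation. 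In short: correct, complete modulo routine bookkeeping you already flagged, and consistent in technique with how the paper handles the closely related \Cref{thm:decomposition}.
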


\begin{remark}[Signatures of paths]\label{rmk:path-independence}
    If $(G,\sigma)$ is consistent, a straightforward consequence of \Cref{def:consistent-sig} is that for any (not necessarily adjacent) nodes $i,j\in V$ we may define $\sigma_{ij}$ by taking a path $(i_0=i,\dotsc,i_{m +1}= j)$ and writing
    $\sigma_{ij} = \prod_{\ell=1}^m\sigma_{i_\ell i_{\ell+1}}.$
    This definition is independent of the choice of path and is hence well defined.
\end{remark}

\subsection{Equivalence and Decomposition of Signatures}\label{subsec:equivalence-of-connection-graphs}

Different signatures on a graph $G$ may yield identical spectra for their respective connection Laplacians. This encourages a deeper exploration of signature structures. One particular important notion is that of (switching) equivalence of signatures from \cite{liu2019curvature} which we describe in a slightly different way below.

\begin{definition}[Switching equivalence between signatures]\label{defn:equivalence-signatures}
    Let $\sigma, \tau$ be two fixed signatures on a graph $G$. Then $\sigma, \tau$ are said to be \emph{(switching) equivalent}, denoted $\sigma\cong\tau$, if there exists a map $f:V\rightarrow \mathsf{O}(d)$ such that for any oriented edge $(i,j)\in E^\mathrm{or}$ it holds
    $f(i)\sigma_{ij} = \tau_{ij}f(j).$
    The map $f$ is called a \emph{switching map}.
\end{definition}

The map $f$ assigns an orthonormal basis in $\mathbb{R}^d$ to each node. At each edge, the orthogonal transformations defined by $\sigma$ and $\tau$ become equivalent when coordinates are changed into the bases given by $f$.

It is straightforward to show that $\cong$ defines an equivalence relation on the class of signatures defined on any one graph $G$. 
The next fact follows immediately from the transitivity of $\cong$, the preceding remarks, and \Cref{lemma:balance-characterizations}.

\begin{proposition}\label{prop:equivalence connection graph}
    Given any two consistent signatures $\sigma$ and $\tau$, they are equivalent. A consistent signature and an inconsistent signature are not equivalent.
\end{proposition}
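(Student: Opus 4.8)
The plan is to reduce everything to characterization (iii) of Lemma \ref{lemma:balance-characterizations} together with the transitivity of the equivalence relation $\cong$. First I would observe that characterization (iii) says precisely that a signature $\sigma$ is consistent if and only if $\sigma \cong \mathbf{1}$, where $\mathbf{1}$ denotes the trivial signature $\mathbf{1}_{ij} = I_{d\times d}$ on every oriented edge: indeed, the map $\tau : V \to \mathsf{O}(d)$ guaranteed by (iii) satisfies $\tau(i)^{-1}\sigma_{ij}\tau(j) = I_{d\times d}$, which rearranges to $\tau(i)^{-1}\sigma_{ij} = I_{d\times d}\,\tau(j)^{-1} = \mathbf{1}_{ij}\,\tau(j)^{-1}$; setting $f(i) := \tau(i)^{-1} \in \mathsf{O}(d)$ (the inverse of an orthogonal matrix is orthogonal) exhibits $f$ as a switching map from $\sigma$ to $\mathbf{1}$ in the sense of Definition \ref{defn:equivalence-signatures}. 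Conversely, a switching map from $\sigma$ to $\mathbf{1}$ gives such a $\tau$ by reversing the same algebra. So ``consistent'' is exactly ``switching-equivalent to the trivial signature.''

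Given that, the first assertion is immediate: if $\sigma$ and $\tau$ are both consistent, then $\sigma \cong \mathbf{1}$ and $\tau \cong \mathbf{1}$, hence $\sigma \cong \tau$ by symmetry and transitivity of $\cong$ (which the excerpt already notes is an equivalence relation). For the second assertion, suppose toward a contradiction that a consistent signature $\sigma$ and an inconsistent signature $\tau$ were equivalent. The cleanest route is to show that switching equivalence preserves consistency: if $\sigma \cong \tau$ via a switching map $f$, and $\sigma$ is consistent, then $\tau$ is consistent. To see this, take any directed cycle $i_0, i_1, \dots, i_m = i_0$ in $G$. From $f(i_\ell)\sigma_{i_\ell i_{\ell+1}} = \tau_{i_\ell i_{\ell+1}} f(i_{\ell+1})$ we get $\tau_{i_\ell i_{\ell+1}} = f(i_\ell)\,\sigma_{i_\ell i_{\ell+1}}\,f(i_{\ell+1})^{-1}$, and multiplying these around the cycle telescopes the $f$-factors: $\prod_{\ell=0}^{m-1}\tau_{i_\ell i_{\ell+1}} = f(i_0)\big(\prod_{\ell=0}^{m-1}\sigma_{i_\ell i_{\ell+1}}\big)f(i_m)^{-1} = f(i_0)\,I_{d\times d}\,f(i_0)^{-1} = I_{d\times d}$, using consistency of $\sigma$ and $i_m = i_0$. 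Hence $\tau$ satisfies Definition \ref{def:consistent-sig} and is consistent, contradicting the assumption that it is inconsistent. Equivalently, one can invoke Lemma \ref{lemma:balance-characterizations}(iii) again and compose switching maps, but the telescoping argument is self-contained and avoids re-deriving (iii).

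I do not expect a serious obstacle here; this is essentially a bookkeeping argument. The one point requiring minor care is orientation bookkeeping in the telescoping product — one must use the convention $\sigma_{ji} = \sigma_{ij}^{-1}$ (part of the definition of a signature) consistently so that the definition of a switching map, stated for oriented edges $(i,j)\in E^{\mathrm{or}}$, applies to each step of the directed cycle regardless of how the underlying undirected edge was oriented. Since the switching-map condition $f(i)\sigma_{ij} = \tau_{ij} f(j)$ is symmetric under swapping $i \leftrightarrow j$ together with $\sigma_{ij}\mapsto\sigma_{ij}^{-1}$, $\tau_{ij}\mapsto\tau_{ij}^{-1}$ (multiply the identity by $f(i)^{-1}$ on the left and $\sigma_{ij}^{-1}$ on the right, then use orthogonality), this causes no difficulty. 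A final sentence would note that the second assertion also shows the consistent signatures form a single $\cong$-class that is disjoint from every class containing an inconsistent signature.
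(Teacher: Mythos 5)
Your proposal is correct and follows essentially the same route the paper intends: the paper states that the proposition "follows immediately from the transitivity of $\cong$, the preceding remarks, and \Cref{lemma:balance-characterizations}," which is precisely your reduction of consistency to switching-equivalence with the trivial signature via characterization (iii), plus the observation that switching equivalence preserves consistency. Your telescoping computation simply fills in the details the paper leaves implicit.
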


The following result shows that one can significantly simplify the graph signature through a spanning tree of the underlying graph. The version of the following result with $\mathsf{U}(1)$ signatures has been mentioned in passing in \cite[Section 4.2]{liu2019curvature}.

\begin{lemma}[Spanning tree simplification]\label{lemma:spanning tree}
    Given a connected graph $G$, let $T$ be a spanning tree. Then, any signature $\sigma$ is equivalent to a signature $\sigma^T$ such that $\sigma^T_{i,j}=I_{d\times d}$ for any edge $(i,j)\in T$ and that $\sigma^T_{i,j}$ depends implicitly on $\sigma$ for any edge $(i,j)\notin T$.
\end{lemma}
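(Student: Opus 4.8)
The plan is to build the switching map $f$ explicitly by rooting the spanning tree $T$ and propagating orthogonal matrices outward along tree edges, then to read off the resulting non-tree signature. First I would fix a root $r\in V$. Since $T$ is a spanning tree, for every vertex $i$ there is a unique path $r=i_0,i_1,\dots,i_m=i$ in $T$; define
\begin{align}
    f(i) := \sigma_{i_0 i_1}\sigma_{i_1 i_2}\cdots\sigma_{i_{m-1}i_m},
\end{align}
with $f(r):=I_{d\times d}$ (the empty product). This is well defined because the path in a tree is unique, and $f(i)\in\mathsf{O}(d)$ as a product of orthogonal matrices. I would then define $\sigma^T$ via the switching relation of \Cref{defn:equivalence-signatures}, namely $\sigma^T_{ij} := f(i)\sigma_{ij}f(j)^{-1}$ for each oriented edge $(i,j)\in E^{\mathrm{or}}$; one checks $\sigma^T_{ji} = f(j)\sigma_{ji}f(i)^{-1} = (f(i)\sigma_{ij}f(j)^{-1})^{-1} = (\sigma^T_{ij})^{-1} = (\sigma^T_{ij})^{\mathrm T}$, so $\sigma^T$ is a bona fide signature, and by construction $\sigma\cong\sigma^T$ with switching map $f$.

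The key remaining step is to verify that $\sigma^T_{ij}=I_{d\times d}$ on tree edges. If $(i,j)\in T$ is a tree edge with, say, $j$ the child of $i$ (so the tree path to $j$ extends the tree path to $i$ by the edge $(i,j)$), then directly from the definition $f(j) = f(i)\sigma_{ij}$, hence $\sigma^T_{ij} = f(i)\sigma_{ij}f(j)^{-1} = f(i)\sigma_{ij}(f(i)\sigma_{ij})^{-1} = I_{d\times d}$; the reversed orientation $(j,i)$ follows since $\sigma^T_{ji}=(\sigma^T_{ij})^{-1}=I_{d\times d}$. This handles all of $T$. For a non-tree edge $(i,j)\notin T$, the value $\sigma^T_{ij}=f(i)\sigma_{ij}f(j)^{-1}$ is determined by $\sigma$ through the products of $\sigma$ along the tree paths from the root to $i$ and to $j$, which is exactly the asserted implicit dependence on $\sigma$.

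I do not expect any real obstacle here; the statement is essentially a bookkeeping exercise once one realizes the switching map should be the tree-path product from a chosen root. The only mild subtlety is orientation consistency — making sure that the choice $\sigma^T_{ij}=f(i)\sigma_{ij}f(j)^{-1}$ (as opposed to the transpose convention) matches \Cref{defn:equivalence-signatures}, which reads $f(i)\sigma_{ij}=\tau_{ij}f(j)$, i.e. $\tau_{ij}=f(i)\sigma_{ij}f(j)^{-1}$; this is consistent. One could alternatively note that this lemma is a special case of the general fact that switching can trivialize a signature on any maximal acyclic subgraph, and that \Cref{prop:equivalence connection graph} (with $G=T$, where every signature is consistent since $T$ has no cycles) already guarantees the restriction of $\sigma$ to $T$ is equivalent to the trivial one; the explicit construction above simply makes the switching map and the resulting off-tree values concrete, which is what the statement demands.
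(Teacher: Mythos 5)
Your proof is correct and follows essentially the same route as the paper: both define the switching map $f$ by taking products of $\sigma$ along unique tree paths from a fixed root and set $\sigma^T_{ij}=f(i)\sigma_{ij}f(j)^{\mathrm T}$. You simply spell out the verification that tree edges become trivial, which the paper leaves implicit.
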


\begin{proof}
 Define $f:E^{\text{or}}\rightarrow \mathsf{O}(d)$ as follows. Let $1\in V$ be a fixed distinguished vertex and set $f(1) = I_{d\times d}$. For each $i\in V$ let $P_i=(1=i_1, i_2,\dotsc, i_k = i)$ be the unique path contained in the spanning tree $T$ from $1$ to $i$. Define $f(i) = \prod_{\ell=1}^{k-1}\sigma_{i_\ell i_{\ell+1}}$ for every other $i\in V$. Then, setting $\sigma^T_{ij} = f(i)\sigma_{ij}f(j)^T$, the claim follows.
\end{proof}

The next result illustrates the relationship between connection Laplacian matrices of two equivalent signatures.

\begin{lemma}[{\cite[Equation (1.13)]{liu2019curvature}}]\label{prop:equivalence connection Laplacian}
    Let $f:V\rightarrow \mathsf{O}(d)$ be a switching map between signatures $\sigma$ and $\tau$.
    Then,
    $F\mathcal{L}^\sigma = \mathcal{L}^\tau F,$
    where $F\in\mathbb{R}^{nd\times nd}$ is the diagonal block matrix whose $i$-th $d\times d$ block is $f(i)$.
\end{lemma}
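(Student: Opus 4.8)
The plan is to verify the matrix identity $F\mathcal{L}^\sigma = \mathcal{L}^\tau F$ block by block, comparing the $(i,j)$-th $d\times d$ block of each side for all $1\le i,j\le n$. Since $F$ is block diagonal with $i$-th block $f(i)$, the $(i,j)$-th block of $F\mathcal{L}^\sigma$ is simply $f(i)\,\mathcal{L}^\sigma_{ij}$, and the $(i,j)$-th block of $\mathcal{L}^\tau F$ is $\mathcal{L}^\tau_{ij}\,f(j)$. So the claim reduces to checking $f(i)\,\mathcal{L}^\sigma_{ij} = \mathcal{L}^\tau_{ij}\,f(j)$ for every pair $(i,j)$, using the block formula \eqref{def:connection-laplacian}.

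I would then split into the three cases of \eqref{def:connection-laplacian}. If $i=j$, both sides equal $\deg(i)\,f(i)$, since $\mathcal{L}^\sigma_{ii} = \mathcal{L}^\tau_{ii} = \deg(i) I_{d\times d}$ and scalar multiples of the identity commute with $f(i)$. If $i\not\sim j$ and $i\ne j$, both blocks are $0_{d\times d}$, so the identity holds trivially. The only substantive case is $i\sim j$: there $\mathcal{L}^\sigma_{ij} = -w_{ij}\sigma_{ij}$ and $\mathcal{L}^\tau_{ij} = -w_{ij}\tau_{ij}$, so the desired equality becomes $-w_{ij} f(i)\sigma_{ij} = -w_{ij}\tau_{ij} f(j)$, i.e.\ $f(i)\sigma_{ij} = \tau_{ij} f(j)$, which is exactly the defining property of a switching map from \Cref{defn:equivalence-signatures} applied to the oriented edge $(i,j)\in E^{\mathrm{or}}$. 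Assembling the three cases over all $i,j$ gives the block identity, hence $F\mathcal{L}^\sigma = \mathcal{L}^\tau F$.

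There is essentially no hard obstacle here; the argument is a direct computation. The only point worth a sentence of care is that the switching relation must hold for \emph{all} oriented edges, including both $(i,j)$ and $(j,i)$; one can note for completeness that this is automatically consistent, since taking transposes in $f(i)\sigma_{ij} = \tau_{ij}f(j)$ and using $O^{\mathrm{T}} = O^{-1}$ for orthogonal matrices together with $\sigma_{ji} = \sigma_{ij}^{\mathrm{T}}$, $\tau_{ji} = \tau_{ij}^{\mathrm{T}}$ yields $f(j)\sigma_{ji} = \tau_{ji}f(i)$. (Alternatively, one could present the proof at the level of functions: for $h\in\ell^2(V;\R^d)$ one checks $(F\mathcal{L}^\sigma h)(i) = f(i)\sum_{j\sim i} w_{ij}(h(i)-\sigma_{ij}h(j))$ equals $(\mathcal{L}^\tau F h)(i) = \sum_{j\sim i} w_{ij}(f(i)h(i) - \tau_{ij}f(j)h(j))$, again reducing termwise to $f(i)\sigma_{ij} = \tau_{ij}f(j)$; but the blockwise computation is the cleanest route.)
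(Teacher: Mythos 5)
Your blockwise verification is correct and complete: the three cases ($i=j$, non-adjacent, adjacent) exhaust the block structure of \eqref{def:connection-laplacian}, and the adjacent case reduces exactly to the defining relation $f(i)\sigma_{ij}=\tau_{ij}f(j)$ of \Cref{defn:equivalence-signatures}. The paper itself supplies no proof for this lemma (it is cited from \cite{liu2019curvature}), so there is nothing to diverge from; your direct computation is the standard argument, and your side remark that the $(j,i)$ relation follows from the $(i,j)$ one by transposition and orthogonality is a correct (if unnecessary, since the definition already quantifies over all oriented edges) consistency check.
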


\paragraph{Direct sum of signatures}
Consider a $d$-dim signature $\sigma$ and a $d'$-dim signature $\sigma'$ on $G$.
We define the \textit{direct sum of $\sigma$ and $\sigma'$}, denoted by $\sigma\oplus\sigma'$, as follows:
\[\forall (i,j)\in E^\mathrm{or},\, (\sigma\oplus\sigma')_{ij}:=\sigma_{ij}\oplus\sigma'_{ij}  = \begin{bmatrix}
        \sigma_{ij}    & 0_{d\times d'} \\
        0_{d'\times d} & \sigma'_{ij}
    \end{bmatrix}.\]

\begin{remark}\label{rmk:block-decomposition-direct-sum}
    For  $\mathcal{L}^{\sigma\oplus\sigma'}$, we have that for any $i,j\in V$,
       $ \mathcal{L}^{\sigma\oplus\sigma'}_{ij} = \begin{bmatrix}
            \mathcal{L}^\sigma_{ij} & 0_{d\times d'}             \\
            0_{d'\times d}          & \mathcal{L}^{\sigma'}_{ij}\end{bmatrix}.$
    Therefore, by permutation of the rows and columns of $\mathcal{L}^{\sigma\oplus\sigma'}$, one has that $\mathcal{L}^{\sigma\oplus\sigma'}$ is similar to the matrix $\mathcal{L}^\sigma\oplus \mathcal{L}^{\sigma'}=\begin{bmatrix}
            \mathcal{L}^\sigma & 0                     \\
            0                  & \mathcal{L}^{\sigma'}\end{bmatrix}$.
\end{remark}

A $d$-dim signature $\sigma$ is called \textit{decomposable} if it is equivalent to the direct sum of two signatures with dimensions greater than 0. Otherwise, we call $\sigma$ \emph{indecomposable}\footnote{While previous research \cite{liu2019curvature} has examined the decomposition of connection Laplacians using group representation theory, we find the aforementioned explanation of signature decomposition to be crucial for comprehending our subsequent findings.}.
For instance, any $1$-dimensional signature is indecomposable, while consistent signatures with $d>1$ are always decomposable.

\begin{example}\label{ex:consistent signature}
    Let $\iota^1$ denote the $1$-dim identity signature.
    If a $d$-dim signature $\sigma$ is \emph{consistent}, then
        $\sigma \cong \bigoplus_{i=1}^d\iota^1.$
    This follows from \Cref{prop:equivalence connection graph} and \Cref{prop:equivalence connection Laplacian}.
\end{example}

The example above motivates us to focus on the case of inconsistent signatures.

\begin{theorem}\label{thm:decomposition}
    Let $\sigma$ be an \emph{inconsistent} $d$-dim signature. Let $\rho$ denote the nullity of $\mathcal{L}^\sigma$. Then, there exists a $(d-\rho)$-dim signature $\tau$ such that $\mathcal{L}^\tau$ is invertible and that
    \begin{equation}\label{eq:decomposition of signature}
        \sigma\simeq (\bigoplus_{i=1}^\rho\iota^1)\oplus \tau.
    \end{equation}
\end{theorem}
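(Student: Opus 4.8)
The plan is to realize $\tau$ as an explicit diagonal block of a signature switching‑equivalent to $\sigma$. First I would identify $\ker\mathcal{L}^\sigma$. By \eqref{eq:connection-dirichlet-energy-vector-case}, $f\in\ell^2(V;\R^d)$ lies in $\ker\mathcal{L}^\sigma$ iff $f(i)=\sigma_{ij}f(j)$ for every oriented edge; call such $f$ a \emph{flat section}. Since $G$ is connected, a flat section is determined by its value at any single vertex (propagate along a path), so every evaluation map $\mathrm{ev}_i\colon f\mapsto f(i)$ is injective on $K:=\ker\mathcal{L}^\sigma$. Hence $\dim K=\rho$, each $K_i:=\{f(i):f\in K\}\subseteq\R^d$ has dimension $\rho$, and for every edge $\sigma_{ij}$ restricts to a linear isometry of $K_j$ onto $K_i$ (it is orthogonal and carries $K_j$ onto $K_i$ by the flatness relation). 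In particular $\rho=\dim K_1\le d$, and by \Cref{lemma:balance-characterizations} equality $\rho=d$ would force $\sigma$ consistent; since $\sigma$ is inconsistent we get $d-\rho\ge 1$.

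Next I would build a \emph{parallel orthonormal frame} for the sub‑bundle $\{K_i\}$. Fix a basis $f_1,\dots,f_\rho$ of $K$ and, at each vertex $i$, apply Gram–Schmidt to $f_1(i),\dots,f_\rho(i)$ to get an orthonormal basis $u_1(i),\dots,u_\rho(i)$ of $K_i$. Because Gram–Schmidt is defined through inner products and normalizations, it commutes with the isometry $\sigma_{ij}\big|_{K_j}$, and since $\sigma_{ij}f_k(j)=f_k(i)$ we obtain $\sigma_{ij}u_k(j)=u_k(i)$ for all edges and all $k\le\rho$. Extend each $\{u_k(i)\}_{k\le\rho}$ arbitrarily to an orthonormal basis $u_1(i),\dots,u_d(i)$ of $\R^d$, and let $g(i)\in\mathsf{O}(d)$ be the matrix whose $k$‑th row is $u_k(i)^{\mathrm{T}}$, so $g(i)u_k(i)=e_k$. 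Then $g$ is a switching map (in the sense of \Cref{defn:equivalence-signatures}) from $\sigma$ to $\sigma':=\big(g(i)\sigma_{ij}g(j)^{-1}\big)_{(i,j)\in E^{\mathrm{or}}}$, which is a legitimate signature. For $k\le\rho$, $\sigma'_{ij}e_k=g(i)\sigma_{ij}u_k(j)=g(i)u_k(i)=e_k$, so $\sigma'_{ij}$ fixes $\R^\rho\times\{0\}$ pointwise; being orthogonal it preserves $\{0\}\times\R^{d-\rho}$. Hence $\sigma'_{ij}=I_{\rho\times\rho}\oplus\tau_{ij}$ for a unique $\tau_{ij}\in\mathsf{O}(d-\rho)$, and $\tau:=(\tau_{ij})$ is a $(d-\rho)$‑dim signature with $\sigma\cong\sigma'=\big(\bigoplus_{i=1}^\rho\iota^1\big)\oplus\tau$, which is \eqref{eq:decomposition of signature}.

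It remains to check $\mathcal{L}^\tau$ is invertible. By \Cref{prop:equivalence connection Laplacian}, $\sigma\cong\sigma'$ gives $\mathcal{L}^{\sigma'}=F\mathcal{L}^\sigma F^{-1}$ for an orthogonal (block‑diagonal) $F$, so $\mathcal{L}^{\sigma'}$ has nullity $\rho$. On the other hand, iterating \Cref{rmk:block-decomposition-direct-sum} for the direct sum in \eqref{eq:decomposition of signature} shows $\mathcal{L}^{\sigma'}$ is similar to $\big(\bigoplus_{i=1}^\rho\mathcal{L}^{\iota^1}\big)\oplus\mathcal{L}^\tau$, and $\mathcal{L}^{\iota^1}$ is exactly the classical Laplacian $L$, which has nullity $1$ since $G$ is connected. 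Therefore $\rho=\operatorname{nullity}(\mathcal{L}^{\sigma'})=\rho+\operatorname{nullity}(\mathcal{L}^\tau)$, so $\operatorname{nullity}(\mathcal{L}^\tau)=0$ and $\mathcal{L}^\tau$ is invertible.

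The main obstacle is the middle step: one must arrange the switching map so that the first $\rho$ coordinates are not merely preserved as a subspace but fixed \emph{pointwise} by every $\sigma'_{ij}$. The point that makes this work is that the flat sections of $(G,\sigma)$ themselves supply a global frame for $\{K_i\}$ that is parallel for the connection, and that orthonormalizing this frame fiberwise stays compatible with the connection precisely because each $\sigma_{ij}$ acts isometrically between fibers. A conceptually cleaner alternative, which I would also note, is to first apply \Cref{lemma:spanning tree} to assume $\sigma$ is trivial on a spanning tree; then $K_i$ is the common fixed subspace of the finitely generated holonomy group, every edge signature lies in that group and hence acts as the identity on the fixed subspace, and one only needs to rotate the fixed subspace to the standard $\R^\rho\times\{0\}$ by a single constant orthogonal change of frame.
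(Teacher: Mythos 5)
Your proof is correct and follows essentially the same route as the paper: both extract a parallel orthonormal frame from the kernel of $\mathcal{L}^\sigma$ (the paper by normalizing the null eigenvectors globally and using the vertex-independence of the pointwise Gram matrix, you by fiberwise Gram--Schmidt, which is compatible with the connection for the same reason), extend it to an orthonormal basis at each vertex, use that as the switching map, and read off the invertibility of $\mathcal{L}^\tau$ from a nullity count under similarity. The extra details you supply (injectivity of the evaluation map, $d-\rho\ge 1$, and the holonomy-group alternative) are sound but do not change the underlying argument.
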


The proof technique below is similar to the proof of \cite[Theorem 1]{chung2014connection}. Besides, it is worth noting that the proof itself indicates an algorithm for the finding the decomposition in \Cref{eq:decomposition of signature}.
\begin{proof}
    
    Let $f_1,\ldots,f_\rho:V\to\R^d$ be independent eigenvectors of $\mathcal{L}^\sigma$ corresponding to the $0$ eigenvalues. Assume that $\langle f_l,f_l\rangle = |V|$ and $\langle f_l,f_k\rangle = 0$ for $l\neq k$.
    By \Cref{eq:connection-dirichlet-energy-vector-case}, we have that for any oriented edge $(i,j)\in E^\mathrm{or}$, $f_l(i)=\sigma_{ij}f_l(j)$ for all $l=1,\ldots,\rho$.
    Hence, we have that
    \[\langle f_l(i),f_k(i)\rangle=\langle \sigma_{ij}f_l(j),\sigma_{ij}f_k(j)\rangle=\langle f_l(j),f_k(j)\rangle.\]
    Hence for any $i\in V$, $[f_1(i),\ldots,f_\rho(i)]$ is a orthonormal basis for a $\rho$-dimensional subspace of $\R^d$. Then, for all $i\in V$, we expand this basis to an orthonormal basis $f(i):=[f_1(i),\ldots,f_\rho(i),g_{\rho+1}(i),\ldots,g_d(i)]$ for $\R^d$.  This provides a map $f:V\to\R^{d\times d}$.

    We now define a signature $\tau$ as follows: for any oriented edge $(i,j)\in E^\mathrm{or}$,
    \[\tau_{ij}:=[g_{\rho+1}(i),\ldots,g_d(i)]^\mathrm{T}\sigma_{ij}[g_{\rho+1}(j),\ldots,g_d(j)].\]
    It is then straightforward to verify that $\sigma\simeq (\bigoplus_{i=1}^\rho\iota^1)\oplus \tau$ via the switching map $f$. Using \Cref{prop:equivalence connection Laplacian}, $\mathcal{L}^\sigma$ and $ \mathcal{L}^{\oplus_{i=1}^\rho\iota^1}\oplus \mathcal{L}^\tau$ are similar matrices, and hence since the kernel of $\mathcal{L}^{\oplus_{i=1}^\rho\iota^1}$ is exactly $\rho$, $\mathcal{L}^\tau$ is in turn nonsingular. The claim follows.
\end{proof}

Given a signature $\sigma$, if $\mathcal{L}^\sigma$ is invertible, then we call $\sigma$ \emph{absolutely inconsistent}. For $\tau$ in the above theorem, we call $\tau$ the \emph{absolutely inconsistent component} of $\sigma$. 

As an application of the theory developed in this section, we provide a complete characterization of signatures on cycle graphs.

\begin{example}[Elementary cycle signatures]\label{ex:elementary-cycle-sig}
    Consider an $n$-cycle graph $G$. Given $\theta \in\mathbb{R}$, define a 2-dimensional signature $\sigma^\theta$ as follows: let $\sigma^\theta_{12}\in \mathsf{O}(2)$ be the rotation with angle $\theta$, i.e.,
        $\sigma^\theta_{12} = \begin{bmatrix}
            \cos{\theta} & -\sin{\theta}\\
            \sin{\theta} & \cos{\theta}
        \end{bmatrix};$
    let $\sigma^\theta_{i,i+1}=I_{2\times 2}$ for all $i = 2,\ldots,n$. Then, the following observations hold:
        \begin{enumerate}
            \item If $\theta=0\mod{[0,2\pi)}$, $\sigma^\theta\cong \iota^1\oplus\iota^1$ and is in particular consistent;
            \item If $\theta=\pi\mod{[0,2\pi)}$, $\sigma^\theta\cong \iota^{-1}\oplus\iota^{-1}$ where $\iota^{-1}$ denotes the one-dimensional signature that has the value $-1$ at $(1,2)$ and $+1$ elsewhere, and is in particular inconsistent and decomposable;
            \item If $\theta \neq 0,\pi \mod{[0,2\pi)}$ then $\sigma^\theta$ is absolutely inconsistent and indecomposable, since $\sigma^\theta_{12}$ is not diagonalizable with real coefficients.
        \end{enumerate}
        These signatures are called \emph{elementary cycle signatures}.
\end{example}

We show that any signature on a cycle graph is a direct sum of signatures described by \Cref{ex:elementary-cycle-sig}.

\begin{proposition}\label{prop:cycle}
    Consider an $n$-cycle graph $G$ with a $d$-dim signature $\sigma$. Then, there exist $d_1,d_{-1}\in\mathbb{N}$ and $\theta_1,\ldots,\theta_k\in(0,\pi)\cup(\pi,2\pi)$ such that $d_1+d_{-1}+2k=d$ and
    \[\sigma\simeq (\bigoplus_{i=1}^{d_1}\iota^1)\oplus (\bigoplus_{i=1}^{d_{-1}}\iota^{-1})\oplus \sigma^{\theta_1}\oplus\cdots \sigma^{\theta_k}.\]
\end{proposition}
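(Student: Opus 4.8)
The plan is to reduce everything to linear algebra on the cycle via the spanning tree simplification (\Cref{lemma:spanning tree}), and then to perform a real block-diagonalization of a single orthogonal matrix — the holonomy of the cycle — using the real Jordan/normal form for orthogonal matrices. First, I would fix an $n$-cycle $G$ with vertices $1,2,\ldots,n$ and take the spanning tree $T$ consisting of the edges $(i,i+1)$ for $i=1,\ldots,n-1$. By \Cref{lemma:spanning tree}, the given signature $\sigma$ is switching equivalent to a signature $\sigma^T$ with $\sigma^T_{i,i+1}=I_{d\times d}$ for $i=1,\ldots,n-1$ and a single nontrivial orthogonal matrix $R := \sigma^T_{n,1} \in \mathsf{O}(d)$ on the remaining edge; this $R$ is (conjugate to) the holonomy $\prod_{\ell} \sigma_{i_\ell i_{\ell+1}}$ around the cycle. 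Since switching equivalence is transitive, it suffices to prove the claim for $\sigma^T$, i.e. to decompose the "one-edge" signature determined by $R$.

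Next I would apply the real normal form for orthogonal matrices: there exists $Q\in\mathsf{O}(d)$ such that $Q^{\mathrm T} R Q$ is block diagonal with blocks $+1$ (say $d_1$ of them), $-1$ (say $d_{-1}$ of them), and $2\times 2$ rotation blocks $\begin{bmatrix}\cos\theta_i & -\sin\theta_i\\ \sin\theta_i & \cos\theta_i\end{bmatrix}$ with $\theta_i\in(0,\pi)\cup(\pi,2\pi)$ for $i=1,\ldots,k$, where $d_1+d_{-1}+2k=d$. I would then use $Q$ as a constant switching map: define $f:V\to\mathsf{O}(d)$ by $f(i)=Q$ for all $i$. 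Applying $f$ to $\sigma^T$ leaves all tree edges equal to $Q^{\mathrm T} I Q = I$ and replaces the edge $(n,1)$ signature by $Q^{\mathrm T} R Q$, which is exactly the block-diagonal matrix above. This block-diagonal signature is visibly the direct sum $(\bigoplus_{i=1}^{d_1}\iota^1)\oplus(\bigoplus_{i=1}^{d_{-1}}\iota^{-1})\oplus\sigma^{\theta_1}\oplus\cdots\oplus\sigma^{\theta_k}$ in the sense of the paragraph "Direct sum of signatures," once one checks that on a cycle the one-edge signatures $\iota^1$, $\iota^{-1}$, and $\sigma^{\theta}$ are precisely these scalar/rotation blocks sitting on the edge $(n,1)$ — which matches \Cref{ex:elementary-cycle-sig} up to relabeling which edge carries the nontrivial value, itself a switching equivalence (shifting the nontrivial matrix from $(1,2)$ to $(n,1)$ by composing with the tree path).

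I expect the main obstacle to be purely bookkeeping rather than conceptual: reconciling the two conventions for where the "nontrivial" orthogonal matrix lives on the cycle (edge $(1,2)$ in \Cref{ex:elementary-cycle-sig} versus edge $(n,1)$ in the tree $T$ above), and checking carefully that permuting/grouping the coordinates of the block-diagonal holonomy genuinely corresponds to the direct-sum operation on signatures as defined — i.e. that $\sigma^T$ with holonomy $A\oplus B$ is switching equivalent to (the cycle signature with holonomy $A$) $\oplus$ (the cycle signature with holonomy $B$). Both are routine: the first is handled by a constant switching map equal to the holonomy of the tree path from $1$ to $n$, and the second is immediate from \Cref{rmk:block-decomposition-direct-sum} together with the observation that a block-diagonal signature on all of $E^{\mathrm{or}}$ is by definition a direct sum. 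One should also note $\theta=0$ and $\theta=\pi$ are excluded from the rotation list precisely because they are absorbed into the $\iota^1$ and $\iota^{-1}$ summands, so the stated ranges $\theta_i\in(0,\pi)\cup(\pi,2\pi)$ are consistent. The existence of the real normal form for orthogonal matrices is the one external ingredient, and it is entirely standard.
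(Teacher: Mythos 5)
Your proposal is correct and follows essentially the same route as the paper: reduce to a single nontrivial edge via \Cref{lemma:spanning tree}, block-diagonalize the resulting holonomy matrix using the real normal form for orthogonal matrices, and apply the resulting orthogonal matrix as a constant switching map to realize the direct-sum decomposition into elementary cycle signatures. The only difference is which edge carries the nontrivial matrix ($(n,1)$ versus the paper's $(1,2)$), a relabeling you correctly identify as another switching equivalence, and you are if anything more explicit than the paper about the bookkeeping.
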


\begin{proof}
    We let $T$ be the spanning tree of $G$ not containing edge $\{1,2\}$.
    By \Cref{lemma:spanning tree}, we have that $\sigma\simeq \sigma^T$. Let $\tau:=\sigma^T_{12}$. Then there exists an orthonormal matrix $P$ such that $\zeta:=P^{-1}\tau P$ is a diagonal block matrix with $d_1$ blocks with value $1$,  $d_{-1}$ blocks with value $-1$, and $k$ 2-dimensional rotation matrix with angles $\theta_1,\ldots,\theta_k$.
    Define a new signature $\sigma^{T,P}$ so that $\sigma^{T,P}_{ij} = P^{-1}\sigma^T P$. Then, $\sigma^{T,P}_{12}=\zeta$ and $I_{d\times d}$ otherwise. Then,
    \[\sigma\simeq \sigma^{T,P}= (\bigoplus_{i=1}^{d_1}\iota^1)\oplus (\bigoplus_{i=1}^{d_{-1}}\iota^{-1})\oplus \sigma^{\theta_1}\oplus\cdots \sigma^{\theta_k}.\]
\end{proof}


\section{Background on Effective Resistance}
\label{subsec:effective-resistance}
In this section, we present a comprehensive overview of effective resistance on classical graphs. We focus on two fundamental perspectives: the energy perspective and the random walk perspective. Understanding these perspectives is crucial as they serve as the primary sources of inspiration for our theoretical advancements in subsequent sections.

Let $G=(V,E,W)$ be a connected graph and let $i,j\in V$ be any two nodes. The \textit{effective resistance} between $i,j$ is given by
\begin{equation}\label{eq:effective-resistance}
    r_{ij}:=(e_i-e_j)^\mathrm{T}L^\dagger (e_i-e_j).
\end{equation}
where $e_i$ denotes the $i$-th standard basis vector in $\mathbb{R}^n$. The \textit{effective conductance} between $i,j$ is defined by $c_{ij} = r_{ij}^{-1}$.

In \cite{chung2014connection}, this notion of effective resistance was generalized to edges $\{i,j\}$ in the connection graph $(G,\sigma)$ as follows.

\begin{definition}\label{def:connection-resistance}
    Let $(G,\sigma)$ be a fixed, connection graph and let $i,j\in V$ be adjacent vertices. The \textit{ effective (connection) resistance} between $i,j$ is defined by
    \begin{align}
        R^\sigma(i,j)\coloneqq \|M^\mathrm{T}_{i,j}\mathcal{L}^\dagger M_{i,j}\|_2
    \end{align}
    where $M_{i,j} = \begin{bmatrix}
            0_{d\times d}, \cdots,
            I_{d\times d},\cdots, -\sigma_{ij},\cdots
            ,0_{d\times d}\end{bmatrix}^\mathrm{T}$ and $\|\cdot\|_2$ denotes the matrix 2-norm.
\end{definition}

It then follows from \Cref{def:connection-resistance} and \Cref{rmk:path-independence} that whenever $(G,\sigma)$ is consistent and connected, $R^\sigma(i,j)$ is defined for any pair of (not necessarily adjacent) vertices $i,j\in V$. Furthermore, in this case, the connection resistance in fact coincides with the graph effective resistance. The following result is a slight generalization of Theorem 4 in \cite{chung2014connection} where only edges were considered.

\begin{theorem}\label{thm:connection-resistance old}
    Let $\sigma$ be a consistent connection and let $i,j\in V$. Then
    $R^\sigma(i,j) = r_{ij}.$
\end{theorem}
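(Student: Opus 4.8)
The plan is to exploit the consistency of $\sigma$ to put the connection Laplacian into a block-diagonal form in which the computation reduces to the classical one. By \Cref{lemma:balance-characterizations}(iii) (or equivalently \Cref{ex:consistent signature}, since $\sigma\cong\bigoplus_{i=1}^d\iota^1$), there is a switching map $\tau:V\to\mathsf{O}(d)$ with $\tau(i)^{-1}\sigma_{ij}\tau(j)=I_{d\times d}$ for every oriented edge. Let $F\in\R^{nd\times nd}$ be the block-diagonal orthogonal matrix whose $i$-th diagonal block is $\tau(i)$. By \Cref{prop:equivalence connection Laplacian}, $F^{-1}\mathcal{L}^\sigma F = \mathcal{L}^{\iota}$, where $\iota := \bigoplus_{i=1}^d\iota^1$ is the $d$-dimensional identity signature; and $\mathcal{L}^{\iota}$, after the obvious permutation of coordinates, is $L\otimes I_{d\times d}$ (i.e. $d$ independent copies of the classical Laplacian $L$). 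Consequently $(\mathcal{L}^\sigma)^\dagger = F\,(L^\dagger \otimes I_{d\times d})\,F^{-1} = F\,(L^\dagger\otimes I_{d\times d})\,F^{\mathrm T}$.

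Next I would substitute this into \Cref{def:connection-resistance}. Write $\sigma_{ij}$, for the (possibly non-adjacent) pair $i,j$, via any path as in \Cref{rmk:path-independence}; the defining consistency relation telescopes to give $\tau(i)^{-1}\sigma_{ij}\tau(j)=I_{d\times d}$, i.e. $\sigma_{ij}=\tau(i)\tau(j)^{-1}$, for \emph{every} pair $i,j$, not just edges. Now examine $F^{\mathrm T}M_{i,j}$: the matrix $M_{i,j}\in\R^{nd\times d}$ has the $d\times d$ identity in its $i$-th block-row and $-\sigma_{ij}$ in its $j$-th block-row. Multiplying on the left by $F^{\mathrm T}$ replaces these blocks by $\tau(i)^{\mathrm T}$ and $-\tau(j)^{\mathrm T}\sigma_{ij} = -\tau(j)^{\mathrm T}\tau(i)\tau(j)^{-1} = -\tau(j)^{-1}$ (using $\sigma_{ij}=\tau(i)\tau(j)^{-1}$ and orthogonality). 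Factoring out the common right-hand orthogonal factor, $F^{\mathrm T}M_{i,j} = \big((e_i - e_j)\otimes I_{d\times d}\big)\,Q$ for a suitable orthogonal $Q\in\mathsf{O}(d)$ (here $e_i-e_j\in\R^n$); more precisely $F^{\mathrm T}M_{i,j}$ has $\tau(i)^{\mathrm T}$ in block-row $i$ and $-\tau(j)^{\mathrm T}\tau(i)\tau(j)^{-1}$ in block-row $j$, and one checks directly that $(F^{\mathrm T}M_{i,j})^{\mathrm T}(L^\dagger\otimes I)(F^{\mathrm T}M_{i,j})$ is an orthogonal conjugate of $r_{ij}\,I_{d\times d}$.

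Assembling the pieces: $M_{i,j}^{\mathrm T}(\mathcal{L}^\sigma)^\dagger M_{i,j} = (F^{\mathrm T}M_{i,j})^{\mathrm T}(L^\dagger\otimes I_{d\times d})(F^{\mathrm T}M_{i,j})$, which by the previous paragraph equals $\big((e_i-e_j)^{\mathrm T}L^\dagger(e_i-e_j)\big)\cdot A$ for an orthogonal matrix $A$ (a conjugate of $I_{d\times d}$) — i.e. it is $r_{ij}$ times an orthogonal matrix. Taking the matrix $2$-norm and using that orthogonal matrices have $2$-norm $1$ (and $r_{ij}\ge 0$) yields $R^\sigma(i,j) = \|M_{i,j}^{\mathrm T}(\mathcal{L}^\sigma)^\dagger M_{i,j}\|_2 = r_{ij}$, as claimed. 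The same computation, read for a non-adjacent pair via \Cref{rmk:path-independence}, shows $R^\sigma(i,j)$ is well defined and equals $r_{ij}$ there too.

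\textbf{Main obstacle.} The only delicate point is bookkeeping: making sure that after the switching conjugation and the coordinate permutation, the block structure of $M_{i,j}$ and of $(\mathcal{L}^\sigma)^\dagger$ line up so that the middle expression genuinely collapses to $r_{ij}$ times an orthogonal matrix rather than merely a matrix with $2$-norm $r_{ij}$ — in fact an orthogonal conjugate of $r_{ij}I_{d\times d}$ suffices, so one wants to verify the cross-term block ($i$–$j$) contributes $-r_{ij}$ times an orthogonal matrix and the diagonal blocks contribute $+r_{ij}\cdot(\text{degree-$1$ diagonal of }L^\dagger$-type terms) correctly; this is the classical identity $(e_i-e_j)^{\mathrm T}L^\dagger(e_i-e_j)=L^\dagger_{ii}+L^\dagger_{jj}-2L^\dagger_{ij}$ dressed up with orthogonal factors. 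Everything else is a direct consequence of \Cref{prop:equivalence connection Laplacian}, \Cref{ex:consistent signature}, and \Cref{rmk:path-independence}.
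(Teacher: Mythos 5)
Your overall strategy — conjugate $\mathcal{L}^\sigma$ by the block-diagonal orthogonal matrix built from the trivializing map of \Cref{lemma:balance-characterizations}(iii), reduce to $L\otimes I_{d\times d}$, and evaluate \Cref{def:connection-resistance} directly — is the right one; indeed the paper gives no proof of this theorem at all (it defers to \cite{chung2014connection}), and this is the standard argument. However, the central computation as written contains a false step, and it sits exactly at the "bookkeeping" point you flagged as delicate. You read $M_{i,j}$ as having $-\sigma_{ij}$ in block-row $j$ and then assert $-\tau(j)^{\mathrm T}\sigma_{ij}=-\tau(j)^{\mathrm T}\tau(i)\tau(j)^{-1}=-\tau(j)^{-1}$; the last equality forces $\tau(i)=\tau(j)$, i.e.\ $\sigma_{ij}=I_{d\times d}$, and is false in general. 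Carrying your reading of $M_{i,j}$ through honestly, block-row $j$ of $F^{\mathrm T}M_{i,j}$ is $-\tau(j)^{-1}\tau(i)\tau(j)^{-1}$, which does \emph{not} share a common right orthogonal factor with block-row $i$ (that would require $\sigma_{ij}$ symmetric), and one gets
\begin{equation*}
M_{i,j}^{\mathrm T}(\mathcal{L}^\sigma)^\dagger M_{i,j}=\bigl(L^\dagger_{ii}+L^\dagger_{jj}\bigr)I_{d\times d}-L^\dagger_{ij}\bigl(\sigma_{ij}^2+(\sigma_{ij}^{\mathrm T})^2\bigr),
\end{equation*}
whose $2$-norm is not $r_{ij}$ in general: for a single unit-weight edge with $\sigma_{12}$ a rotation by $\theta$ (a tree, hence consistent) this evaluates to $\cos^2(\theta)\,I_{2\times 2}$, while $r_{12}=1$. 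So under that reading the theorem would be false — which tells you the reading of $M_{i,j}$, not the theorem, is at fault.

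The repair is small: in \Cref{def:connection-resistance} the outer transpose acts on a row of blocks and therefore transposes each block, so block-row $j$ of $M_{i,j}$ is $-\sigma_{ij}^{\mathrm T}=-\sigma_{ji}=-\tau(j)\tau(i)^{-1}$ (the same convention is visible in \Cref{def:trace-resistance}, where the $j$-th block of $N_{ij}$ is written explicitly as $-(\Omega_{ij}^0)^{\mathrm T}$). With this, block-row $j$ of $F^{\mathrm T}M_{i,j}$ becomes $-\tau(j)^{\mathrm T}\sigma_{ji}=-\tau(j)^{-1}\tau(j)\tau(i)^{-1}=-\tau(i)^{\mathrm T}$, matching block-row $i$, which is $\tau(i)^{\mathrm T}$. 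Hence $F^{\mathrm T}M_{i,j}=\bigl((e_i-e_j)\otimes I_{d\times d}\bigr)\tau(i)^{\mathrm T}$ exactly, and
\begin{equation*}
M_{i,j}^{\mathrm T}(\mathcal{L}^\sigma)^\dagger M_{i,j}=\tau(i)\Bigl(\bigl((e_i-e_j)^{\mathrm T}L^\dagger(e_i-e_j)\bigr)\otimes I_{d\times d}\Bigr)\tau(i)^{\mathrm T}=r_{ij}\,I_{d\times d},
\end{equation*}
so $R^\sigma(i,j)=\|r_{ij}I_{d\times d}\|_2=r_{ij}$ since $r_{ij}\geq 0$. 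The remaining ingredients of your argument — \Cref{prop:equivalence connection Laplacian} giving $(\mathcal{L}^\sigma)^\dagger=F(L^\dagger\otimes I_{d\times d})F^{\mathrm T}$, and the telescoping $\sigma_{ij}=\tau(i)\tau(j)^{-1}$ along any path via \Cref{rmk:path-independence} for non-adjacent pairs — are correct as stated.
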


This result is not surprising given that consistent signatures are ``trivial'' in the sense of \Cref{ex:consistent signature}. However, there are two main limitations of this generalization of effective resistance when considering inconsistent connection graphs.
\begin{enumerate}
    \item The definition cannot be generalized to define effective resistance between pairs $i,j$ that are not edges.
    \item Even for a fixed edge $\{i,j\}$, $R^\sigma(i,j)$ is not continuous w.r.t. change of signature $\sigma$. See the example below.
\end{enumerate}

\begin{example}[The connection resistance in \cite{chung2014connection} is discontinuous]\label{ex:discontinuous-connection-resistance}
Consider the 3-cycle graph with vertex set $V=\{1,2,3\}$. For any $\theta\in\R$, let $\sigma^\theta$ denote the 2-dimensional elementary cycle signature (cf. \Cref{ex:elementary-cycle-sig}). Then, for any $\theta\in\R$, the connection resistance between nodes $1,2$ is given by $5+4\cos(\theta)$ whenever $\theta\neq 2k\pi$. Note, however, that when $\theta=2k\pi$, the signature $\sigma^\theta$ is consistent and the connection resistance between nodes $1,2$ is given by $2/3$. Hence, the connection resistance is discontinuous at $\theta=2k\pi$. See our GitHub Repository for the Mathematica code for derivation of the connection resistance. 
\end{example}

One of the main goals of this paper is to provide a novel definition of effective resistance for connection graphs which addresses the two limitations above.

\subsection{An Energy Perspective on Effective Resistance}\label{subsubsec:energy-perspective}
Given a graph $G$, the Dirichlet energy of any function $f:V\to\R$ is defined as $E(f) := f^\mathrm{T}Lf$. It turns out that the effective resistance can be expressed as the Dirichlet energy of a particular function.

\begin{theorem}\label{prop:classical-resistance-properties}
    Let $G=(V,E,W)$ be a connected graph and let $i,j\in V$.
    \begin{enumerate}
        \item $r_{ij} = E(f)$ for any $f:V\to \R$ such that $ Lf = e_i - e_j$;
        \item $c_{ij} = \inf\{E(f): f(i) = 1, f(j) = 0\}$.
    \end{enumerate}
\end{theorem}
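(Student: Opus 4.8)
The plan is to prove both parts using standard facts about the pseudoinverse $L^\dagger$ together with the variational characterization of the solution set of $Lf = e_i - e_j$. First I would note that since $G$ is connected, $\ker L = \mathrm{span}\{\mathbf{1}\}$ and $\mathrm{range}\, L = \mathbf{1}^\perp$; moreover $e_i - e_j \in \mathbf{1}^\perp$, so the equation $Lf = e_i - e_j$ is consistent, and its general solution is $f = L^\dagger(e_i-e_j) + c\mathbf{1}$ for $c \in \R$.

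\textbf{Part (i).} Fix any $f$ with $Lf = e_i - e_j$. Then $E(f) = f^\mathrm{T} L f = f^\mathrm{T}(e_i - e_j)$. Writing $f = L^\dagger(e_i-e_j) + c\mathbf{1}$ and using $\mathbf{1}^\mathrm{T}(e_i-e_j) = 0$, this equals $(e_i-e_j)^\mathrm{T} L^\dagger (e_i - e_j) = r_{ij}$ by the definition in \eqref{eq:effective-resistance}. (Here I use that $L^\dagger$ is symmetric.) So $E(f)$ does not depend on the choice of solution and equals $r_{ij}$.

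\textbf{Part (ii).} I want to show $c_{ij} = \inf\{E(f): f(i)=1, f(j)=0\}$, equivalently $r_{ij} = 1/\inf\{\cdots\}$. The approach is Lagrange-multiplier/projection: the functional $E$ is a positive semidefinite quadratic form, and minimizing it over the affine subspace $\{f : f(i)=1, f(j)=0\}$ leads to the optimality condition that $Lf^\star$ is a linear combination of $e_i$ and $e_j$, i.e. $Lf^\star = \alpha(e_i - e_j)$ for some scalar $\alpha$ (the coefficients must sum to zero since $Lf^\star \in \mathbf{1}^\perp$). By part (i) applied to $g := f^\star/\alpha$ (which satisfies $Lg = e_i-e_j$), we get $E(g) = r_{ij}$, hence $E(f^\star) = \alpha^2 r_{ij}$. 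To pin down $\alpha$, evaluate the constraint: $g(i) - g(j) = (e_i-e_j)^\mathrm{T} L^\dagger(e_i-e_j) = r_{ij}$, while $f^\star(i) - f^\star(j) = 1 - 0 = 1$, so $\alpha (g(i)-g(j)) = 1$, giving $\alpha = 1/r_{ij}$. Therefore $E(f^\star) = \alpha^2 r_{ij} = 1/r_{ij} = c_{ij}$, and since the quadratic form is convex the critical point is the global minimum over the (nonempty, closed, affine) feasible set, so the infimum is attained and equals $c_{ij}$.

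\textbf{Main obstacle.} The only delicate point is justifying that the minimization in (ii) actually attains its value and that the first-order condition has the claimed form — one must be careful because $E$ is only positive \emph{semi}definite, so a priori the infimum over an affine subspace could fail to be attained. The clean way around this is to observe that along the null direction $\mathbf{1}$ the constraints $f(i)=1, f(j)=0$ are not both preserved (adding $c\mathbf{1}$ changes $f(i)-f(j)$ by $0$ but changes $f(i)$), so in fact modding out by $\mathbf{1}$ the feasible set descends to a genuine affine subspace of $\R^n/\mathrm{span}\{\mathbf 1\}$ on which $E$ is strictly convex and coercive; hence the minimizer exists and is unique up to $\mathbf{1}$. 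Alternatively, one can avoid optimization entirely and verify directly that $f^\star := L^\dagger(e_i-e_j)/r_{ij} + c\mathbf 1$, with $c$ chosen so that $f^\star(j)=0$, is feasible with $E(f^\star) = 1/r_{ij}$, and then show $E(f) \ge 1/r_{ij}$ for every feasible $f$ by writing $f = f^\star + h$ with $h(i)=h(j)=0$ and expanding $E(f) = E(f^\star) + 2 h^\mathrm{T} L f^\star + E(h)$, where the cross term $2h^\mathrm{T} L f^\star = \frac{2}{r_{ij}} h^\mathrm{T}(e_i - e_j) = 0$ and $E(h) \ge 0$. This second route is the most economical and is the one I would write up.
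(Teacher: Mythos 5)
Your proof is correct. Note that the paper does not actually prove this theorem; it defers to \cite[Theorem 4.2]{jorgensen2008operator} and \cite[Theorem 4.1]{lovasz1993random}, so there is no in-paper argument to compare against — what you have written is a complete, self-contained replacement. Part (i) is exactly the standard computation: $E(f)=f^\mathrm{T}Lf=f^\mathrm{T}(e_i-e_j)$, and the $c\mathbf{1}$ ambiguity in the solution is annihilated by $\mathbf{1}^\mathrm{T}(e_i-e_j)=0$. For part (ii), the Lagrange-multiplier sketch has two small loose ends you yourself flag (attainment of the infimum, and implicitly that $\alpha\neq 0$ — which follows since $Lf^\star=0$ would force $f^\star$ constant on a connected graph, contradicting $f^\star(i)\neq f^\star(j)$), but these are moot because your second, "economical" route closes them: exhibiting the explicit feasible $f^\star=L^\dagger(e_i-e_j)/r_{ij}+c\mathbf{1}$ with $E(f^\star)=1/r_{ij}$, and then showing every feasible $f=f^\star+h$ with $h(i)=h(j)=0$ satisfies $E(f)\geq E(f^\star)$ because the cross term $2h^\mathrm{T}Lf^\star=\tfrac{2}{r_{ij}}(h(i)-h(j))$ vanishes and $E(h)\geq 0$. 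That is exactly the standard proof in the cited sources, and it is the version you should write up.
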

Proofs of these results can be found in \cite[Theorem 4.2]{jorgensen2008operator}, or \cite[Theorem 4.1]{lovasz1993random}.

Note that the equation involved in (i) above is called the \emph{Poisson problem} (PP):
\begin{equation}\label{eq: poisson problem classical}
    (Lf)(x)=\begin{cases}
        1 & x=i\\
        -1 & x=j\\
        0 & \text{otherwise}
    \end{cases}.
\end{equation}
(PP) has infinitely many solutions as any solution to (PP) plus a constant function generates a new solution.
The definition \Cref{eq:effective-resistance} of effective resistance can be interpreted as the Dirichlet energy of a solution of (PP): one solution of (PP) is $L^\dagger(e_i-e_j)$ and hence $r_{ij}=E(f)=(e_i-e_j)^\mathrm{T}L^\dagger (e_i-e_j)$.

For (ii) in the above theorem, the infimum is achieved by the solution to the following Dirichlet problem:
\begin{equation}\label{eq:dirichlet-problem-usual}
    \begin{cases}
        Lf|_{V\backslash\{i,j\}} = 0\\
        f(i)=1,\,f(j)=0 
    \end{cases}.
\end{equation}
In this case, the solution to \Cref{eq:dirichlet-problem-usual} is unique and we denote it by $V_{i\to j}:V\to \R$ which is called the \emph{voltage function}. It is worth noting that $c_{ij} = (LV_{i\to j})(i)$ and $-c_{ij} = (LV_{i\to j})(j)$, i.e., if without loss of generality we assume $i=1$ and $j=2$, then
\begin{align}\label{eq:LV}
    LV_{i\to j} = \begin{bmatrix}
        c_{ij}\\
        -c_{ij} \\
        \hline 0_{(n-2)\times 1}
    \end{bmatrix}.
\end{align}

In this way, we see that the effective resistance / conductance can be viewed as the Dirichlet energy of the solution of the Poisson problem / Dirichlet problem. Note that the connection resistance defined in \Cref{def:connection-resistance} follows the spirit of the Poisson problem. However, in \cite{sugiyama2023kron}, the Dirichlet problem is the main source of inspiration for the definition of effective conductance for directed graphs. In \Cref{sec:connection conductance and resistance}, we will show how the Dirichlet problem perspective can give rise to a novel definition of connection conductance and eventually connection resistance.

\subsection{Graph Random Walks and Effective Resistance / Conductance}
\label{subsubsec:random-walks}
We now summarize some basic relationships between effective resistance / conductance and random walks on graphs. For a more detailed discussion, see \cite{lovasz1993random} and \cite{doyle1984random}.

Given a graph $G=(V,E,W)$, we let $(X_t)_{t\geq 0}$ be a simple random walk with transition kernel $D^{-1}W$. For any vertices $i,j$, the transition probability is denoted as $\mathbb{P}_{i,j}=w_{ij}/\deg(i)$. We also use the shorthand notation $\mathbb{P}^i[\cdot]$ to denote the conditional probability $\mathbb{P}[\cdot|X_0=i]$. Similarly, we use the shorthand notation $\mathbb{E}^i[\cdot]$ to denote the conditional expectation $\mathbb{E}[\cdot|X_0=i]$.

We also define various types of stopping times for this random walk as follows.

\begin{definition}
    For $s=0,1$ and any subset $A\subseteq V$, we define 
    \begin{align}\label{def:set-stopping-time}
        T_{A}^s = \inf\{t\geq s: X_t \in A \}.
    \end{align}
    When $A=\{i\}$, we use the shorthand notation $T_i^s$ for $T_{\{i\}}^s$.
\end{definition}
It is useful to note that as long as $G$ is connected, $\mathbb{P}^i[T_j^s <\infty] = 1$ for any $i,j\in V$ and $s$. Alternatively, $\mathbb{P}^i[T_j^s = \infty] = 1$ if and only if $i,j$ are in different connected components.

Then, we introduce the notion of \emph{commute time} and \emph{escape probability}.
We let $H_{ij}:=\mathbb{E}^iT^0_j$ denote the expected number of steps for a simple random walk to reach $j$, having started at $i$. The sum $H_{ij}+H_{ji}$ is called the \emph{commute time} between $i,j$.

The quantity $\mathbb{P}^i[T^1_j<T^1_i]$ denotes the probability that a simple random walk on $G$, having started at node $i$, reaches node $j$ before returning to node $i$. Hence, $\mathbb{P}^i[T^1_j<T^1_i]$ is also called the \emph{escape probability}. 

Then, we have the following results characterizing the effective resistance and conductance in terms of random walks.
\begin{theorem}\label{thm:effective-resistance-random-walk}
    Let $G=(V,E,W)$ be a connected graph and let $i,j\in V$.
    \begin{enumerate}
        \item $r_{ij} = \frac{1}{\operatorname{vol}(G)}(H_{ij}+H_{ji})$, where $\operatorname{vol}(G) := \sum_{e\in E}w_e$;
        \item $c_{ij} = \deg(i) \mathbb{P}^i[T^1_j<T^1_i]$.
    \end{enumerate}  
\end{theorem}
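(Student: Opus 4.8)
The plan is to prove the two classical identities relating effective resistance/conductance to random-walk quantities, using only the Laplacian-algebraic facts already in the excerpt together with standard harmonic-function arguments. I would organize the proof around the voltage function $V_{i\to j}$ from \Cref{eq:dirichlet-problem-usual} and its Poisson-problem counterpart, since both identities hinge on interpreting solutions of these boundary-value problems probabilistically.

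\textbf{Part (ii): $c_{ij} = \deg(i)\,\mathbb{P}^i[T^1_j < T^1_i]$.} First I would recall that the voltage function $V_{i\to j}$ is the unique solution of \Cref{eq:dirichlet-problem-usual}, i.e.\ it is harmonic on $V\setminus\{i,j\}$ with boundary values $1$ at $i$ and $0$ at $j$. The key observation is that the function $x\mapsto \mathbb{P}^x[T^0_i < T^0_j]$ solves exactly the same Dirichlet problem: it equals $1$ at $i$, equals $0$ at $j$, and satisfies the mean-value property $h(x) = \sum_{y\sim x}\mathbb{P}_{x,y}h(y)$ at every interior vertex $x$ by conditioning on the first step (first-step analysis). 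By uniqueness of the Dirichlet solution, $V_{i\to j}(x) = \mathbb{P}^x[T^0_i < T^0_j]$. Now I evaluate $(LV_{i\to j})(i)$ two ways: from \Cref{eq:LV} it equals $c_{ij}$; directly from the definition $L = D - W$, it equals $\deg(i)\big(V_{i\to j}(i) - \sum_{y\sim i}\mathbb{P}_{i,y}V_{i\to j}(y)\big) = \deg(i)\big(1 - \mathbb{P}^i[T^1_i < T^1_j]\big) = \deg(i)\,\mathbb{P}^i[T^1_j < T^1_i]$, where I used first-step analysis at $i$ and the fact that from $i$ one step takes us to some neighbor $y$, after which $T^1_i<T^1_j$ iff $T^0_i<T^0_j$ from $y$. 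Equating the two expressions gives the claim. I would take care to note $\mathbb{P}^i[T^1_j<T^1_i] + \mathbb{P}^i[T^1_i<T^1_j] = 1$ since both stopping times are finite a.s.\ by connectedness (as remarked after \Cref{def:set-stopping-time}) and they cannot be equal.

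\textbf{Part (i): $r_{ij} = \frac{1}{\operatorname{vol}(G)}(H_{ij}+H_{ji})$.} I would derive this from part (ii) together with a cycle/return-time argument. One clean route: consider the random walk started at $i$ and run until it returns to $i$ after having visited $j$; the expected length of this excursion is $H_{ij} + H_{ji}$. On the other hand, the number of returns to $i$ before completing such an excursion is geometric with success probability $p := \mathbb{P}^i[T^1_j < T^1_i]$, so the expected number of $i$-excursions is $1/p$, and each $i$-excursion has expected length $\operatorname{vol}(G)/\deg(i)$ (the expected return time to $i$ for the random walk, a standard consequence of the stationary distribution $\pi(x) = \deg(x)/\operatorname{vol}(G)$ — here I'd note $\operatorname{vol}(G) = \sum_{e\in E} w_e$ so $2\operatorname{vol}(G) = \sum_x \deg(x)$; I should double-check the factor-of-2 convention against the paper's definition and adjust accordingly). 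Combining, $H_{ij} + H_{ji} = \frac{1}{p}\cdot\frac{\operatorname{vol}(G)}{\deg(i)} = \frac{\operatorname{vol}(G)}{\deg(i)\,p} = \operatorname{vol}(G)\, r_{ij}$ by part (ii). Alternatively I could give a direct linear-algebra proof: express $H_{ij}$ via the fundamental matrix / $L^\dagger$, symmetrize, and recognize $(e_i-e_j)^\mathrm{T}L^\dagger(e_i-e_j)$; but the excursion argument is more transparent and reuses (ii).

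The main obstacle is getting the normalization constants and stopping-time superscripts exactly right: the distinction between $T^0$ and $T^1$ matters precisely at the starting vertex, and the volume convention ($\sum_{e\in E} w_e$ versus $\frac12\sum_x\deg(x)$) must be tracked consistently so that the expected return time to $i$ comes out as $\operatorname{vol}(G)/\deg(i)$ rather than twice that. A secondary point requiring care is the appeal to uniqueness of the Dirichlet solution in Part (ii) and to the strong Markov property when decomposing the commute excursion into independent $i$-excursions in Part (i); both are standard but should be invoked explicitly. Since these are classical results, I would keep the write-up brief and cite \cite{lovasz1993random,doyle1984random} for the routine verifications, focusing the exposition on the harmonic-function identification $V_{i\to j}(x) = \mathbb{P}^x[T^0_i<T^0_j]$, which is the conceptual heart and the template we will generalize to connection graphs.
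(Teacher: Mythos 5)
The paper gives no proof of \Cref{thm:effective-resistance-random-walk}: it is presented as classical background and delegated to \cite{lovasz1993random} and \cite{doyle1984random}. Your proposal supplies the standard arguments, and both are essentially sound. For part (ii), identifying $V_{i\to j}(x)=\mathbb{P}^x[T^0_i<T^0_j]$ via uniqueness of the Dirichlet problem and then evaluating $(LV_{i\to j})(i)$ by first-step analysis against \Cref{eq:LV} is exactly the classical route, and it is also the template the paper later generalizes (compare \Cref{thm:escape probability}), so it is the right proof to have in mind. For part (i), the excursion decomposition (geometric number of $i$-excursions with mean $1/p$, Wald's identity, expected return time $1/\pi(i)$) is the standard Chandra et al.\ argument; you are right that the strong Markov property and Wald's identity should be invoked explicitly, since the final excursion is not distributed like the others once you condition on $N$. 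The one substantive point is the normalization you flagged: with the paper's convention $\operatorname{vol}(G)=\sum_{e\in E}w_e$ over \emph{undirected} edges, $\sum_x\deg(x)=2\operatorname{vol}(G)$, so the expected return time to $i$ is $2\operatorname{vol}(G)/\deg(i)$, not $\operatorname{vol}(G)/\deg(i)$, and your excursion computation then yields $H_{ij}+H_{ji}=2\operatorname{vol}(G)\,r_{ij}$. That is the classical form of the commute-time identity, so part (i) as printed is off by a factor of $2$ unless $\operatorname{vol}(G)$ is silently read as $\sum_x\deg(x)$; your hedge is warranted, and the correct resolution is to insert the factor of $2$ in the return time (and hence in the theorem's constant) rather than in your argument.
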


We note that (ii) will be generalized to the case of connection graphs in our \Cref{thm:escape probability} whereas the generalization of (i) remains an open problem.

\section{Dirichlet Problems and Random Walks on Connection Graphs}
\label{sec:random walks}
Building upon the perspectives of Dirichlet problems and random walks for studying effective resistance in classical graphs (cf. \Cref{subsubsec:energy-perspective} and \Cref{subsubsec:random-walks}), in this section, we explore their relevance to connection graphs. These concepts and findings will contribute to our formulation of effective conductance and resistance for connection graphs in the subsequent section.

\subsection{Dirichlet Problems and Harmonic Functions on Connection Graphs}
\label{subsec:harmonic-functions-connection}

In this subsection we define Dirichlet problems and harmonic functions for connection graphs. We establish uniqueness of solutions to such Dirichlet problems under certain conditions and provide an energy characterization of such solutions.

Specifically, when dealing with a graph $G$, we shift our focus from vector-valued functions as discussed in \Cref{subsec:connection-graph-basics} to matrix-valued functions $f\in\ell^2(V;\mathbb{R}^{d\times d})$. As we proceed, it becomes evident that this space is most suitable for the study of harmonic functions for our intended purposes, interpreting the effective conductance as analogous to ``hitting time", which now is related to those expected mean path signatures that are represented as matrices.

Worth noting is that whenever $d=1$ and the connection $\sigma\equiv 1$ is trivial, the results we develop match those for harmonic functions on classical graphs.

Let $(G,\sigma)$ be a connection graph. Suppose $H\subset V$.  Then $f$ is said to be \textit{harmonic} on $H$ if $(\mathcal{L}f)(i) = 0_{d\times d}$ for each $i\in H$. A useful observation is that if $f$ is harmonic on $H$, then for each $i\in H$, $f$ satisfies the mean value property:
\[f(i) = \frac{1}{\deg(i)}\sum_{j\sim i}w_{ij}\sigma_{ij}f(j).\]

We define the vertex boundary of $H$ by
\begin{align}
    \partial H = \left\{j\in V:j\notin H, \text{ and there exists some }i\in H\text{ such that }i\sim j\right\}
\end{align}
We define the vertex closure of $H$ by $\overline{H} = H\cup\partial H$.

\begin{proposition}[Maximum norm principle]\label{prop:maximum-norm-principle}
    Let $(G,\sigma)$ be a connection graph, and let $H\subsetneq V$ be a nonempty proper subset of vertices in $G$. Suppose $f\in \ell_2(V;\mathbb{R}^{d\times d})$ is harmonic on $H$. Let $\|\cdot\|$ be any orthogonally invariant matrix norm on $\mathbb{R}^{d\times d}$. Then
    \begin{align}
        \max_{i\in \overline{H}} \|f(i)\| = \max_{i\in \partial H} \|f(i)\|.
    \end{align}
\end{proposition}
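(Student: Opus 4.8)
The plan is to prove this by a contradiction argument based on the mean value property, mimicking the classical maximum principle for harmonic functions on graphs but working with norms of matrices. First I would observe that since $\overline{H} = H \cup \partial H$, it suffices to show that $\max_{i \in \overline{H}} \|f(i)\|$ is attained at some point of $\partial H$; the inequality $\max_{i \in \partial H}\|f(i)\| \le \max_{i\in\overline H}\|f(i)\|$ is trivial, so the content is the reverse inequality, i.e.\ that the maximum over $\overline H$ cannot be attained \emph{only} in the interior $H$.

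The key step is the following bound: for any $i \in H$, harmonicity and the mean value property give $f(i) = \frac{1}{\deg(i)}\sum_{j\sim i} w_{ij}\sigma_{ij}f(j)$, so by the triangle inequality for $\|\cdot\|$, the fact that $\sigma_{ij}$ is orthogonal (hence $\|\sigma_{ij}f(j)\| = \|f(j)\|$ by orthogonal invariance of the norm), and $\sum_{j\sim i} w_{ij} = \deg(i)$, we get
\begin{align}
    \|f(i)\| \le \frac{1}{\deg(i)}\sum_{j\sim i} w_{ij}\|\sigma_{ij}f(j)\| = \frac{1}{\deg(i)}\sum_{j\sim i} w_{ij}\|f(j)\| \le \max_{j\sim i}\|f(j)\|.
\end{align}
So $\|f(i)\|$ at an interior vertex is at most the maximum of $\|f(\cdot)\|$ over its neighbors. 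Now let $M := \max_{i\in\overline H}\|f(i)\|$ and suppose for contradiction that this value is strictly larger than $\max_{i\in\partial H}\|f(i)\|$; then the set $S := \{i \in \overline H : \|f(i)\| = M\}$ is nonempty and contained entirely in $H$. Since $G$ is connected and $H \subsetneq V$ is a nonempty proper subset, there is some vertex in $S$ with a neighbor outside $S$ — more carefully, pick any $i_0 \in S$ and follow a path in $G$ from $i_0$ to a vertex of $\partial H$ (which exists and is nonempty since $H$ is a proper nonempty subset of a connected graph); along this path there is a first vertex $i^\ast$ that still lies in $S$ but has a neighbor $j^\ast \notin S$. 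Then $\|f(j^\ast)\| < M$, and since all neighbors of $i^\ast$ have norm $\le M$ while at least one is strictly less, the averaging inequality above becomes strict: $\|f(i^\ast)\| < M$, contradicting $i^\ast \in S$.

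There is one subtlety to handle carefully: I should make sure the path argument actually produces a vertex $i^\ast \in H \cap S$ with a neighbor $j^\ast$ that is either in $\partial H$ or in $H \setminus S$; in either case $\|f(j^\ast)\| < M$ by assumption (for $\partial H$ vertices, $\|f\| \le \max_{\partial H}\|f\| < M$; for $H\setminus S$ vertices, $\|f\| < M$ by definition of $S$). This is where connectivity of $G$ is essential. An alternative, cleaner route avoiding the explicit path is: let $S = \{i\in\overline H : \|f(i)\|=M\}\subseteq H$; for any $i\in S$, the strict version of the inequality forces \emph{every} neighbor $j$ of $i$ to also satisfy $\|f(j)\| = M$ (otherwise the average is strictly below $M$), hence $j\in S$; so $S$ is closed under taking neighbors, and being nonempty in a connected graph this forces $S = V$, contradicting $S \subseteq H \subsetneq V$. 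I would likely present this second version as it is crisper. The main obstacle — really the only non-routine point — is verifying that equality in the triangle-inequality chain propagates correctly, i.e.\ that $\|f(i)\| = M$ with all neighbor norms $\le M$ indeed forces all neighbor norms to equal $M$; this follows since $\sum_{j\sim i} w_{ij}\|f(j)\| = \deg(i)M = \sum_{j\sim i}w_{ij}M$ with each term $w_{ij}\|f(j)\| \le w_{ij}M$ and $w_{ij}>0$, so each term attains equality. The orthogonal invariance of $\|\cdot\|$ is used exactly once, to replace $\|\sigma_{ij}f(j)\|$ by $\|f(j)\|$.
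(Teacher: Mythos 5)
Your proof is correct and follows essentially the same route as the paper's: the mean value property combined with the triangle inequality and orthogonal invariance of the norm, followed by propagation of the maximal value to neighbors and an appeal to connectivity to reach the boundary. Your write-up is in fact a bit more careful than the paper's (explicitly justifying why equality forces every neighbor to attain the maximum), but the underlying argument is the same.
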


\begin{proof}
    Suppose there exists some $i^*\in H$ for which $\|f(i^\ast)\|\geq \|f(i)\|$ for each $i\in\overline{H}$. Then by the mean value property,
    \begin{align}
        \|f(i^*)\| =\frac{1}{\deg(i)}\left\|\sum_{j\sim i^*}w_{i^*j}\sigma_{i^* j}f(j)\right\|\leq \frac{1}{\deg(i)}\sum_{j\sim i^*}w_{i^*j}\left\|\sigma_{i^*j}f(j)\right\|\leq \|f(i^\ast)\|.
    \end{align}
    Therefore, $\|f(j)\| = \|f(i^\ast)\|$ for each $j\sim i^*$. By iterating this argument, it follows that $\|f(\cdot)\|$ is constant on the set $S\subset \overline{H}$ of nodes which are reachable along a path starting at $i^*$ contained strictly in $\overline{H}$. Since $H$ is a proper subset of $V$ and hence must have a nonempty boundary, $S$ contains at least one boundary node and hence $\|f(i^*)\|$ is achieved on the boundary.
\end{proof}

Given any function $\phi:\partial H\rightarrow \mathbb{R}^{d\times d}$, the \emph{Dirichlet problem} (DP) is given by
\begin{equation}\label{eq:dirichlet-problem}
    \begin{cases}
        u|_{\partial H} = \phi            \\
        \mathcal{L}u|_{H} = 0_{d\times d}
    \end{cases},\,\text{ where } u\in\ell^2(V;\mathbb{R}^{d\times d}).
\end{equation}

\begin{corollary}\label{corol:uniqueness-of-DP}
    If $H\subsetneq V$ satisfies the conditions outlined in the statement of  \Cref{prop:maximum-norm-principle}, the solution to (DP) is unique for any choice of $\phi$ (since, in particular, the difference of any two solutions $f_1-f_2$ is also harmonic and has norm zero on the boundary). In particular, if we choose $H= V\backslash\{i\} =:i^c$ where $i\in V$ is a single fixed node, it follows that the submatrix $\mathcal{L}_{i^c, i^c}$ is positive definite.
\end{corollary}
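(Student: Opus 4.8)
Both assertions follow from the maximum norm principle, \Cref{prop:maximum-norm-principle}, and I would extract them in turn. \emph{Uniqueness of (DP).} Suppose $u_1,u_2\in\ell^2(V;\mathbb{R}^{d\times d})$ both solve \Cref{eq:dirichlet-problem} for the same boundary datum $\phi$. By linearity of $\mathcal{L}$ the difference $g:=u_1-u_2$ is harmonic on $H$ and vanishes on $\partial H$, so fixing any orthogonally invariant matrix norm (say the Frobenius norm), \Cref{prop:maximum-norm-principle} gives $\max_{x\in\overline H}\|g(x)\|=\max_{x\in\partial H}\|g(x)\|=0$. Hence $g\equiv 0$ on $\overline H=H\cup\partial H$, i.e. $u_1$ and $u_2$ agree on $\overline H$. (For $H=i^c$ one has $\partial H=\{i\}$ and $\overline H=V$, so the solution is literally unique as an element of $\ell^2(V;\mathbb{R}^{d\times d})$.)

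\emph{Positive definiteness of $\mathcal{L}_{i^c,i^c}$.} Fix $H=i^c$; since $G$ is connected, $i$ has a neighbor, so $\partial H=\{i\}$ and $\overline H=V$. Being a principal submatrix of the positive semidefinite matrix $\mathcal{L}$, the block $\mathcal{L}_{i^c,i^c}$ is positive semidefinite, so it suffices to verify that its kernel is trivial. Let $v\in\mathbb{R}^{(n-1)d}$, identified with an element of $\ell^2(i^c;\mathbb{R}^d)$, satisfy $\mathcal{L}_{i^c,i^c}v=0$, and extend it to $\tilde v\in\ell^2(V;\mathbb{R}^d)$ by setting $\tilde v(i):=0$. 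Reading off the block rows of $\mathcal{L}\tilde v$ indexed by $j\neq i$, the equation $\mathcal{L}_{i^c,i^c}v=0$ says exactly that $(\mathcal{L}\tilde v)(j)=0$ for every $j\in i^c$; that is, $\tilde v$ is harmonic on $i^c$. Applying the maximum norm principle with the Euclidean norm then forces $\max_{x\in V}\|\tilde v(x)\|=\|\tilde v(i)\|=0$, whence $v=0$. Thus $\ker\mathcal{L}_{i^c,i^c}=\{0\}$ and $\mathcal{L}_{i^c,i^c}$ is positive definite.

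I do not expect any serious obstacle here; the two points that warrant a line of care are that the block condition ``$v\in\ker\mathcal{L}_{i^c,i^c}$'' unpacks precisely into ``$\tilde v$ is harmonic on $i^c$ after zero-padding at $i$'', and that \Cref{prop:maximum-norm-principle} was phrased for $\mathbb{R}^{d\times d}$-valued functions whereas the second part needs it for $\mathbb{R}^d$-valued ones. The latter is harmless: the proof of \Cref{prop:maximum-norm-principle} goes through verbatim for $\mathbb{R}^d$-valued functions under the Euclidean norm (the mean value property holds there as well), or one can embed $\mathbb{R}^d\hookrightarrow\mathbb{R}^{d\times d}$ as the first-column subspace, an embedding that commutes with $\mathcal{L}$ and preserves norms. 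One could alternatively sidestep the maximum principle in the second part by writing $\mathcal{L}_{i^c,i^c}=\mathcal{L}^{G\backslash\{i\}}+\sum_{j\sim i}w_{ij}\,(e_je_j^{\mathrm T}\otimes I_{d\times d})$ and propagating the transports $\tilde v(a)=\sigma_{ab}\tilde v(b)$ within each component of $G\backslash\{i\}$ out to a vertex adjacent to $i$, but the route above is cleaner.
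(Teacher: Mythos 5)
Your proposal is correct and follows essentially the same route as the paper, which only sketches the argument: uniqueness via the maximum norm principle applied to the difference of two solutions, and positive definiteness of $\mathcal{L}_{i^c,i^c}$ by combining positive semidefiniteness (principal submatrix of $\mathcal{L}$) with triviality of the kernel, the latter obtained by zero-padding a kernel vector at $i$ and invoking the maximum norm principle again. Your explicit attention to the vector-valued versus matrix-valued formulation of \Cref{prop:maximum-norm-principle} is a worthwhile detail that the paper leaves implicit, and both of your proposed fixes (rerunning the proof with the Euclidean norm, or embedding $\mathbb{R}^d$ as a column subspace of $\mathbb{R}^{d\times d}$) are valid.
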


Finally, we provide a useful characterization of the solution of (DP) in terms of the Dirichlet energy.
For any given $f\in \ell^2(V;\mathbb{R}^{d\times d})$ we define the \textit{(connection) Dirichlet energy} of $f$, analogous to the vector case in \Cref{eq:connection-dirichlet-energy-vector-case}, by the equation
\begin{align}\label{eq:connection-dirichlet-energy}
    E(f) :=\frac{1}{2}\tr\left(f^\mathrm{T} \mathcal{L} f\right) = \frac{1}{2}\tr\left(\sum_{\{i,j\}\in E}w_{ij}(f(i) - \sigma_{ij}f(j))^\mathrm{T}(f(i)-\sigma_{ij}f(j))\right)
\end{align}

\begin{proposition}[Energy minimization]
    \label{prop:energy-min}
    Suppose $(G,\sigma)$ is a connection graph and $H\subsetneq V$ is a proper subset of vertices of $G$. Assume without loss of generality that $V = \overline{H}$ and fix $\phi:\partial H\rightarrow\mathbb{R}^{d\times d}$. Then $f_0 \in\ell^2(V;\mathbb{R}^{d\times d})$ is a solution to (DP) in  \Cref{eq:dirichlet-problem} if and only if $f_0$ is also a solution to the corresponding energy minimization problem (EP), i.e.,
    \begin{align}
        E(f_0) = \min_{f|_{\partial H} = \phi} E(f).
    \end{align}
\end{proposition}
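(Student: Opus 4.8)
The plan is to recognize (EP) as the minimization of the convex quadratic functional $E$ over the affine subspace $\mathcal{A} := \{f \in \ell^2(V;\mathbb{R}^{d\times d}) : f|_{\partial H} = \phi\}$ and to extract the first-order optimality condition, which will turn out to be precisely (DP). Since we assume $V = \overline{H}$, every competitor $f \in \mathcal{A}$ can be written uniquely as $f = f_0 + g$ where $g \in \ell^2(V;\mathbb{R}^{d\times d})$ is supported on $H$ (that is, $g|_{\partial H} = 0_{d\times d}$), and conversely $f_0 + g \in \mathcal{A}$ for every such $g$. The first computation I would carry out is the expansion of the energy along such a perturbation: using that $\mathcal{L}$ is symmetric (so that $\tr(f_0^\mathrm{T}\mathcal{L}g) = \tr(g^\mathrm{T}\mathcal{L}f_0)$), one gets
\[
    E(f_0 + g) = E(f_0) + \tr\!\big(g^\mathrm{T}\mathcal{L}f_0\big) + E(g).
\]
Because $g$ vanishes on $\partial H$ and $V = H \cup \partial H$, block multiplication collapses the cross term to $\tr(g^\mathrm{T}\mathcal{L}f_0) = \sum_{i\in H}\tr\!\big(g(i)^\mathrm{T}(\mathcal{L}f_0)(i)\big)$, and moreover $E(g) = \tfrac12\tr(g^\mathrm{T}\mathcal{L}g) \geq 0$ since $\mathcal{L}$ is positive semidefinite.

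For the implication (DP) $\Rightarrow$ (EP): if $f_0$ solves (DP) then $(\mathcal{L}f_0)(i) = 0_{d\times d}$ for every $i \in H$, so the cross term above is identically zero and $E(f_0 + g) = E(f_0) + E(g) \geq E(f_0)$ for every admissible perturbation $g$; hence $f_0$ attains the minimum in (EP). For the converse (EP) $\Rightarrow$ (DP): if $f_0$ solves (EP) then, for every $g$ supported on $H$ and every $t \in \mathbb{R}$,
\[
    0 \leq E(f_0 + tg) - E(f_0) = t\,\tr\!\big(g^\mathrm{T}\mathcal{L}f_0\big) + t^2 E(g),
\]
and letting $t$ approach $0$ from both sides (equivalently, differentiating at $t = 0$) forces $\tr(g^\mathrm{T}\mathcal{L}f_0) = 0$. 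Specializing to $g$ supported at a single vertex $i \in H$ with $g(i) := (\mathcal{L}f_0)(i)$ yields $\tr\!\big((\mathcal{L}f_0)(i)^\mathrm{T}(\mathcal{L}f_0)(i)\big) = 0$, hence $(\mathcal{L}f_0)(i) = 0_{d\times d}$; since $i \in H$ was arbitrary, $f_0$ solves (DP).

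I do not anticipate a genuine obstacle, since this is the familiar ``harmonic equals energy-minimizing'' dichotomy; the points requiring care are essentially bookkeeping. First, all the trace manipulations should be phrased in the Hilbert--Schmidt inner product on $\ell^2(V;\mathbb{R}^{d\times d})$, with the symmetry $\mathcal{L}^\mathrm{T} = \mathcal{L}$ invoked explicitly, so that the cross-term identity and the decomposition $E(f_0+g) = E(f_0) + \langle \mathcal{L}f_0, g\rangle + E(g)$ are rigorous in the matrix-valued setting. Second, for the statement $E(f_0) = \min_{f|_{\partial H} = \phi} E(f)$ to be non-vacuous one wants a minimizer to exist; this follows from \Cref{prop:maximum-norm-principle} and \Cref{corol:uniqueness-of-DP}, which give that $\mathcal{L}_{H,H}$ is positive definite, so that $E$ is strictly convex and coercive on $\mathcal{A}$ and the minimizer is unique. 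Alternatively, one may simply construct the (unique) solution of (DP) guaranteed by \Cref{corol:uniqueness-of-DP} and invoke the forward implication already proved, which bypasses any separate existence argument.
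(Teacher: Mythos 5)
Your proposal is correct and follows essentially the same variational argument as the paper: both directions reduce to the first-order condition $\tr(g^\mathrm{T}\mathcal{L}f_0)=0$ for perturbations $g$ vanishing on $\partial H$. The only (cosmetic) difference is that for the direction (DP) $\Rightarrow$ (EP) you use the explicit quadratic expansion $E(f_0+g)=E(f_0)+E(g)\geq E(f_0)$ with positive semidefiniteness of $\mathcal{L}$, whereas the paper invokes strict convexity of $E$ on the affine constraint set together with the vanishing of the gradient.
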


A proof of this proposition can be found in \Cref{sec:missing_proofs}.

\subsection{Graph Random Walks and Mean Path Signatures}
\label{subsec:vector-random-walks}

Consider a simple random walk $(X_t)_{t\geq 0}$  with transition kernel $D^{-1}W$ on a graph $G=(V,E,W)$. If $G$ is equipped with a signature $\sigma$, along the random walk $(X_t)_{t\geq 0}$, one can also record the signature encountered. We hence define the notion of \emph{mean path signature}. 

\begin{definition}[Mean path signature]\label{defn:mean-path-product}
    Let $(G,\sigma)$ be a connection graph and let $i,j,k\in V$. We define
        \[\Omega_{ij}^s := \mathbb{E}^i\left[\prod_{\ell=1}^{T_{j}^s}\sigma_{X_{\ell-1}X_{\ell}}\right] \text{ and }\Omega_{ij}^s(k) := \mathbb{E}^i\left[\prod_{\ell=1}^{T_{j}^s}\sigma_{X_{\ell-1}X_{\ell}}\middle| T_j^s< T_k^s \right].\]
\end{definition}

In words, $\Omega_{ij}^s$ is the expected mean path signature for a walk starting at $i$ and terminating at $j$ whereas $\Omega_{ij}^s(k)$ is the expected mean path signature for a walk starting at $i$ and terminating at $j$ before hitting $k$. As a matter of convention, we let $\Omega_i^s:=\Omega_{ii}^s$. Similarly, we define $\Omega_{i}^s(k) :=\Omega_{ii}^s(k)$.

The choice of $s=0$ or $1$ allows this definition to be toggled as needed in the case $i=j$. If $s=0$ then $\Omega_{i}^0 = I_{d\times d}$ by default, and if $s=1$, $\Omega_{i}^s$ can be interpreted as a mean cycle product over the cycles that begin and end at node $i$. 
Lastly, it is also important to observe that while $\sigma:E^{\text{or}}\rightarrow\mathsf{O}(d)$, $\Omega_{ij}^s$ need not be orthogonal; rather, it is a convex combination of rotation matrices located somewhere within the convex hull of $\mathsf{O}(d)$.

\begin{example}\label{ex:mean-path-products}(Consistent graphs and cycle graphs)
    \begin{enumerate}
        \item {\normalfont (Consistent Graphs)} If $(G,\sigma)$ is consistent, then by \Cref{rmk:path-independence}, for any three nodes $i,j,k\in V$, and $s=0,1$ we have that 
              $\Omega^s_{ij} = \Omega^s_{ij}(k) = \sigma_{ij}.$
              In particular, when $\sigma_{ij} = I_{d\times d}$ for any $i,j$, $\Omega^s_{ij} = \Omega^s_{ij}(k) = I_{d\times d}$.

        \item {\normalfont (Cycle Graph)} In the case where $G$ is a cycle graph on $n$ vertices, $i,j\in V$ with $i\neq j$, and $\sigma$ is any signature, we have by inspection that
            $\Omega_{i}^1(j)  = I_{d\times d}.$

    \end{enumerate}
\end{example}

It is natural to wonder whether there is any relationship between $\Omega_{ij}^s$ and $\Omega_{ji}^s$ (resp. $\Omega_{ij}^s(k)$ and $\Omega_{ji}^s(k)$). In general, we do not have that $\Omega_{ij}^s(k) = (\Omega_{ji}^s(k))^\mathrm{T}$ or $\Omega_{ij}^s = (\Omega_{ji}^s)^\mathrm{T}$ (see \Cref{fig: path} for an illustration). But the following result provides a positive answer to some extent.

\begin{figure}[htb!]
    \centering
    \includegraphics[width = 0.2\linewidth]{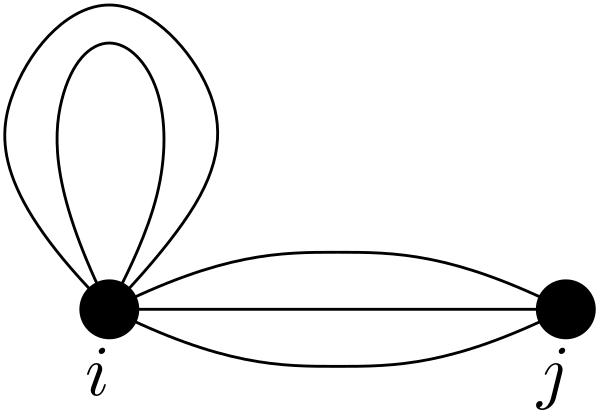}
    \caption{A path contributing to $\Omega_{ij}^s$ (or $\Omega_{ij}^s(k)$ when $k\neq i,j$) can involve multiple loops passing through $i$, which is not allowed for a path contributing to $\Omega_{ji}^s$ (or $\Omega_{ji}^s(k)$ when $k\neq i,j$). This inherent asymmetry is the reason why, in general, $\Omega_{ij}^s(k) \neq (\Omega_{ji}^s(k))^\mathrm{T}$ or $\Omega_{ij}^s \neq (\Omega_{ji}^s)^\mathrm{T}$.}
    \label{fig: path}
\end{figure}

\begin{proposition}\label{prop:mean-path-product-transpose}
    Let $(G,\sigma)$ be a connection graph and let $i,j,k\in V$. Then, for any $s=0,1$, one has that
    $\Omega_{ij}^s(i)  = (\Omega_{ji}^s(j))^\mathrm{T}.$
\end{proposition}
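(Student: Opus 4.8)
The plan is to set up a measure-preserving bijection between the random-walk paths contributing to $\Omega_{ij}^s(i)$ and those contributing to $\Omega_{ji}^s(j)$, given by path reversal, and to check that reversal turns each path signature into the transpose of the corresponding reversed-path signature. First I would unwind the definitions: a path $\gamma$ contributing to $\Omega_{ij}^s(i) = \mathbb{E}^i[\prod_{\ell} \sigma_{X_{\ell-1}X_\ell} \mid T_j^s < T_i^s]$ is a walk $(i = x_0, x_1, \dotsc, x_m = j)$ that reaches $j$ at time $m$ for the first time (after time $s$) and does not visit $i$ at any intermediate time in the relevant range. Its reversal $\bar\gamma = (j = x_m, x_{m-1}, \dotsc, x_0 = i)$ is then precisely a path contributing to $\Omega_{ji}^s(j)$, since the "avoid $i$ before hitting $j$" constraint on $\gamma$ becomes exactly the "avoid $j$ before hitting $i$" constraint on $\bar\gamma$ (here one must be slightly careful about the $s=0$ versus $s=1$ bookkeeping at the endpoints, but in both cases the constraint sets match up under reversal). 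The map $\gamma \mapsto \bar\gamma$ is an involution on path space, hence a bijection between the two event sets.

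The second ingredient is the probability weights. For the simple random walk with kernel $D^{-1}W$, the probability of a path $(x_0, \dotsc, x_m)$ started at $x_0$ is $\prod_{\ell=1}^m w_{x_{\ell-1}x_\ell}/\deg(x_{\ell-1})$. Using symmetry $w_{ab} = w_{ba}$, one gets
\[
\mathbb{P}^{x_0}[\gamma] \cdot \deg(x_0) = \Big(\prod_{\ell=1}^m w_{x_{\ell-1}x_\ell}\Big) \prod_{\ell=0}^{m-1}\frac{1}{\deg(x_\ell)} = \mathbb{P}^{x_m}[\bar\gamma]\cdot \deg(x_m),
\]
so the path weights of $\gamma$ and $\bar\gamma$ differ only by the fixed factor $\deg(j)/\deg(i)$, independent of the path. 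Since this factor also equals the ratio of normalizing constants $\mathbb{P}^i[T_j^s<T_i^s]$ versus $\mathbb{P}^j[T_i^s<T_j^s]$ — indeed, summing the previous display over all admissible $\gamma$ gives $\deg(i)\,\mathbb{P}^i[T_j^s<T_i^s] = \deg(j)\,\mathbb{P}^j[T_i^s<T_j^s]$ — the conditional probabilities of $\gamma$ under $\mathbb{P}^i[\,\cdot\mid T_j^s<T_i^s]$ and of $\bar\gamma$ under $\mathbb{P}^j[\,\cdot\mid T_i^s<T_j^s]$ are equal.

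The third ingredient is the signature algebra: along $\gamma$, the path signature is $\sigma(\gamma) = \sigma_{x_0x_1}\sigma_{x_1x_2}\cdots\sigma_{x_{m-1}x_m}$, while along $\bar\gamma$ it is $\sigma(\bar\gamma) = \sigma_{x_mx_{m-1}}\cdots\sigma_{x_1x_0} = \sigma_{x_{m-1}x_m}^{\mathrm{T}}\cdots\sigma_{x_0x_1}^{\mathrm{T}} = \sigma(\gamma)^{\mathrm{T}}$, using $\sigma_{ba} = \sigma_{ab}^{\mathrm{T}}$ and $(AB)^{\mathrm{T}} = B^{\mathrm{T}}A^{\mathrm{T}}$. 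Putting the three pieces together:
\[
\Omega_{ij}^s(i) = \sum_{\gamma} \mathbb{P}^i[\gamma\mid T_j^s<T_i^s]\,\sigma(\gamma) = \sum_{\bar\gamma}\mathbb{P}^j[\bar\gamma\mid T_i^s<T_j^s]\,\sigma(\bar\gamma)^{\mathrm{T}} = \Big(\sum_{\bar\gamma}\mathbb{P}^j[\bar\gamma\mid T_i^s<T_j^s]\,\sigma(\bar\gamma)\Big)^{\mathrm{T}} = (\Omega_{ji}^s(j))^{\mathrm{T}},
\]
where transposition commutes with the (absolutely convergent) sum. The main obstacle I anticipate is purely bookkeeping: verifying that path reversal exactly matches up the stopping-time constraints in both the $s=0$ and $s=1$ cases — in particular checking that "first hit $j$, no intermediate $i$" reverses to "first hit $i$, no intermediate $j$" with the endpoint conventions handled correctly, and confirming that conditioning on $\{T_j^s < T_i^s\}$ restricts precisely to finite paths of this form (which uses connectedness, so that $\mathbb{P}^i[T_j^s<\infty]=1$ and the event has positive probability). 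Everything else is a short symmetry computation.
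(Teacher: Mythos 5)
Your proof is correct and follows essentially the same path-reversal argument as the paper, which pairs each path $(i=x_0,\dots,x_m=j)$ avoiding $\{i,j\}$ in between with its reversal and observes that the signature product transposes. You additionally spell out the reversibility computation $\deg(i)\,\mathbb{P}^i[\gamma]=\deg(j)\,\mathbb{P}^j[\bar\gamma]$ and the matching of normalizing constants, a step the paper leaves implicit with ``it is direct to see'' (note only that your middle expression $\prod_{\ell=0}^{m-1}\deg(x_\ell)^{-1}$ should be $\prod_{\ell=1}^{m-1}\deg(x_\ell)^{-1}$ after multiplying by $\deg(x_0)$; the asserted identity itself is right).
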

The proof can be found in \Cref{sec:missing_proofs}.

Note that the mean path signatures can be used to characterize the kernel of $\mathcal{L}$.

\begin{proposition}[A characterization of kernel of $\mathcal{L}$]
    Let $f:V\rightarrow\R^{d\times d}$ be such that $(\mathcal{L}f)(x)= 0_{d\times d}$ for all $x\in V$. Then, for any $i,j\in V$, one has that
    $f(i) = \Omega_{ij}^0f(j).$
\end{proposition}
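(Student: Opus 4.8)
The plan is to show that a matrix-valued function $f$ with $\mathcal L f \equiv 0$ is harmonic on all of $V$, and then to exploit this together with the random walk interpretation of the mean value property. First I would record the key identity: harmonicity at a vertex $x$ gives the mean value property
\begin{align}
    f(x) = \frac{1}{\deg(x)}\sum_{y\sim x} w_{xy}\,\sigma_{xy} f(y) = \sum_{y} \mathbb P_{x,y}\,\sigma_{xy} f(y).
\end{align}
In other words, if I define the one-step operator $(Pf)(x) := \sum_y \mathbb P_{x,y}\sigma_{xy} f(y)$, then $f = Pf$, hence $f = P^t f$ for every $t\ge 0$. Expanding $P^t$ along walks, $f(i) = \mathbb E^i\big[\prod_{\ell=1}^t \sigma_{X_{\ell-1}X_\ell}\, f(X_t)\big]$ for every $t$.

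The second step is to pass from a fixed number of steps $t$ to the stopping time $T_j := T_j^0$. Fix the target vertex $j$. Using the strong Markov property at $T_j$, or equivalently a first-step / optional-stopping argument, I would condition on the event $\{T_j = t\}$ to obtain
\begin{align}
    f(i) = \mathbb E^i\!\left[\prod_{\ell=1}^{T_j}\sigma_{X_{\ell-1}X_\ell}\, f(X_{T_j})\right] = \mathbb E^i\!\left[\prod_{\ell=1}^{T_j}\sigma_{X_{\ell-1}X_\ell}\right] f(j) = \Omega_{ij}^0 f(j),
\end{align}
where the middle equality uses $X_{T_j}=j$ almost surely (valid since $G$ is connected, so $\mathbb P^i[T_j<\infty]=1$) and pulls the deterministic matrix $f(j)$ out of the expectation. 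Cleanest is perhaps to argue by first-step analysis directly: set $g(i) := \Omega_{ij}^0 f(j)$; show $g$ satisfies the same mean value recursion on $V\setminus\{j\}$ (by decomposing the path after its first step, using $\Omega_{ij}^0 = \sum_{y\sim i}\mathbb P_{i,y}\sigma_{iy}\Omega_{yj}^0$ for $i\neq j$, which follows from the definition of $\Omega$ by conditioning on $X_1$) and $g(j)=f(j)$; then $f-g$ is harmonic on $V\setminus\{j\}$ and vanishes at $j$, so by the uniqueness in \Cref{corol:uniqueness-of-DP} (with $H = V\setminus\{j\}$), $f\equiv g$.

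The main obstacle is the interchange of the expectation with the (a priori unbounded) product over $T_j$ steps — one needs to justify that $\mathbb E^i[\prod_{\ell=1}^{T_j}\sigma_{X_{\ell-1}X_\ell} f(X_{T_j})]$ is well-defined and equals the limit of the finite-$t$ expressions. This is where I would be careful: since each $\sigma_{X_{\ell-1}X_\ell}$ is orthogonal, the partial products have operator norm exactly $1$, and $f$ is bounded on the finite vertex set $V$, so the integrand is bounded by $\max_x\|f(x)\|$ uniformly; combined with $T_j<\infty$ a.s., dominated convergence applies and the interchange is legitimate. The finite-state, bounded-increment structure makes this routine rather than delicate, but it is the one place the argument needs more than formal manipulation. (The first-step-analysis route avoids the limiting argument entirely by reducing everything to the already-established uniqueness statement, so I would likely present that as the primary proof.)
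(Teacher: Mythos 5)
Your argument is correct, but it takes a genuinely different route from the paper's. The paper first observes that $\mathcal{L}f\equiv 0$ forces the \emph{edgewise} identity $f(x)=\sigma_{xy}f(y)$ for every edge $\{x,y\}$ (since each summand of the quadratic form in \Cref{eq:connection-dirichlet-energy-vector-case} is positive semidefinite, the vanishing of $\tr(f^\mathrm{T}\mathcal{L}f)$ forces every one of them to vanish). With that identity in hand, the product $\prod_{\ell=1}^{T_j^0}\sigma_{X_{\ell-1}X_\ell}\,f(j)$ telescopes to $f(i)$ \emph{pathwise}: the random variable inside the expectation defining $\Omega_{ij}^0 f(j)$ is the constant $f(i)$, so no optional stopping, no limit interchange, and no appeal to Dirichlet uniqueness is needed. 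Your route uses only the pointwise mean value property (harmonicity at each vertex), which is strictly weaker information, and therefore must recover the conclusion either through a bounded-martingale/optional-stopping argument or through the first-step recursion $\Omega_{ij}^0=\sum_{y\sim i}\mathbb{P}_{i,y}\sigma_{iy}\Omega_{yj}^0$ combined with the uniqueness of \Cref{corol:uniqueness-of-DP} (effectively re-deriving \Cref{lem:Omega-dirichlet-conditions}). Both of your variants are sound --- the domination you invoke (orthogonal factors of norm $1$, $f$ bounded on a finite vertex set, $T_j^0<\infty$ a.s.) is exactly what is needed --- and your approach has the advantage of generalizing to functions that are merely harmonic on a proper subset of $V$, where the edgewise identity is unavailable and the paper's telescoping trick does not apply. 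The cost is that it imports more machinery for a statement the paper dispatches with a one-line pathwise cancellation.
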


\begin{proof}
    Notice that for any edge $\{i,j\}$, $f$ satisifes that $f(i) = \sigma_{ij}f(j)$. Then, for any $i,j\in V$, one has that
    \begin{align*}
        \Omega_{ij}^0f(j) & = \mathbb{E}^i\left[\prod_{\ell = 1}^{T_j^0}\sigma_{X_{\ell-1} X_{\ell}}\cdot f(j)\right]  = \mathbb{E}^i\left[\prod_{\ell = 1}^{T_j^0-1}\sigma_{X_{\ell-1} X_{\ell}}\cdot \sigma_{X_{T_j^0-1} j} f(j)\right] \\
        & = \mathbb{E}^i\left[\prod_{\ell = 1}^{T_j^0-1}\sigma_{X_{\ell-1} X_{\ell}}\cdot f(X_{T_j^0-1})\right]=\cdots = \mathbb{E}^i\left[\sigma_{i X_{1}}\cdot f(X_{1})\right]  = f(i).
    \end{align*}
\end{proof}

\paragraph{Mean path signatures vs Dirichlet problems} The mean path signature $\Omega_{ij}^0$ turns out to be closely related to Dirichlet problems on connection graphs. 
Let $(G,\sigma)$ be a connection graph and let $j\in V$ be a fixed node. Consider the map $\Omega_{\bullet j}^0:V\to \R^{d\times d}$ sending each $i\in V$ to $\Omega_{ij}^0$. Using the Markov property and the uniqueness of the solution to the Dirichlet problem (cf. \Cref{corol:uniqueness-of-DP}), we have the following lemma.

\begin{lemma}\label{lem:Omega-dirichlet-conditions}
    The map $\Omega_{\bullet j}^0$ is the unique solution to the Dirichlet problem below:
    \begin{align}\label{eq:Omega-dirichlet-conditions-1}\begin{cases}
        \Omega_{jj}^0=I_{d\times d}                                \\
        \mathcal{L}\Omega_{\bullet j}^0|_{V\backslash \{j\}} = 0_{d\times d}
        \end{cases}.\end{align}
\end{lemma}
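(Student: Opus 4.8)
The plan is to verify that $\Omega_{\bullet j}^0$ satisfies the two conditions in \Cref{eq:Omega-dirichlet-conditions-1}, and then invoke uniqueness from \Cref{corol:uniqueness-of-DP} applied to $H = V\setminus\{j\} = j^c$ (which is a nonempty proper subset as long as $|V|\geq 2$, and has nonempty boundary since $G$ is connected). The first condition, $\Omega_{jj}^0 = I_{d\times d}$, is immediate from \Cref{defn:mean-path-product} with $s=0$: starting at $j$, the stopping time $T_j^0 = 0$, so the empty product is $I_{d\times d}$, and taking expectation leaves $I_{d\times d}$. The substantive part is the harmonicity condition: for each $i \in V\setminus\{j\}$, we must show $(\mathcal{L}\,\Omega_{\bullet j}^0)(i) = \sum_{k\sim i} w_{ik}\bigl(\Omega_{ij}^0 - \sigma_{ik}\Omega_{kj}^0\bigr) = 0_{d\times d}$, equivalently the mean-value property $\Omega_{ij}^0 = \frac{1}{\deg(i)}\sum_{k\sim i} w_{ik}\,\sigma_{ik}\,\Omega_{kj}^0$.

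First I would establish this mean-value identity via a one-step (first-transition / Markov) decomposition of the expectation defining $\Omega_{ij}^0$. Fix $i \neq j$. Condition on the first step $X_1 = k$ of the walk started at $i$, which occurs with probability $w_{ik}/\deg(i)$. On this event, $\sigma_{X_0 X_1} = \sigma_{ik}$, and the remainder of the path from $X_1 = k$ until it hits $j$ is, by the strong Markov property, distributed exactly like a walk started at $k$ run until $T_j^0$; moreover since $i \neq j$, we have $T_j^0 \geq 1$ with the walk at $i$, so there is genuinely a first step to condition on, and the tail product $\prod_{\ell=2}^{T_j^0}\sigma_{X_{\ell-1}X_\ell}$ has conditional expectation $\Omega_{kj}^0$. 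Factoring the leading $\sigma_{ik}$ out of the expectation (it is deterministic given $X_1 = k$) gives
\begin{align*}
\Omega_{ij}^0 = \sum_{k\sim i}\frac{w_{ik}}{\deg(i)}\,\sigma_{ik}\,\Omega_{kj}^0,
\end{align*}
which is precisely the mean-value property, hence $(\mathcal{L}\,\Omega_{\bullet j}^0)(i)=0_{d\times d}$. One should note here that $\mathbb{P}^i[T_j^0 < \infty] = 1$ by connectedness of $G$, so the expectations are all well-defined and finite (each factor is orthogonal, so partial products have operator norm $1$, and dominated convergence justifies the interchange of expectation and the conditional decomposition).

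Finally, having checked both conditions, I would conclude: $\Omega_{\bullet j}^0 \in \ell^2(V;\mathbb{R}^{d\times d})$ solves (DP) on $H = j^c$ with boundary data $\phi \equiv I_{d\times d}$ on $\partial H = \{j\}$, and by \Cref{corol:uniqueness-of-DP} this solution is unique, giving the claim. The main obstacle—really the only point requiring care—is making the strong Markov / first-step argument fully rigorous when $i = j$ is excluded but the walk from $k$ might return through $i$ repeatedly before reaching $j$; the key observation is that $T_j^0$ for the walk from $i$ is exactly $1$ plus $T_j^0$ for the shifted walk from $X_1$, with no dependence on how often $i$ is revisited, so the factorization is clean. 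Everything else is bookkeeping with the $s=0$ convention. (This is the matrix-valued, signed analogue of the standard fact that hitting probabilities are harmonic off the target set.)
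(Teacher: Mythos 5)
Your proof is correct and follows exactly the route the paper intends: the paper justifies this lemma with a one-line appeal to ``the Markov property and the uniqueness of the solution to the Dirichlet problem (cf.\ \Cref{corol:uniqueness-of-DP}),'' and your first-step decomposition plus the boundary check $\Omega_{jj}^0=I_{d\times d}$ is precisely the fleshed-out version of that argument.
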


The above lemma indicates the following explicit way of calculating $\Omega_{ij}^0$ for $i,j\in V$ using the connection Laplacian.

\begin{proposition}[Calculation of $\Omega_{ij}^0$]\label{prop:cal of omega}
    Let $(G,\sigma)$ be a connection graph and let $i,j\in V$ be fixed nodes with $i\neq j$. Write $\mathcal{L}$ as a $2\times 2$ block matrix of the form (re-order the nodes of $G$ if necessary) $\mathcal{L} = \left[\begin{array}{c|c}
        \mathcal{L}_j & \mathcal{L}_{j, j^c} \\
        \hline
        \mathcal{L}_{j^c, j} & \mathcal{L}_{j^c}
    \end{array}\right]
    ,$
    where $j^c:=V\backslash\{j\}$. Then, it follows:
        \begin{align}
            \Omega_{ij}^0 = -\left((\mathcal{L}_{j^c, j^c})^{-1}\mathcal{L}_{j^c, j}\right)(i)
        \end{align}
    where by $(\cdot)(i)$ we mean the $d\times d$ block matrix component of $(\cdot)$ corresponding to the node $i$.
    
\end{proposition}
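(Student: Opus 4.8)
The plan is to leverage \Cref{lem:Omega-dirichlet-conditions}, which tells us that $\Omega_{\bullet j}^0$ is the unique solution to the Dirichlet problem \eqref{eq:Omega-dirichlet-conditions-1}. So I will simply exhibit an explicit solution to that system, built out of the block decomposition of $\mathcal{L}$ with respect to the partition $\{j\}\sqcup j^c$, and invoke uniqueness. First, I record that $\mathcal{L}_{j^c,j^c}$ is invertible: this is exactly the content of the final sentence of \Cref{corol:uniqueness-of-DP} with the single fixed node being $j$ (note $H = V\setminus\{j\} = j^c$ satisfies the hypotheses of \Cref{prop:maximum-norm-principle} since $G$ is connected). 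Hence $(\mathcal{L}_{j^c,j^c})^{-1}$ in the claimed formula makes sense.

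Next I construct a candidate $u\in\ell^2(V;\R^{d\times d})$: set $u(j) = I_{d\times d}$ and let $u|_{j^c} := -(\mathcal{L}_{j^c,j^c})^{-1}\mathcal{L}_{j^c,j}$, where the right-hand side is regarded as an element of $\ell^2(j^c;\R^{d\times d})$ (an $(n-1)d\times d$ block column, whose $i$-th block is the value at node $i$). By construction $u|_{\partial(j^c)} = u(j) = I_{d\times d}$, so the boundary condition of \eqref{eq:Omega-dirichlet-conditions-1} holds. For the harmonicity condition, I compute $(\mathcal{L}u)|_{j^c}$ using the block form: writing $u$ as the block column with top block $u(j)=I_{d\times d}$ and bottom block $u|_{j^c}$, the bottom block of $\mathcal{L}u$ is
\[
\mathcal{L}_{j^c,j}\,u(j) + \mathcal{L}_{j^c,j^c}\,u|_{j^c} = \mathcal{L}_{j^c,j} - \mathcal{L}_{j^c,j^c}(\mathcal{L}_{j^c,j^c})^{-1}\mathcal{L}_{j^c,j} = 0,
\]
so $\mathcal{L}u|_{V\setminus\{j\}} = 0_{d\times d}$. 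Thus $u$ solves \eqref{eq:Omega-dirichlet-conditions-1}. By the uniqueness guaranteed by \Cref{lem:Omega-dirichlet-conditions} (equivalently \Cref{corol:uniqueness-of-DP}), $u = \Omega_{\bullet j}^0$, and reading off the $i$-th block gives $\Omega_{ij}^0 = -\big((\mathcal{L}_{j^c,j^c})^{-1}\mathcal{L}_{j^c,j}\big)(i)$, as claimed.

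I do not anticipate a serious obstacle here; the only points requiring care are bookkeeping ones. The main thing to get right is the identification of matrix-valued functions on $V$ with block column vectors in $\R^{nd\times d}$, so that "applying $\mathcal{L}$" and "restricting to $j^c$" interact correctly with the $2\times 2$ block partition of $\mathcal{L}$ — in particular that the $d\times d$ block $\mathcal{L}_{j^c,j}$ is itself a block matrix whose blocks $\mathcal{L}_{ij} = -w_{ij}\sigma_{ij}$ act on $u(j)$ on the left. A second minor point is to state clearly why $\mathcal{L}_{j^c,j^c}$ is invertible; this is immediate from \Cref{corol:uniqueness-of-DP}, but worth citing explicitly since the formula is vacuous otherwise. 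No heavy computation is needed beyond the one displayed cancellation above.
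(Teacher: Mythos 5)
Your proof is correct and follows essentially the same route as the paper: both arguments rest on \Cref{lem:Omega-dirichlet-conditions} together with the invertibility of $\mathcal{L}_{j^c}$ from \Cref{corol:uniqueness-of-DP}, and both reduce to the same block equation $\mathcal{L}_{j^c,j} + \mathcal{L}_{j^c}\,\Omega_{\bullet j}^0|_{j^c} = 0$. The only (cosmetic) difference is direction — you verify a candidate and invoke uniqueness, whereas the paper solves the block system for the known unique solution directly.
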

\begin{proof}
    By \Cref{lem:Omega-dirichlet-conditions}, $\Omega_{ij}^0$ is the unique solution to the Dirichlet problem \Cref{eq:Omega-dirichlet-conditions-1}.
    We partition $\Omega_{\bullet j}^0$ (regarded as a block matrix) and $\mathcal{L}$ to re-write \Cref{eq:Omega-dirichlet-conditions-1} in block matrix form:
        \begin{align}\label{eq:block-matrix-dirichlet}
            \left[\begin{array}{c|c}
                \mathcal{L}_j & \mathcal{L}_{j, j^c} \\
                \hline
                \mathcal{L}_{j^c, j} & \mathcal{L}_{j^c} 
            \end{array}\right]
\begin{bmatrix}
                I_{d\times d}\\\hline \Omega_{\bullet j}^0|_{j^c}                
            \end{bmatrix} = \begin{bmatrix}
                (\mathcal{L}\Omega_{\bullet j}^0)(j) \\\hline 0_{(n-1)d\times d}
            \end{bmatrix}.
        \end{align}
    Focusing on the lower term of the right hand side, we can carry out block matrix multiplication to write:
        \begin{align}
            \mathcal{L}_{j^c, j} I_{d\times d} + \mathcal{L}_{j^c} \Omega_{\bullet j}^0|_{j^c} = 0_{(n-1)d\times d}.
        \end{align}
    Equivalently, $\mathcal{L}_{j^c}\Omega_{\bullet j}^0|_{j^c} = -\mathcal{L}_{j^c, j}$. Using \Cref{corol:uniqueness-of-DP}, we know that $\mathcal{L}_{j^c}$ is positive definite and hence $ \Omega_{ij}^0 = -((\mathcal{L}_{j^c})^{-1}\mathcal{L}_{j^c, j})(i)$ for any $i\neq j$. This concludes the proof.
\end{proof}

A useful observation from \Cref{eq:block-matrix-dirichlet} is that
    \begin{equation}\label{rmk:schur j}
        \mathcal{L}/\mathcal{L}_{j^c}=\deg(j)\Omega_{jj}^0-\sum_{x}A_{jx}\sigma_{jx}\Omega_{xj}^0=\deg(j)(I_{d\times d} - \Omega_{jj}^1).
    \end{equation}

\subsection{Mean Path Signatures under Equivalence and Direct Sum}\label{subsec:mean path signatures equivalence}
Let $G$ be a graph and let $\sigma$ be a signature on $G$.
To specify the given signature, we let $\Omega_{ij}^{\sigma,s}$ and $\Omega_{ij}^{\sigma,s}(k)$ denote the mean path signatures with respect to $\sigma$.
The next proposition shows that the mean path signatures are essentially ``invariant'' under switching equivalence.

\begin{proposition}\label{prop:mean path equivalence}
    Assume that $\sigma\simeq\tau$ and  let $f:V\to\mathsf{O}(d)$ be a switching map.
    Then for any $i,j,k\in V$, one has that
    \[f(i)\Omega_{ij}^{\sigma,s}=\Omega_{ij}^{\tau,s}f(j)\text{ and }f(i)\Omega_{ij}^{\sigma,s}(k)=\Omega_{ij}^{\tau,s}(k)f(j).\]
\end{proposition}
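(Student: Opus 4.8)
The statement is an ``equivariance'' of mean path signatures under switching. The plan is to exploit the fact that switching equivalence is witnessed by the diagonal block matrix $F$ with blocks $f(i)$ (as in \Cref{prop:equivalence connection Laplacian}), and to reduce the identities to pathwise statements about the random walk, which is insensitive to the signature since the transition kernel $D^{-1}W$ depends only on the underlying graph $G$.

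First I would unfold the definition. For a fixed realization of the walk $(X_0,X_1,\dots)$ with $X_0=i$ and stopping time $T:=T_j^s$, the contribution to $\Omega_{ij}^{\tau,s}$ is $\prod_{\ell=1}^{T}\tau_{X_{\ell-1}X_\ell}$. The key algebraic observation is the telescoping identity coming from $f(a)\sigma_{ab}=\tau_{ab}f(b)$, i.e. $\tau_{ab}=f(a)\sigma_{ab}f(b)^{\mathrm T}$: along the path $i=X_0,X_1,\dots,X_T=j$,
\[
\prod_{\ell=1}^{T}\tau_{X_{\ell-1}X_\ell}
=\prod_{\ell=1}^{T}\bigl(f(X_{\ell-1})\sigma_{X_{\ell-1}X_\ell}f(X_\ell)^{\mathrm T}\bigr)
=f(X_0)\Bigl(\prod_{\ell=1}^{T}\sigma_{X_{\ell-1}X_\ell}\Bigr)f(X_T)^{\mathrm T}
=f(i)\Bigl(\prod_{\ell=1}^{T}\sigma_{X_{\ell-1}X_\ell}\Bigr)f(j)^{\mathrm T},
\]
where all the interior $f(X_\ell)^{\mathrm T}f(X_\ell)=I_{d\times d}$ factors cancel because $f(X_\ell)\in\mathsf O(d)$. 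This holds pathwise (and is valid when $i=j$ and $s=1$ too, since then $X_T=i=j$ and the same cancellation occurs at the endpoints). Now take $\mathbb E^i[\cdot]$: since $f(i)$ and $f(j)^{\mathrm T}$ are deterministic constants, they pull out of the expectation, giving $\Omega_{ij}^{\tau,s}=f(i)\,\Omega_{ij}^{\sigma,s}\,f(j)^{\mathrm T}$; right-multiplying by $f(j)$ yields $\Omega_{ij}^{\tau,s}f(j)=f(i)\Omega_{ij}^{\sigma,s}$, which is the first claim.

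For the conditional version I would run the identical computation but under the conditional law $\mathbb E^i[\,\cdot\mid T_j^s<T_k^s]$. The only thing to check is that the conditioning event $\{T_j^s<T_k^s\}$ is the same event regardless of whether we use $\sigma$ or $\tau$ — which is immediate, since the stopping times $T_j^s,T_k^s$ are functions of the trajectory $(X_t)$ alone and do not involve the signature at all. Hence the same pathwise factorization, integrated against the conditional measure, gives $\Omega_{ij}^{\tau,s}(k)=f(i)\,\Omega_{ij}^{\sigma,s}(k)\,f(j)^{\mathrm T}$, i.e. $f(i)\Omega_{ij}^{\sigma,s}(k)=\Omega_{ij}^{\tau,s}(k)f(j)$.

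There is essentially no serious obstacle here; the proof is a one-line telescoping identity followed by linearity of expectation. The only points that merit a sentence of care are: (a) confirming the cancellation of interior terms uses orthogonality of $f(X_\ell)$ (this is why the switching map must land in $\mathsf O(d)$, not just $\mathsf{GL}(d)$); (b) noting the degenerate/convention cases $i=j$ with $s=0$ (both sides are $f(i)f(j)^{\mathrm T}=I_{d\times d}$ after right-multiplying, consistent with $\Omega_i^0=I$) and $s=1$; and (c) observing that the random walk, its stopping times, and the conditioning events depend only on $(G,W)$, so no change-of-measure is needed when passing between $\sigma$ and $\tau$. Alternatively, one could phrase the whole argument at the level of the block-diagonal matrix $F$ and \Cref{prop:equivalence connection Laplacian} together with the explicit formula in \Cref{prop:cal of omega} for the $s=0$ case, but the pathwise argument is cleaner and handles all values of $s$ and the conditional quantities uniformly, so that is the route I would take.
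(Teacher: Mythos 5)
Your proof is correct: the pathwise telescoping identity $\prod_{\ell=1}^{T}\tau_{X_{\ell-1}X_\ell}=f(X_0)\bigl(\prod_{\ell=1}^{T}\sigma_{X_{\ell-1}X_\ell}\bigr)f(X_T)^{\mathrm T}$, combined with the fact that the walk's law and the conditioning event $\{T_j^s<T_k^s\}$ depend only on $(G,W)$ and not on the signature, is exactly the elementary argument the paper evidently intends, since it states this proposition without supplying a proof. Your handling of the degenerate case $i=j$, $s=0$ and your remark that orthogonality of $f$ is what makes the interior factors cancel are both apt.
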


Now, we consider mean path signatures for direct sums of signatures.

\begin{proposition}\label{prop:mean path direct sum}
    Let $\sigma$ and $\tau$ be signatures on $G$. Then for any $i,j,k\in V$, one has
    \[\Omega_{ij}^{\sigma\oplus \tau,s}=\Omega_{ij}^{\sigma,s}\oplus\Omega_{ij}^{\tau,s}\,\text{ and }\,\Omega_{ij}^{\sigma\oplus \tau,s}(k)=\Omega_{ij}^{\sigma,s}(k)\oplus\Omega_{ij}^{\tau,s}(k).\]
\end{proposition}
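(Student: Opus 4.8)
The plan is to exploit the fact that the random walk $(X_t)_{t\ge 0}$ underlying every mean path signature appearing in the statement is driven solely by the transition kernel $D^{-1}W$ of the underlying graph $G$, and is in particular independent of the signature placed on $G$. Consequently the three processes defining $\Omega^{\sigma,s}_{ij}$, $\Omega^{\tau,s}_{ij}$ and $\Omega^{\sigma\oplus\tau,s}_{ij}$ can be realized on one probability space with a single common trajectory, and the only difference among them is which signature is read off along the traversed edges.

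First I would record the elementary algebraic identity that block-diagonal orthogonal matrices multiply blockwise: for $O_1,O_2\in\mathsf{O}(d)$ and $O_1',O_2'\in\mathsf{O}(d')$, $(O_1\oplus O_1')(O_2\oplus O_2') = (O_1O_2)\oplus(O_1'O_2')$. Applying this inductively along a realized trajectory $X_0=i, X_1,\dots,X_{T_j^s}=j$ yields the pathwise (i.e.\ outcome-by-outcome) identity
\[\prod_{\ell=1}^{T_j^s}(\sigma\oplus\tau)_{X_{\ell-1}X_\ell} \;=\; \left(\prod_{\ell=1}^{T_j^s}\sigma_{X_{\ell-1}X_\ell}\right)\oplus\left(\prod_{\ell=1}^{T_j^s}\tau_{X_{\ell-1}X_\ell}\right),\]
where the stopping time $T_j^s = \inf\{t\ge s: X_t = j\}$ is a function of the trajectory alone and hence identical across the three models. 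Since each product is a (finite) product of orthogonal matrices it is bounded, so all expectations below are well defined.

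Next I would take expectations. The map $M\mapsto\mathbb{E}[M]$ acts entrywise, and $(A,B)\mapsto A\oplus B$ is linear in $(A,B)$ and merely rearranges entries, so expectation commutes with $\oplus$; applying $\mathbb{E}^i[\,\cdot\,]$ to the pathwise identity therefore gives $\Omega^{\sigma\oplus\tau,s}_{ij} = \Omega^{\sigma,s}_{ij}\oplus\Omega^{\tau,s}_{ij}$. For the conditioned version the same pathwise identity holds, and the conditioning event $\{T_j^s < T_k^s\}$ is again measurable with respect to the trajectory only — hence it is literally the same event in all three models — so applying $\mathbb{E}^i[\,\cdot\mid T_j^s < T_k^s]$ yields $\Omega^{\sigma\oplus\tau,s}_{ij}(k) = \Omega^{\sigma,s}_{ij}(k)\oplus\Omega^{\tau,s}_{ij}(k)$.

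There is essentially no real obstacle here: the only points requiring care are (i) that the stopping time $T_j^s$ and the conditioning event $\{T_j^s<T_k^s\}$ depend on the walk but not on the signature, so the coupled trajectories genuinely coincide across the three models (this also handles the $s=0$ versus $s=1$ convention uniformly, as $s$ only enters through $T_j^s$), and (ii) that the order of factors in the matrix products is preserved under $\oplus$, which is exactly the blockwise-multiplication identity. These are routine to verify, so the proof is short.
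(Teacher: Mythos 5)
Your proof is correct, and it is essentially the intended argument: the paper states \Cref{prop:mean path direct sum} without proof, treating it as elementary. The two points you isolate — that the walk, its stopping times, and the conditioning event depend only on the transition kernel $D^{-1}W$ and not on the signature, and that block-diagonal matrices multiply blockwise so that expectation (being entrywise) commutes with $\oplus$ — are exactly what is needed, and your write-up fills the omitted details cleanly.
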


The following result provides another characterization of absolutely inconsistent signatures. The proof can be found in \Cref{sec:missing_proofs}.

\begin{theorem}\label{thm:absolutely inconsistent via mean path signature}
    Let $(G,\sigma)$ be a connection graph. Then, $\sigma$ is absolutely inconsistent iff for all $i\in V$ one has that $I_{d\times d}-\Omega_i^1$ is \emph{positive definite}.
\end{theorem}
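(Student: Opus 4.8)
The plan is to reduce the statement to the Schur complement identity already recorded in \Cref{rmk:schur j}, namely that
$\mathcal{L}/\mathcal{L}_{j^c} = \deg(j)\,(I_{d\times d} - \Omega_j^1)$ for every $j\in V$,
combined with the standard fact that for a symmetric block matrix whose trailing block $D$ is positive definite, the whole matrix is positive definite if and only if its Schur complement relative to $D$ is positive definite. I will also use, from \Cref{corol:uniqueness-of-DP}, that $\mathcal{L}_{j^c}=\mathcal{L}_{j^c,j^c}$ is positive definite for every $j\in V$ (so that the generalized inverse appearing in the Schur complement is the ordinary inverse), and the elementary observation that, since $\mathcal{L}$ is always positive semidefinite, ``$\sigma$ absolutely inconsistent'' (i.e.\ $\mathcal{L}$ invertible) is equivalent to ``$\mathcal{L}$ positive definite.''

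For the forward direction I would assume $\sigma$ is absolutely inconsistent, so $\mathcal{L}\succ 0$. Fixing an arbitrary $j\in V$ and using the $2\times 2$ block decomposition of $\mathcal{L}$ from \Cref{prop:cal of omega} (with corner blocks $\mathcal{L}_j=\deg(j)I_{d\times d}$ and $\mathcal{L}_{j^c}$), the Schur complement criterion gives $\mathcal{L}/\mathcal{L}_{j^c}\succ 0$. By \Cref{rmk:schur j} this equals $\deg(j)(I_{d\times d}-\Omega_j^1)$, and dividing by $\deg(j)>0$ yields $I_{d\times d}-\Omega_j^1\succ 0$; since $j$ was arbitrary, this holds for all $j\in V$.

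For the converse I would fix any single $j\in V$ for which $I_{d\times d}-\Omega_j^1\succ 0$. Then $\mathcal{L}/\mathcal{L}_{j^c}=\deg(j)(I_{d\times d}-\Omega_j^1)\succ 0$, while $\mathcal{L}_{j^c}\succ 0$ by \Cref{corol:uniqueness-of-DP}; the ``if'' half of the Schur complement criterion then forces $\mathcal{L}\succ 0$, hence invertible, so $\sigma$ is absolutely inconsistent. One can equivalently argue by contradiction: a nonzero $v=(v_j,v_{j^c})\in\ker\mathcal{L}$ would satisfy $v_{j^c}=-(\mathcal{L}_{j^c})^{-1}\mathcal{L}_{j^c,j}v_j$, whence $(\mathcal{L}/\mathcal{L}_{j^c})v_j=0$; since $\mathcal{L}_{j^c}$ is invertible, $v\neq 0$ forces $v_j\neq 0$, contradicting $\mathcal{L}/\mathcal{L}_{j^c}\succ 0$.

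There is no substantial obstacle here; the content is the bookkeeping of assembling \Cref{rmk:schur j}, \Cref{corol:uniqueness-of-DP}, and the Schur complement criterion. The one point deserving care is that the converse relies on the direction of the criterion that concludes positive definiteness of the full matrix from positive definiteness of the trailing block together with that of the Schur complement; this direction requires only symmetry of $\mathcal{L}$ (not its invertibility), which is precisely what makes it legitimate to use it to establish invertibility of $\mathcal{L}$ without begging the question. As a byproduct, the argument also shows that $I_{d\times d}-\Omega_i^1$ being positive definite for one $i\in V$ already implies it for all $i\in V$.
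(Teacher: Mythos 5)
Your proof is correct, but it takes a genuinely different route from the paper's. You argue by pure linear algebra: combining the identity $\mathcal{L}/\mathcal{L}_{j^c}=\deg(j)(I_{d\times d}-\Omega_j^1)$ from \Cref{rmk:schur j} with the positive definiteness of $\mathcal{L}_{j^c}$ (\Cref{corol:uniqueness-of-DP}) and the standard Schur-complement criterion for positive definiteness of a symmetric block matrix. The paper instead works probabilistically: it observes that $\Omega_i^1$ is symmetric positive semidefinite, bounds $\|\Omega_i^1\|_2\le 1$ by Jensen's inequality applied to the expectation of a product of orthogonal matrices, characterizes the equality case as the statement that every excursion from $i$ back to $i$ carries the identity signature product, and from that explicitly constructs a nonzero element of $\ker\mathcal{L}$. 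Your argument is shorter and yields the sharper byproduct you note — positive definiteness of $I_{d\times d}-\Omega_i^1$ at a \emph{single} vertex already forces absolute inconsistency, hence the condition at one vertex is equivalent to the condition at all vertices — whereas the paper's argument is self-contained at the level of random walks and exhibits the actual obstruction (a parallel section along all cycles through $i$) when absolute inconsistency fails. Both rest on previously established facts of comparable depth: yours on the Schur-complement identity derived from the Dirichlet problem, the paper's on the equality analysis in Jensen's inequality; the one point you rightly flag — that the "if" half of the Schur criterion needs only $\mathcal{L}_{j^c}\succ 0$ and symmetry, not invertibility of $\mathcal{L}$ — is handled correctly, so there is no circularity.
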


\section{Conductance on Connection Graphs}
\label{sec:connection conductance and resistance}

In this section, we utilize the Dirichlet problem and mean path signatures established in \Cref{sec:random walks} to develop the effective conductance in connection graphs. Just as we consider matrix valued functions in the Dirichlet problem for connection graphs, we establish the framework of effective conductance \textit{matrices} for connection graphs, by analogy to the classical setting.

Inspired by the relationship between the effective conductance and the Dirichlet problem in \Cref{subsubsec:energy-perspective}, we introduce the notion of the connection conductance matrix with respect to two nodes in $G$ by utilizing a Dirichlet problem on connection graphs. Note that this approach has also been implicitly adopted by \cite{sugiyama2023kron} to define effective conductance for directed graphs and by \cite{song2019extension} to define effective conductance between disjoint sets of vertices. 

Now, given $(G,\sigma)$, consider the following Dirichlet problem for any $i,j\in V$:
\begin{equation}\label{eq: dirichlet boundary problem connection graph}
    \begin{cases}
        \mathcal{L}^\sigma\mathcal{V}_{i\to j}|_{V\backslash \{i,j\}} = 0\\
        \mathcal{V}_{i\to j}(i)=I_{d\times d}, \mathcal{V}_{i\to j}(j)=0_{d\times d}
        \end{cases}
\end{equation}
By \Cref{corol:uniqueness-of-DP}, the above equation has a unique solution $\mathcal{V}_{i\to j}:V\to \R^{d\times d}$ which we call the \emph{connection voltage function} from $i$ to $j$. We in particular care about $(\mathcal{L}^\sigma\mathcal{V}_{i\to j})(i)$ and $(\mathcal{L}^\sigma\mathcal{V}_{i\to j})(j)$ as in the usual graph case, one can recover effective conductance from either term (cf. \Cref{eq:LV}). After possibly re-enumerating the nodes of $G$ as needed, we see that
    \begin{align}\label{eq:laplacian-of-voltage}
        \mathcal{L} \mathcal{V}_{i\to j} = \begin{bmatrix}
            (\mathcal{L}\mathcal{V}_{i\to j})(i) \\
            (\mathcal{L}\mathcal{V}_{i\to j})(j) \\
            \hline 0_{(n-2)d\times d}
        \end{bmatrix}.
    \end{align}
We can expand the preceding equation by partitioning the connection Laplacian $\mathcal{L}$ according to the nodes $i,j$, in the style of the proof of \Cref{prop:cal of omega}, as follows:
\[\left[\begin{array}{c|c}
    \mathcal{L}_{\{i,j\},\{i,j\}}   & \mathcal{L}_{\{i,j\},\{i,j\}^c}   \\
    \hline
    \mathcal{L}_{\{i,j\}^c,\{i,j\}} & \mathcal{L}_{\{i,j\}^c,\{i,j\}^c} \\
\end{array}\right]
\begin{bmatrix}
        I_{d\times d}  \\
        0_{d\times d}  \\
        \hline
        \mathcal{V}_{i\to j}|_{\{i,j\}^c}
    \end{bmatrix} = \begin{bmatrix}
        (\mathcal{L}\mathcal{V}_{i\to j})(i) \\
        (\mathcal{L}\mathcal{V}_{i\to j})(j) \\
        0_{(n-2)d\times d}
    \end{bmatrix}.\]
Then it follows from a straightforward block matrix calculation that
\[
    \left(\mathcal{L}_{\{i,j\}} - \mathcal{L}_{\{i,j\},\{i,j\}^c}            \mathcal{L}_{\{i,j\}^c}^{-1}\mathcal{L}_{\{i,j\}^c,\{i,j\}} \right)\begin{bmatrix}
        I_{d\times d} \\
        0_{d\times d}
    \end{bmatrix} = \mathcal{L}/\mathcal{L}_{\{i,j\}^c} \begin{bmatrix}
        I_{d\times d} \\
        0_{d\times d}
    \end{bmatrix} = \begin{bmatrix} (\mathcal{L}\mathcal{V}_{i\to j})(i) \\
        (\mathcal{L}\mathcal{V}_{i\to j})(j)
    \end{bmatrix}.
\]
Note that the invertibility of $\mathcal{L}_{\{i,j\}^c}$ follows from \Cref{corol:uniqueness-of-DP}.

Based on the discussion above, we now define the connection conductance matrix.

\begin{definition}[Connection Conductance Matrix]\label{defn:connection-conductance-matrix}
    Let $(G,\sigma)$ be a connection graph and let $i,j\in V$ be any two nodes. Then the connection conductance matrix $\mathcal{C}^\sigma(i,j)\in\mathbb{R}^{2d\times 2d}$ is given by the Schur complement of $\mathcal{L}$ with respect to $\{i,j\}^c$, as follows:
        \[\mathcal{C}^\sigma(i,j) = \mathcal{L}/\mathcal{L}_{\{i,j\}^c}.\]
    where $\mathcal{L}/\mathcal{L}_{\{i,j\}^c} = \mathcal{L}/\mathcal{L}_{ \{i,j\}^c, \{i,j\}^c}$ by convention. For later use, and with a small abuse of notation, we explicitly write the $d\times d$ blocks of $\mathcal{C}^\sigma(i,j)$ in the form
        $\mathcal{C}^\sigma(i,j) = \begin{bmatrix}
            \mathcal{C}_{ii}^\sigma & \mathcal{C}_{ij}^\sigma \\
            \mathcal{C}_{ji}^\sigma & \mathcal{C}_{jj}^\sigma\end{bmatrix}.$
\end{definition}

Note that from the calculation preceding \Cref{defn:connection-conductance-matrix}, it follows that
    \begin{align}\label{eq:conductance-laplacian-voltage}\mathcal{C}^\sigma(i,j) = \mathcal{L}/\mathcal{L}_{\{i,j\}^c} = \left[\begin{array}{c|c}
        (\mathcal{L}\mathcal{V}_{i\rightarrow j})(i) & (\mathcal{L}\mathcal{V}_{j\rightarrow i})(i) \\
        \hline
        (\mathcal{L}\mathcal{V}_{i\rightarrow j})(j) & (\mathcal{L}\mathcal{V}_{j\rightarrow i})(j)
    \end{array}\right].\end{align}

\begin{remark}[Conductance matrices for classical graphs]\label{rmk:trivial-conductance-matrix}
     If $(G,\sigma)$ is a connection graph with the trivial one-dimensional signature $\sigma=\iota^1$, then $\mathcal{L} = L$ and for any fixed $i,j\in V$, one has that
        $\mathcal{C}^\sigma(i,j) = \begin{bmatrix}
            c_{ij}  & -c_{ij} \\
            -c_{ij} & c_{ij}
        \end{bmatrix}$,
    where $c_{ij}$ is the effective conductance between $i,j$ (as defined in \Cref{subsec:effective-resistance}). This is a well-known result; see, e.g., \cite{schild2018schur}.
\end{remark}

Following the continuity of Schur complement, we note that the conductance matrix is continuous with respect to change of signatures. 
\begin{proposition}\label{prop: continuity of conductance matrix}
    When the graph $G$ is connected, the conductance matrix $\mathcal{C}^\sigma(i,j)$ is continuous with respect to change of signatures $\sigma$.
\end{proposition}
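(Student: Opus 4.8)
The plan is to realize the connection conductance matrix $\mathcal{C}^\sigma(i,j)$ as the value of a continuous function of the signature $\sigma$, so that continuity follows from composition of continuous maps. The natural parameter space is the product $\prod_{\{i,j\}\in E}\mathsf{O}(d)$ of oriented-edge signatures (constrained by $\sigma_{ji}=\sigma_{ij}^\mathrm{T}$), which inherits a topology as a subset of a finite product of Euclidean matrix spaces. First I would observe that the entry-wise definition \Cref{def:connection-laplacian} shows that $\sigma\mapsto\mathcal{L}^\sigma$ is a continuous (indeed, affine in each block $\sigma_{ij}$, hence polynomial) map from this parameter space into $\mathbb{R}^{nd\times nd}$; the degrees $\deg(i)$ and weights $w_{ij}$ are fixed since the underlying graph $G$ is fixed.

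Next I would record that the principal submatrix map $\mathcal{L}^\sigma\mapsto \mathcal{L}^\sigma_{\{i,j\}^c,\{i,j\}^c}$ is obviously continuous (it just reads off a fixed set of entries), and similarly for the other three blocks $\mathcal{L}_{\{i,j\},\{i,j\}}$, $\mathcal{L}_{\{i,j\},\{i,j\}^c}$, $\mathcal{L}_{\{i,j\}^c,\{i,j\}}$ appearing in the Schur complement $\mathcal{C}^\sigma(i,j)=\mathcal{L}/\mathcal{L}_{\{i,j\}^c}=\mathcal{L}_{\{i,j\}}-\mathcal{L}_{\{i,j\},\{i,j\}^c}\mathcal{L}_{\{i,j\}^c}^{-1}\mathcal{L}_{\{i,j\}^c,\{i,j\}}$. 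The one nontrivial ingredient is that the inverse of $\mathcal{L}^\sigma_{\{i,j\}^c,\{i,j\}^c}$ exists and depends continuously on $\sigma$. Existence is exactly \Cref{corol:uniqueness-of-DP}: taking $H=V\setminus\{i\}\supseteq$ the relevant index set and using that $G$ is connected (so $H$ is a proper nonempty subset with nonempty boundary), the maximum norm principle forces $\mathcal{L}_{\{i,j\}^c,\{i,j\}^c}$ to be positive definite, hence invertible, for \emph{every} signature $\sigma$. (One can alternatively note $\mathcal{L}_{\{i,j\}^c,\{i,j\}^c}$ is a principal submatrix of the PSD matrix $\mathcal{L}^\sigma$ with no nonzero vector in its kernel, again by harmonicity plus connectedness.) Since matrix inversion is continuous on the open set of invertible matrices, $\sigma\mapsto(\mathcal{L}^\sigma_{\{i,j\}^c,\{i,j\}^c})^{-1}$ is continuous.

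Finally, $\mathcal{C}^\sigma(i,j)$ is obtained from the four blocks and this inverse by matrix multiplication and subtraction, all of which are continuous operations; hence $\sigma\mapsto\mathcal{C}^\sigma(i,j)$ is continuous. I would state the conclusion for both natural notions of "change of signature" — continuity on the full parameter space $\prod_{\{i,j\}\in E}\mathsf{O}(d)$, and as a special case continuity along any continuous one-parameter family $\theta\mapsto\sigma^\theta$ (e.g., the elementary cycle signatures of \Cref{ex:elementary-cycle-sig}), which is what contrasts with the discontinuity of the old $R^\sigma$ exhibited in \Cref{ex:discontinuous-connection-resistance}. The only "obstacle" is the uniform invertibility of the submatrix across all signatures, and that is already handed to us by \Cref{corol:uniqueness-of-DP}; everything else is the standard fact that Schur complements vary continuously wherever the pivot block stays invertible. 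I would keep the writeup to a few lines, citing \Cref{def:connection-laplacian} for continuity of $\sigma\mapsto\mathcal{L}^\sigma$ and \Cref{corol:uniqueness-of-DP} for the pivot block.
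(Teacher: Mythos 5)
Your proposal is correct and follows exactly the route the paper intends: the paper's entire justification is the one-line remark that the claim "follows from the continuity of the Schur complement," which is precisely your observation that $\sigma\mapsto\mathcal{L}^\sigma$ is polynomial in the entries of $\sigma$, that the pivot block $\mathcal{L}^\sigma_{\{i,j\}^c}$ is invertible for every signature by \Cref{corol:uniqueness-of-DP} (using connectedness of $G$), and that matrix inversion is continuous on the open set of invertible matrices. Your writeup simply supplies the details the paper leaves implicit; there is no gap and no divergence in approach.
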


\paragraph{Conductance matrix vs mean path signatures} Before moving on to more in-depth discussions of its properties, we establish some straightforward relationships between blocks in $\mathcal{C}^\sigma(i,j)$ and the mean path signatures. We first note that $C_{ii}^\sigma$ is invertible from \Cref{lm:invertible Cii} which we will introduce later.

\begin{lemma}\label{prop: mean path signature vs schur}
    For any $i\neq j\in V$, one has that
    $\Omega_{ij}^0 = -(\mathcal{C}_{ii}^\sigma)^{-1}\mathcal{C}_{ij}^\sigma.$
\end{lemma}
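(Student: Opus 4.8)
The identity $\Omega_{ij}^0 = -(\mathcal{C}_{ii}^\sigma)^{-1}\mathcal{C}_{ij}^\sigma$ relates a mean path signature to blocks of the $2d\times 2d$ Schur complement $\mathcal{C}^\sigma(i,j) = \mathcal{L}/\mathcal{L}_{\{i,j\}^c}$. The natural route is to connect both sides back to a common quantity, namely the Schur complement of $\mathcal{L}$ with respect to the larger index set $j^c = V\setminus\{j\}$, and to invoke the quotient identity \Cref{eq:quotient} together with \Cref{prop:cal of omega}. First I would recall from \Cref{prop:cal of omega} that $\Omega_{ij}^0 = -\big((\mathcal{L}_{j^c,j^c})^{-1}\mathcal{L}_{j^c,j}\big)(i)$, i.e., $\Omega_{ij}^0$ is (up to sign) the $i$-block of the solution to the block linear system associated to the Dirichlet problem based at $j$.

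\textbf{Key steps.} The plan is the following. (1) Write $\mathcal{L}$ as a block matrix partitioned according to the ordered triple of index sets $\big(\{i\},\{j\},\{i,j\}^c\big)$, so that the $\{i,j\}^c$-block $\mathcal{L}_{\{i,j\}^c,\{i,j\}^c}$ is itself a principal submatrix sitting inside $\mathcal{L}_{j^c,j^c}$ (which is indexed by $\{i\}\cup\{i,j\}^c$). (2) Apply the quotient identity: with the outer matrix $M=\mathcal{L}$, the middle block $D = \mathcal{L}_{j^c,j^c}$ (invertible by \Cref{corol:uniqueness-of-DP}), and the inner block $H = \mathcal{L}_{\{i,j\}^c,\{i,j\}^c}$ (also invertible by the same corollary applied to $(\{i,j\}^c)^c$, but here it suffices that it is a principal submatrix of the positive definite $\mathcal{L}_{j^c,j^c}$), we get $\mathcal{L}/\mathcal{L}_{j^c,j^c} = \big(\mathcal{L}/H\big)\big/\big(\mathcal{L}_{j^c,j^c}/H\big)$. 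The left side is the $1\times 1$ (in $d\times d$ blocks) Schur complement $\mathcal{L}/\mathcal{L}_{j^c,j^c} = \mathcal{L}/\mathcal{L}_{j^c}$, and by \Cref{rmk:schur j} this equals $\deg(j)(I_{d\times d}-\Omega_{jj}^1)$ — but more usefully, $\mathcal{L}/H = \mathcal{C}^\sigma(i,j)$ by definition. (3) So the quotient identity reads
\begin{align}
\mathcal{L}/\mathcal{L}_{j^c} = \mathcal{C}^\sigma(i,j)\big/\big(\mathcal{L}_{j^c,j^c}/H\big),
\end{align}
where $\mathcal{L}_{j^c,j^c}/H$ is, by the quotient-identity bookkeeping, precisely the $i,i$-block of $\mathcal{C}^\sigma(i,j)$, i.e., $\mathcal{C}_{ii}^\sigma$ (since within $\mathcal{L}_{j^c,j^c}$ we Schur-complement out all coordinates except those of node $i$). (4) Now unwind: $\mathcal{C}^\sigma(i,j)/\mathcal{C}_{ii}^\sigma = \mathcal{C}_{jj}^\sigma - \mathcal{C}_{ji}^\sigma (\mathcal{C}_{ii}^\sigma)^{-1}\mathcal{C}_{ij}^\sigma$, and this must equal $\mathcal{L}/\mathcal{L}_{j^c}$. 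On the other hand, directly Schur-complementing $\mathcal{L}$ with respect to $j^c$ in the $\big(\{i\},\{j^c\}\big)$ partition and then noting that $(\mathcal{L}_{j^c,j^c})^{-1}\mathcal{L}_{j^c,j}$ has $i$-block $-\Omega_{ij}^0$, I can identify the $i$-to-$j$ coupling term and match it block-by-block to conclude $(\mathcal{C}_{ii}^\sigma)^{-1}\mathcal{C}_{ij}^\sigma = -\Omega_{ij}^0$.

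Actually, a cleaner variant avoids step (4)'s matching and is the one I'd ultimately write: use \Cref{eq:conductance-laplacian-voltage}, which says $\mathcal{C}_{ii}^\sigma = (\mathcal{L}\mathcal{V}_{i\to j})(i)$ and $\mathcal{C}_{ij}^\sigma = (\mathcal{L}\mathcal{V}_{j\to i})(i)$, together with the Dirichlet characterization. Consider instead the solution $g:V\to\R^{d\times d}$ of the single-node Dirichlet problem $g(j)=I_{d\times d}$, $\mathcal{L}g|_{j^c}=0$; by \Cref{lem:Omega-dirichlet-conditions}, $g=\Omega_{\bullet j}^0$, so $g(i)=\Omega_{ij}^0$. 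Compare $g$ with the connection voltage functions: $\mathcal{V}_{i\to j}$ and $\mathcal{V}_{j\to i}$ are also harmonic off $\{i,j\}$, and by linearity $g - \Omega_{ij}^0\,\mathcal{V}_{i\to j}$ (note the right multiplication, matching \Cref{prop:mean path equivalence}-style conventions — actually $g$ and $\mathcal{V}$ here take values acting on the left, so one combines $\mathcal{V}_{i\to j}$ and $\mathcal{V}_{j\to i}$ with appropriate $d\times d$ coefficients on the right) vanishes at $i$; then one checks it equals $\mathcal{V}_{j\to i}$ times the right coefficient that makes boundary values at $j$ agree, forcing $g = \mathcal{V}_{j\to i} + \Omega_{ij}^0\,\mathcal{V}_{i\to j}$ (boundary values: at $i$, $g(i)=\Omega_{ij}^0 = 0 + \Omega_{ij}^0\cdot I$; at $j$, $g(j)=I = I + 0$). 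Applying $\mathcal{L}$ and evaluating at $i$: $0 = (\mathcal{L}\mathcal{V}_{j\to i})(i) + \mathcal{C}_{ii}^\sigma\,\Omega_{ij}^0$... wait, $\Omega_{ij}^0$ multiplies $\mathcal{V}_{i\to j}$ on the left, and $\mathcal{L}$ acts on the left too, so $(\mathcal{L}(\Omega_{ij}^0\mathcal{V}_{i\to j}))(i) = \Omega_{ij}^0 (\mathcal{L}\mathcal{V}_{i\to j})(i)$? No — $\mathcal{L}$ acts by $(\mathcal{L}h)(x) = \sum_{y\sim x} w_{xy}(h(x)-\sigma_{xy}h(y))$, which is left-linear in $h$'s values but the scalar/matrix $\Omega_{ij}^0$ on the \emph{left} of $\mathcal{V}_{i\to j}(\cdot)$ does \emph{not} commute through. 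So I should instead put the mean-path coefficient on the \emph{right}: since $(\mathcal{L}h)(x)$ is right-$\R^{d\times d}$-linear (multiplying all $h(y)$ on the right by a fixed matrix $A$ gives $(\mathcal{L}h)(x)A$), I write $g = \mathcal{V}_{j\to i} + \mathcal{V}_{i\to j}\,\Omega_{ij}^0$ — checking boundary values: $g(i) = 0 + I\cdot\Omega_{ij}^0 = \Omega_{ij}^0$ \checkmark, $g(j) = I + 0 = I$ \checkmark, and both are harmonic off $\{i,j\}$, so by uniqueness this is $g$. Apply $\mathcal{L}$, evaluate at $i$: $0 = (\mathcal{L}g)(i) = (\mathcal{L}\mathcal{V}_{j\to i})(i) + (\mathcal{L}\mathcal{V}_{i\to j})(i)\,\Omega_{ij}^0 = \mathcal{C}_{ij}^\sigma + \mathcal{C}_{ii}^\sigma\,\Omega_{ij}^0$, hence $\Omega_{ij}^0 = -(\mathcal{C}_{ii}^\sigma)^{-1}\mathcal{C}_{ij}^\sigma$, using that $\mathcal{C}_{ii}^\sigma$ is invertible (\Cref{lm:invertible Cii}).

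\textbf{Main obstacle.} The delicate point is bookkeeping the left-versus-right action of $d\times d$ matrices: the connection Laplacian is right-module-linear on $\ell^2(V;\R^{d\times d})$ but \emph{not} left-module-linear (the $\sigma_{xy}$ get in the way), so the superposition that expresses $\Omega_{\bullet j}^0$ in terms of the two voltage functions must use right multiplication by the constant matrix $\Omega_{ij}^0$. Getting this convention right — and making sure it is consistent with the index conventions in \Cref{eq:conductance-laplacian-voltage} (where the first block-column of $\mathcal{C}^\sigma(i,j)$ is built from $\mathcal{V}_{i\to j}$ and the second from $\mathcal{V}_{j\to i}$) — is the only real subtlety; once it is fixed, the proof is a two-line superposition-plus-uniqueness argument. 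The fallback quotient-identity route is available if one prefers a purely matrix-algebraic derivation, but it requires the extra step of identifying $\mathcal{L}_{j^c,j^c}/\mathcal{L}_{\{i,j\}^c,\{i,j\}^c}$ with $\mathcal{C}_{ii}^\sigma$, which is itself a small quotient-identity computation.
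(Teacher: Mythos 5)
Your main argument (the superposition one you settle on at the end) is correct, and it reaches the identity by a route that is genuinely different in presentation from the paper's, though it rests on the same ingredients. The paper writes the linear system $\mathcal{L}\Omega_{\bullet j}^0 = (\text{source at }j)$ from \Cref{lem:Omega-dirichlet-conditions} in block form partitioned by $\{i,j\}$ versus $\{i,j\}^c$, eliminates the interior block, and reads off the first block row of $\mathcal{C}^\sigma(i,j)\left[\begin{smallmatrix}\Omega_{ij}^0\\ I\end{smallmatrix}\right]=\left[\begin{smallmatrix}0\\ *\end{smallmatrix}\right]$; you instead observe that $\Omega_{\bullet j}^0$ is the unique solution (via \Cref{corol:uniqueness-of-DP}) of the two-point Dirichlet problem with boundary data $\Omega_{ij}^0$ at $i$ and $I_{d\times d}$ at $j$, express it as $\mathcal{V}_{j\to i}+\mathcal{V}_{i\to j}\,\Omega_{ij}^0$, and apply $\mathcal{L}$ at $i$ using \Cref{eq:conductance-laplacian-voltage}. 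The two are really the same Gaussian elimination in different clothing — your superposition identity is exactly the statement that the columns of the reduced system are the voltage functions — but yours makes the role of the voltage functions and the right-$\R^{d\times d}$-module structure of $\ell^2(V;\R^{d\times d})$ explicit, and you correctly identify and resolve the one genuine subtlety, namely that the constant coefficient must multiply $\mathcal{V}_{i\to j}$ on the \emph{right} because $\mathcal{L}$ commutes with right but not left multiplication. Both proofs quote the invertibility of $\mathcal{C}_{ii}^\sigma$ from \Cref{lm:invertible Cii} as a forward reference, so you are no worse off there than the paper. Your "fallback" quotient-identity sketch is the only shaky part — step (4) ends with an unspecified "matching" of blocks, and the quotient identity \Cref{eq:quotient} by itself only pins down $\mathcal{C}_{jj}^\sigma-\mathcal{C}_{ji}^\sigma(\mathcal{C}_{ii}^\sigma)^{-1}\mathcal{C}_{ij}^\sigma$ rather than the individual coupling $(\mathcal{C}_{ii}^\sigma)^{-1}\mathcal{C}_{ij}^\sigma$ — but since you discard that route in favor of the complete argument, nothing is missing.
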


\begin{proof}
    Without loss of generality, we assume that $i=1$ and $j=2$.
    Then, by \Cref{lem:Omega-dirichlet-conditions}, we have that $\Omega_{ij}^0$ satisfies the following equation:
    \[\left[\begin{array}{c|c}
        \mathcal{L}_{\{1,2\},\{1,2\}}   & \mathcal{L}_{\{1,2\},\{1,2\}^c}   \\
        \hline
        \mathcal{L}_{\{1,2\}^c,\{1,2\}} & \mathcal{L}_{\{1,2\}^c,\{1,2\}^c} \\
    \end{array}\right]
     \begin{bmatrix}
            \Omega_{12}^0 \\
            I_{d\times d} \\
            \hline
            \Omega_{32}^0 \\
            \vdots        \\
            \Omega_{n2}^0
        \end{bmatrix} = \begin{bmatrix}
            0_{d\times d}                        \\
            (\mathcal{L}\Omega_{\bullet,i}^0)(2) \\
            \hline 0_{d\times d}                 \\
            \vdots                               \\
            0_{d\times d}
        \end{bmatrix}.\]
    Therefore, one has that
    \[\mathcal{L}/\mathcal{L}_{\{1,2\}^c}\begin{bmatrix}
            \Omega_{ij}^0 \\
            I_{d\times d}
        \end{bmatrix}=\begin{bmatrix}
            0_{d\times d}                        \\
            (\mathcal{L}\Omega_{\bullet,i}^0)(2)
        \end{bmatrix},\text{ and hence }
    \begin{bmatrix}
            \mathcal{C}_{ii}^\sigma & \mathcal{C}_{ij}^\sigma \\
            \mathcal{C}_{ji}^\sigma & \mathcal{C}_{jj}^\sigma\end{bmatrix}\begin{bmatrix}
            \Omega_{ij}^0 \\
            I_{d\times d}
        \end{bmatrix}=\begin{bmatrix}
            0_{d\times d}                        \\
            (\mathcal{L}\Omega_{\bullet,i}^0)(2)
        \end{bmatrix}.\]
    This implies that
    $\mathcal{C}_{ii}^\sigma\Omega_{ij}^0+\mathcal{C}_{ij}^\sigma=0_{d\times d}$
    and thus we conclude the proof.
\end{proof}

It turns out that the Schur complements of $\mathcal{C}^\sigma(i,j)$ are related to the mean path signature as well.

\begin{lemma}\label{lm: schur complement vs mean path signature}
    For the conductance matrix $\mathcal{C}^\sigma(i,j) = \begin{bmatrix}
            \mathcal{C}_{ii}^\sigma & \mathcal{C}_{ij}^\sigma \\
            \mathcal{C}_{ji}^\sigma & \mathcal{C}_{jj}^\sigma\end{bmatrix}$, one has the following results regarding Schur complements:
    \[\mathcal{C}^\sigma(i,j)/\mathcal{C}_{jj}^\sigma = \deg(i)(I_{d\times d} - \Omega_{i}^1) \quad\text{ and }\quad\mathcal{C}^\sigma(i,j)/\mathcal{C}_{ii}^\sigma = \deg(j)(I_{d\times d} - \Omega_{j}^1).\]
\end{lemma}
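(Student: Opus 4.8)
The strategy is to combine the quotient identity for Schur complements (Equation~\eqref{eq:quotient}) with the identity for $\mathcal{L}/\mathcal{L}_{j^c}$ already recorded in Equation~\eqref{rmk:schur j}. The key observation is that $\mathcal{C}^\sigma(i,j) = \mathcal{L}/\mathcal{L}_{\{i,j\}^c}$, and the submatrix $\mathcal{C}_{jj}^\sigma$ of $\mathcal{C}^\sigma(i,j)$ is exactly the $d\times d$ block of $\mathcal{L}/\mathcal{L}_{\{i,j\}^c}$ corresponding to node $j$. Writing $\{i,j\}^c = j^c \setminus \{i\}$ (where $j^c = V\setminus\{j\}$) and viewing $\mathcal{L}$ blocked first along $j^c$ versus $\{j\}$, and then $\mathcal{L}_{j^c}$ blocked along $\{i,j\}^c$ versus $\{i\}$, the quotient identity gives
\[
    \mathcal{L}/\mathcal{L}_{\{i,j\}^c} = \bigl(\mathcal{L}/\mathcal{L}_{j^c}\bigr) \big/ \bigl(\mathcal{L}_{j^c}/\mathcal{L}_{\{i,j\}^c, \{i,j\}^c}\bigr).
\]
Here the denominator on the right is precisely $\mathcal{C}^\sigma(i,j)/\mathcal{C}_{jj}^\sigma$ once one identifies $\mathcal{C}_{jj}^\sigma$ with the $\{j\}$-block; but $\mathcal{L}/\mathcal{L}_{j^c}$ is a $d\times d$ matrix, and by Equation~\eqref{rmk:schur j} it equals $\deg(j)(I_{d\times d} - \Omega_{jj}^1)$. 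That is not quite the claim — I need to be careful about which Schur complement I am taking.

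Let me reorganize. To get $\mathcal{C}^\sigma(i,j)/\mathcal{C}_{jj}^\sigma$, I should apply the quotient identity in the direction that eliminates node $j$ last: view $\mathcal{L}$ blocked along $i^c$ versus $\{i\}$, then block $\mathcal{L}_{i^c}$ along $\{i,j\}^c$ versus $\{j\}$. The quotient identity then yields $\mathcal{L}/\mathcal{L}_{\{i,j\}^c} / (\mathcal{L}_{i^c}/\mathcal{L}_{\{i,j\}^c})$... this still does not directly land on the stated identity. The cleanest route is: first form $\mathcal{C}^\sigma(i,j) = \mathcal{L}/\mathcal{L}_{\{i,j\}^c}$, then observe $\mathcal{C}^\sigma(i,j)/\mathcal{C}_{jj}^\sigma = (\mathcal{L}/\mathcal{L}_{\{i,j\}^c})/\mathcal{C}_{jj}^\sigma = \mathcal{L}/\mathcal{L}_{j^c}$ by the quotient identity (eliminating $\{i,j\}^c$ then $\{j\}$ is the same as eliminating $j^c = \{i,j\}^c \cup \{j\}$ all at once). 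Then apply Equation~\eqref{rmk:schur j}, which states $\mathcal{L}/\mathcal{L}_{j^c} = \deg(j)(I_{d\times d} - \Omega_{jj}^1) = \deg(j)(I_{d\times d} - \Omega_j^1)$. Wait — this gives $\deg(j)(I - \Omega_j^1)$, not $\deg(i)(I-\Omega_i^1)$. So I must have the roles of the blocks reversed: $\mathcal{C}_{jj}^\sigma$ is the block at node $j$, and eliminating it from $\mathcal{C}^\sigma(i,j)$ leaves the block at node $i$, i.e. $\mathcal{C}^\sigma(i,j)/\mathcal{C}_{jj}^\sigma = \mathcal{L}/\mathcal{L}_{i^c}$, which by the node-$i$ analogue of Equation~\eqref{rmk:schur j} equals $\deg(i)(I_{d\times d} - \Omega_i^1)$. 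The symmetric statement follows by swapping $i$ and $j$.

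So the proof is: (1) recall that for any single node $k$, Equation~\eqref{rmk:schur j} gives $\mathcal{L}/\mathcal{L}_{k^c} = \deg(k)(I_{d\times d} - \Omega_k^1)$; (2) apply the quotient identity~\eqref{eq:quotient} with $M = \mathcal{L}$, the inner block being $\mathcal{L}_{i^c}$ (square, invertible by Corollary~\ref{corol:uniqueness-of-DP}) further subdivided with its own invertible sub-block $\mathcal{L}_{\{i,j\}^c,\{i,j\}^c}$, to conclude $\mathcal{L}/\mathcal{L}_{i^c} = (\mathcal{L}/\mathcal{L}_{\{i,j\}^c})/(\mathcal{L}_{i^c}/\mathcal{L}_{\{i,j\}^c,\{i,j\}^c})$; (3) identify $\mathcal{L}/\mathcal{L}_{\{i,j\}^c} = \mathcal{C}^\sigma(i,j)$ and $\mathcal{L}_{i^c}/\mathcal{L}_{\{i,j\}^c,\{i,j\}^c} = \mathcal{C}_{jj}^\sigma$ (this is the block-indexing bookkeeping — that the $\{j\}$-block of $\mathcal{C}^\sigma(i,j)$ coincides with this Schur complement of $\mathcal{L}_{i^c}$, which holds because $\mathcal{C}^\sigma(i,j)$ is itself a Schur complement of $\mathcal{L}$ and Schur complementation is "associative" over disjoint index sets); (4) combine to get $\mathcal{C}^\sigma(i,j)/\mathcal{C}_{jj}^\sigma = \deg(i)(I_{d\times d} - \Omega_i^1)$, and swap $i\leftrightarrow j$ for the other equation. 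The main obstacle is purely the block-indexing bookkeeping in step (3): one must verify carefully that the sub-block denoted $\mathcal{C}_{jj}^\sigma$ in Definition~\ref{defn:connection-conductance-matrix} is exactly the matrix appearing as the inner quotient $\mathcal{L}_{i^c}/\mathcal{L}_{\{i,j\}^c,\{i,j\}^c}$, and that the hypotheses of the quotient identity (invertibility of both $\mathcal{L}_{i^c}$ and $\mathcal{L}_{\{i,j\}^c,\{i,j\}^c}$) are met — both of which follow from Corollary~\ref{corol:uniqueness-of-DP}.
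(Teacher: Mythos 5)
Your final argument (steps (1)--(4)) is correct and is exactly the paper's proof: apply the quotient identity $\mathcal{L}/\mathcal{L}_{i^c} = (\mathcal{L}/\mathcal{L}_{\{i,j\}^c})/(\mathcal{L}_{i^c}/\mathcal{L}_{\{i,j\}^c})$, identify the inner quotient with $\mathcal{C}_{jj}^\sigma$ as a block of $\mathcal{C}^\sigma(i,j)$, and invoke Equation~\eqref{rmk:schur j} at node $i$. The false starts in the middle are self-corrected, and the invertibility hypotheses you cite via \Cref{corol:uniqueness-of-DP} are the right ones.
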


\begin{proof}
    This follows from the quotient formula for Schur complements  (cf. \Cref{eq:quotient}) and \Cref{rmk:schur j}:
    \begin{align*}
        \mathcal{C}^\sigma(i,j)/\mathcal{C}_{jj}^\sigma = \mathcal{L}/\mathcal{L}_{i^c} = \deg(i)(I_{d\times d} - \Omega_{i}^1)
    \end{align*}
\end{proof}

\subsection{Conductance Matrix under Equivalence and Direct Sum}\label{subsec: conductance matrix under equivalence and direct sum}

In this subsection we will establish some properties of the conductance matrix under equivalence and direct sum of signatures. Proofs are elementary and can be found in \Cref{sec:missing_proofs}.

\begin{proposition}[Conductance matrix under equivalence]\label{prop:conductance matrix under equivalence}
    Assume that $\sigma\simeq\tau$ and  let $f:V\to\mathsf{O}(d)$ be a switching map.
    Then for any $i,j\in V$, one has that
    \[F_{ij} \mathcal{C}^\sigma(i,j)=\mathcal{C}^\tau(i,j)F_{ij},\,\text{ where } \,F_{ij}:=\begin{bmatrix}
        f(i)          & 0_{d\times d} \\
        0_{d\times d} & f(j)
    \end{bmatrix}.\]
\end{proposition}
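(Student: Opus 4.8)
The plan is to trace the definition of the conductance matrix as a Schur complement through the intertwining relation of Laplacians under switching equivalence (\Cref{prop:equivalence connection Laplacian}), exploiting the fact that the switching block matrix $F$ is orthogonal and block-diagonal, so it commutes with the operation of taking Schur complements relative to a coordinate subset.

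First I would recall from \Cref{prop:equivalence connection Laplacian} that $F\mathcal{L}^\sigma = \mathcal{L}^\tau F$, where $F$ is the $nd\times nd$ block-diagonal matrix with $i$-th block $f(i)\in\mathsf{O}(d)$. Since $F$ is orthogonal, $F^{-1} = F^\mathrm{T}$, so $\mathcal{L}^\tau = F\mathcal{L}^\sigma F^\mathrm{T}$. Now fix $i,j$ and partition the vertex set into $S=\{i,j\}$ and $S^c = \{i,j\}^c$; this induces a conformal $2\times 2$ block partition of $\mathcal{L}^\sigma$, $\mathcal{L}^\tau$, and $F$. Crucially, because $F$ is block-diagonal \emph{with respect to the vertex partition} (each vertex gets its own orthogonal block), $F = \begin{bmatrix} F_S & 0 \\ 0 & F_{S^c}\end{bmatrix}$ where $F_S = F_{ij}$ (the matrix in the statement) and $F_{S^c}$ is the analogous block-diagonal orthogonal matrix over $S^c$. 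Writing out $\mathcal{L}^\tau = F\mathcal{L}^\sigma F^\mathrm{T}$ blockwise gives $\mathcal{L}^\tau_{SS} = F_S\mathcal{L}^\sigma_{SS}F_S^\mathrm{T}$, $\mathcal{L}^\tau_{S S^c} = F_S \mathcal{L}^\sigma_{SS^c}F_{S^c}^\mathrm{T}$, $\mathcal{L}^\tau_{S^cS} = F_{S^c}\mathcal{L}^\sigma_{S^cS}F_S^\mathrm{T}$, and $\mathcal{L}^\tau_{S^cS^c} = F_{S^c}\mathcal{L}^\sigma_{S^cS^c}F_{S^c}^\mathrm{T}$.

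Next I would compute the Schur complement directly. Using that $\mathcal{L}^\sigma_{S^cS^c}$ is invertible (by \Cref{corol:uniqueness-of-DP}) and so is its conjugate $\mathcal{L}^\tau_{S^cS^c}$, with $(\mathcal{L}^\tau_{S^cS^c})^{-1} = F_{S^c}(\mathcal{L}^\sigma_{S^cS^c})^{-1}F_{S^c}^\mathrm{T}$, I substitute into the Schur complement formula:
\begin{align*}
\mathcal{C}^\tau(i,j) &= \mathcal{L}^\tau_{SS} - \mathcal{L}^\tau_{SS^c}(\mathcal{L}^\tau_{S^cS^c})^{-1}\mathcal{L}^\tau_{S^cS} \\
&= F_S\mathcal{L}^\sigma_{SS}F_S^\mathrm{T} - F_S\mathcal{L}^\sigma_{SS^c}F_{S^c}^\mathrm{T}\cdot F_{S^c}(\mathcal{L}^\sigma_{S^cS^c})^{-1}F_{S^c}^\mathrm{T}\cdot F_{S^c}\mathcal{L}^\sigma_{S^cS}F_S^\mathrm{T} \\
&= F_S\left(\mathcal{L}^\sigma_{SS} - \mathcal{L}^\sigma_{SS^c}(\mathcal{L}^\sigma_{S^cS^c})^{-1}\mathcal{L}^\sigma_{S^cS}\right)F_S^\mathrm{T} = F_{ij}\,\mathcal{C}^\sigma(i,j)\,F_{ij}^\mathrm{T},
\end{align*}
where the $F_{S^c}^\mathrm{T}F_{S^c} = I$ cancellations use orthogonality. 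Multiplying on the right by $F_{ij}$ and using $F_{ij}^\mathrm{T}F_{ij} = I_{2d\times 2d}$ yields $F_{ij}\mathcal{C}^\sigma(i,j) = \mathcal{C}^\tau(i,j)F_{ij}$, which is exactly the claim.

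I do not anticipate a genuine obstacle here; the proof is essentially bookkeeping. The one point that deserves care — and the closest thing to a ``hard part'' — is verifying that the switching matrix $F$ really is block-diagonal with respect to the $\{i,j\}$ versus $\{i,j\}^c$ partition, i.e., that it induces well-defined orthogonal blocks $F_{ij}$ and $F_{S^c}$; this is immediate from the definition of $F$ in \Cref{prop:equivalence connection Laplacian} (it is block-diagonal down to the individual vertex level), so any coarser partition into vertex subsets still yields a block-diagonal form. The orthogonality of each $f(i)$ — hence of $F$, $F_{ij}$, and $F_{S^c}$ — is what makes the conjugation commute with inversion and Schur complementation, and that is all that is needed. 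One could alternatively package this as: ``the Schur complement of a conjugated block matrix, when the conjugating matrix respects the partition, is the conjugated Schur complement,'' but writing it out as above is cleanest.
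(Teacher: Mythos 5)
Your proof is correct and follows essentially the same route as the paper: both reduce the claim to the intertwining relation $F\mathcal{L}^\sigma=\mathcal{L}^\tau F$ of \Cref{prop:equivalence connection Laplacian} together with the fact that Schur complementation commutes with conjugation by a partition-respecting block-diagonal orthogonal matrix. The only difference is that the paper cites this latter fact as an external result, whereas you verify it by the direct blockwise computation — a harmless (and arguably more self-contained) substitution.
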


Directly from the result above, we also obtain the following relationships for blocks of the conductance matrix:
\begin{enumerate}
    \item $f(i)\mathcal{C}^\sigma_{ii}=\mathcal{C}^\tau_{ii}f(i)$ and $f(j)\mathcal{C}^\sigma_{jj}=\mathcal{C}^\tau_{jj}f(j)$;
    \item $f(i)\mathcal{C}^\sigma_{ij}=\mathcal{C}^\tau_{ij}f(j)$ and $f(j)\mathcal{C}^\sigma_{ji}=\mathcal{C}^\tau_{ji}f(i)$.
\end{enumerate}

\begin{proposition}[Conductance matrix under direct sum]\label{prop:conductance matrix under direct sum}
    Let $\sigma$ and $\tau$ be two signatures on $G$. Then for any $i,j\in V$, one has that $\mathcal{C}^{\sigma\oplus\tau}(i,j)$ is similar to $\mathcal{C}^\sigma(i,j)\oplus\mathcal{C}^\tau(i,j)$.
\end{proposition}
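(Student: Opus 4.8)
The plan is to reduce the claim about $\mathcal{C}^{\sigma\oplus\tau}(i,j)$ to the analogous block-decomposition fact for the connection Laplacian (\Cref{rmk:block-decomposition-direct-sum}) together with the compatibility of Schur complements with direct sums. First I would recall that $\mathcal{C}^\sigma(i,j) = \mathcal{L}^\sigma/\mathcal{L}^\sigma_{\{i,j\}^c}$ by \Cref{defn:connection-conductance-matrix}, and likewise for $\tau$ and for $\sigma\oplus\tau$. So the goal becomes showing that
\[
    \mathcal{L}^{\sigma\oplus\tau}/\mathcal{L}^{\sigma\oplus\tau}_{\{i,j\}^c}
\]
is similar, via a fixed permutation matrix, to
\[
    \left(\mathcal{L}^{\sigma}/\mathcal{L}^{\sigma}_{\{i,j\}^c}\right)\oplus\left(\mathcal{L}^{\tau}/\mathcal{L}^{\tau}_{\{i,j\}^c}\right).
\]

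The key observation, from \Cref{rmk:block-decomposition-direct-sum}, is that there is a single permutation matrix $P$ (the one interleaving the two groups of $d$ and $d'$ coordinates at each vertex) such that $P^\mathrm{T}\mathcal{L}^{\sigma\oplus\tau}P = \mathcal{L}^\sigma\oplus\mathcal{L}^\tau$. This permutation respects the vertex partition $\{i,j\}\sqcup\{i,j\}^c$: restricting $P$ to the coordinates over $\{i,j\}^c$ gives a permutation $P_{\{i,j\}^c}$ conjugating $\mathcal{L}^{\sigma\oplus\tau}_{\{i,j\}^c}$ to $\mathcal{L}^\sigma_{\{i,j\}^c}\oplus\mathcal{L}^\tau_{\{i,j\}^c}$, and similarly on the $\{i,j\}$ block and the off-diagonal blocks. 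Then I would invoke the elementary fact that the Schur complement is equivariant under such block-respecting conjugation: if $M' = Q^\mathrm{T}MQ$ with $Q$ block-diagonal with respect to the partition into the "kept" and "eliminated" index sets, then $M'/M'_{\text{elim}} = Q_{\text{kept}}^\mathrm{T}(M/M_{\text{elim}})Q_{\text{kept}}$ (this uses only that $(Q^\mathrm{T}MQ)^\dagger_{\text{elim}} = Q_{\text{elim}}^\mathrm{T}M^\dagger_{\text{elim}}Q_{\text{elim}}$ for permutation $Q$, so the pseudoinverse commutes past it). Finally, the Schur complement of a direct sum is the direct sum of the Schur complements: $(\mathcal{L}^\sigma\oplus\mathcal{L}^\tau)/(\mathcal{L}^\sigma_{\{i,j\}^c}\oplus\mathcal{L}^\tau_{\{i,j\}^c}) = (\mathcal{L}^\sigma/\mathcal{L}^\sigma_{\{i,j\}^c})\oplus(\mathcal{L}^\tau/\mathcal{L}^\tau_{\{i,j\}^c})$, since the block formula $D-CA^\dagger B$ decomposes coordinatewise across the direct sum. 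Chaining these three facts produces the desired similarity, with the explicit conjugating matrix being the restriction of $P$ (possibly composed with a further fixed permutation to rearrange the four $d\times d$ / $d'\times d'$ corner blocks of $\mathcal{C}^{\sigma\oplus\tau}(i,j)$ into $\mathcal{C}^\sigma(i,j)\oplus\mathcal{C}^\tau(i,j)$ form).

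I expect the main (modest) obstacle to be bookkeeping: carefully tracking which permutation acts on which index block so that the "block-respecting conjugation" hypothesis genuinely holds, and making sure the final rearrangement from a $2\times 2$ array of $(d+d')\times(d+d')$ blocks into a $4\times 4$ array of $d$- and $d'$-sized blocks (and then into the direct-sum order) is a legitimate permutation similarity. None of this is deep — it is the same interleaving-permutation argument already used in \Cref{rmk:block-decomposition-direct-sum} — but stating it cleanly is the only real content. Since the excerpt says proofs in this subsection are elementary and deferred to \Cref{sec:missing_proofs}, I would keep the write-up short: cite \Cref{rmk:block-decomposition-direct-sum} for the Laplacian-level similarity, note the two Schur-complement lemmas (permutation-equivariance and compatibility with direct sums), and conclude.
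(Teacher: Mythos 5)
Your proposal is correct and follows essentially the same route as the paper: the paper's proof simply cites \Cref{rmk:block-decomposition-direct-sum} for the permutation similarity $\mathcal{L}^{\sigma\oplus\tau}\sim\mathcal{L}^\sigma\oplus\mathcal{L}^\tau$ and reuses the Schur-complement equivariance under block-respecting conjugation from the proof of \Cref{prop:conductance matrix under equivalence}, exactly the two ingredients you identify. Your write-up just makes explicit the bookkeeping (and the compatibility of Schur complements with direct sums) that the paper leaves implicit.
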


In the manner of \Cref{thm:decomposition}, given a decomposition of any signature $\sigma$ on $G$: $\sigma\simeq (\bigoplus_{i=1}^\rho\iota^1)\oplus \tau$, where $\tau$ is absolutely inconsistent, as a result of the \Cref{prop:conductance matrix under direct sum} and \Cref{rmk:trivial-conductance-matrix}, one can decompose the conductance matrix as follows:
    \[\mathcal{C}^\sigma(i,j)\simeq \left(\bigoplus_{l=1}^\rho\begin{bmatrix}
            c_{ij}  & -c_{ij} \\
            -c_{ij} & c_{ij}
        \end{bmatrix}\right)\oplus \mathcal{C}^\tau(i,j).\]
In particular, we point out that for the blocks of the conductance matrix, one has that
\begin{enumerate}
    \item $\mathcal{C}^\sigma_{ii}\simeq \bigoplus_{l=1}^\rho\begin{bmatrix}
              c_{ij}
          \end{bmatrix}\oplus \mathcal{C}^\tau_{ii}$ and $\mathcal{C}^\sigma_{jj}\simeq \bigoplus_{l=1}^\rho\begin{bmatrix}
            c_{ij}
        \end{bmatrix}\oplus \mathcal{C}^\tau_{ii}$;
    \item $\mathcal{C}^\sigma_{ij}\simeq \bigoplus_{l=1}^\rho\begin{bmatrix}
              -c_{ij}
          \end{bmatrix}\oplus \mathcal{C}^\tau_{ij}$ and $\mathcal{C}^\sigma_{ji}\simeq \bigoplus_{l=1}^\rho\begin{bmatrix}
              -c_{ij}
          \end{bmatrix}\oplus \mathcal{C}^\tau_{ji}$.
\end{enumerate}

It is then particularly interesting to study the conductance matrix of the absolutely inconsistent signature $\tau$.

\subsection{A Physical Perspective on the Conductance Matrix}\label{subsubsec:physical-perspective-conductance}
Motivated by the physical origin of classical effective resistance/conductance, in this section, we provide a physical interpretation of the Dirichlet problem \Cref{eq: dirichlet boundary problem connection graph} and hence the corresponding conductance matrix.

We write down explicitly \Cref{eq:laplacian-of-voltage} at each vertex $i\in V$ as follows to obtain current-balance equations:
\[\mathcal{C}_{ii}^\sigma = \sum_x A_{ix}(\mathcal{V}_{i\rightarrow j}(i) - \sigma_{ix}\mathcal{V}_{i\rightarrow j}(x)),\]
\[\mathcal{C}_{ji}^\sigma = \sum_x A_{jx}(\mathcal{V}_{i\rightarrow j}(j) - \sigma_{jx}\mathcal{V}_{i\rightarrow j}(x)),\]
\[0_{d\times d} = \sum_x A_{kx}(\mathcal{V}_{i\rightarrow j}(k) - \sigma_{kx}\mathcal{V}_{i\rightarrow j}(x)),\,\forall k\neq i,j.\]
The equations above should be interpreted as follows: the current from external source injecting into $i$ (resp. $j$) should be equal to the total current from all edges incident to $i$ (resp. $j$); and if there is no external source, the total current from all edges incident to $k$ should sum up to $0_{d\times d}$.  See \Cref{fig: current balance} for an illustration.

In this way, given a ``unit'' voltage at $i$ (specifically: a matrix-valued voltage equal to $I_{d\times d}$) and ``zero'' voltage at $j$, there will be a current $\mathcal{C}_{ii}^\sigma$ going into $i$ and $-\mathcal{C}_{ji}^\sigma$ current going out of $j$. 
Similarly, given a ``unit'' voltage at $j$ and ``zero'' voltage at $i$, there will be a current $\mathcal{C}_{jj}^\sigma$ going into $j$ and $-\mathcal{C}_{ij}^\sigma$ current going out of $i$.

\begin{figure}[htb!]
    \centering
    \includegraphics[width = 0.5\linewidth]{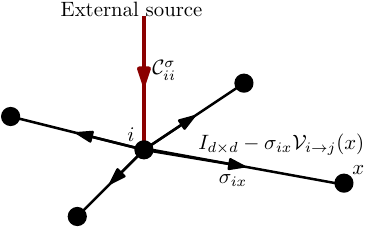}
    \caption{An illustration of the current balance equation at $i$.}
    \label{fig: current balance}
\end{figure}

\subsection{A Probabilistic Interpretation of the Connection Conductance Matrix}\label{subsec:probabilistic-interpretation-connec-conduct}

In \Cref{thm:effective-resistance-random-walk}, we recalled the relationship between effective conductance between two nodes in $G$ and the  ``escape probability" of a random walk starting at $i$. It turns out that both the connection voltage function and the connection conductance matrix can be expressed explicitly using escape probability and mean path signatures.

\begin{theorem}
    Let $(G,\sigma)$ be a connection graph and $i,j\in V$ be two fixed vertices. For any $x\in V$, we have that 
        \begin{equation}\label{eq: voltage function}
                \mathcal{V}_{i\rightarrow j}(x)=\mathbb{P}^x[T_i^0<T_j^0]\cdot \Omega^0_{xi}(j).
        \end{equation}
\end{theorem}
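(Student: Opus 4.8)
The plan is to verify that the right-hand side of \Cref{eq: voltage function}, viewed as a function of $x$, solves the Dirichlet problem \Cref{eq: dirichlet boundary problem connection graph} that uniquely characterizes $\mathcal{V}_{i\to j}$; by \Cref{corol:uniqueness-of-DP} this suffices. Define $g(x):=\mathbb{P}^x[T_i^0<T_j^0]\cdot\Omega^0_{xi}(j)$. First I would check the boundary values: at $x=i$ we have $\mathbb{P}^i[T_i^0<T_j^0]=1$ (since $T_i^0=0$) and $\Omega^0_{ii}(j)=I_{d\times d}$, so $g(i)=I_{d\times d}$; at $x=j$ we have $\mathbb{P}^j[T_i^0<T_j^0]=0$ (since $T_j^0=0<T_i^0$ unless $i=j$), so $g(j)=0_{d\times d}$. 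This matches the boundary conditions.

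Next I would verify harmonicity on $V\setminus\{i,j\}$, i.e.\ that $g(x)=\frac{1}{\deg(x)}\sum_{y\sim x}w_{xy}\sigma_{xy}g(y)$ for $x\neq i,j$. The key is a first-step (Markov) decomposition. Writing $p(x):=\mathbb{P}^x[T_i^0<T_j^0]$, the product $g(x)=p(x)\,\Omega^0_{xi}(j)$ can be rewritten as an unconditional expectation
\begin{equation*}
g(x)=\mathbb{E}^x\!\left[\mathbbm{1}\{T_i^0<T_j^0\}\prod_{\ell=1}^{T_i^0}\sigma_{X_{\ell-1}X_\ell}\right],
\end{equation*}
since conditioning on the event $\{T_i^0<T_j^0\}$ and multiplying by its probability recovers the unconditional expectation restricted to that event. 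For $x\neq i,j$, the walk takes one step to a neighbor $y$ with probability $w_{xy}/\deg(x)$, contributing the factor $\sigma_{xy}$, and then the event $\{T_i^0<T_j^0\}$ and the remaining product depend only on the walk from $y$; by the Markov property the remaining contribution is exactly $g(y)$. Hence $g(x)=\frac{1}{\deg(x)}\sum_{y\sim x}w_{xy}\sigma_{xy}\,g(y)$, which is precisely $(\mathcal{L}g)(x)=0_{d\times d}$. Therefore $g$ solves \Cref{eq: dirichlet boundary problem connection graph}, and uniqueness gives $g=\mathcal{V}_{i\to j}$.

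The main obstacle, and the step deserving the most care, is justifying the passage between the conditional-expectation formulation of $\Omega^0_{xi}(j)$ and the unconditional (indicator-weighted) expectation, and then applying the strong Markov property cleanly at the first step when $x$ is an interior vertex — in particular making sure that the stopping time $T_i^0$ "after one step" coincides with $T_i^0$ for the shifted walk started at $y$ (true because $x\neq i$, so the walk has not yet hit $i$), and that the event $\{T_i^0<T_j^0\}$ likewise shifts correctly (true because $x\neq j$ as well). One should also note that $p(x)>0$ for all $x$ by connectedness, so the conditional expectation $\Omega^0_{xi}(j)$ is well-defined and the factorization $g(x)=p(x)\Omega^0_{xi}(j)$ is legitimate; the degenerate bookkeeping only occurs at the boundary nodes $i,j$, which are handled separately above. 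A sanity check against the classical case ($d=1$, $\sigma\equiv1$): then $\Omega^0_{xi}(j)=1$ and $g(x)=\mathbb{P}^x[T_i^0<T_j^0]$, recovering the well-known fact that $V_{j\to i}(x)=\mathbb{P}^x[T_i^0<T_j^0]$ is harmonic with the stated boundary values.
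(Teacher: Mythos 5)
Your proof is correct and follows exactly the route the paper intends: verify that the right-hand side satisfies the boundary conditions and is harmonic on $V\setminus\{i,j\}$ via a first-step Markov decomposition, then invoke the uniqueness of the solution to the Dirichlet problem (\Cref{corol:uniqueness-of-DP}), which is the same ``Markov property plus uniqueness'' argument the paper uses for \Cref{lem:Omega-dirichlet-conditions}. In fact the paper never writes this proof out explicitly (its appendix only derives the equivalence with the indicator-weighted form in \Cref{eq:alternative-voltage-function}), so your careful treatment of the conditional-versus-unconditional expectation and of the degenerate boundary cases supplies details the paper omits.
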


Alternatively, $\mathcal{V}_{i\rightarrow j}(x)$ can be characterized as follows. Let $T_{ij}^0 := T^0_{\{i, j\}}$ be a stopping time (cf. Definition \Cref{def:set-stopping-time}), and let the generalized indicator function $\chi_i:V\rightarrow\mathbb{R}^{d\times d}$ be given by $\chi(i) = I_{d\times d}$ and $\chi(x) = 0_{d\times d}$ otherwise. For each $x\in V$, we have that (see \Cref{sec:missing_proofs} for a derivation)
\begin{equation}\label{eq:alternative-voltage-function}
    \mathcal{V}_{i\rightarrow j}(x) =\mathbb{E}^x\left[ \chi_i \left(X_{T_{ij}} \right)\prod_{s=0}^{T_{ij}^0} \sigma_{X_s X_{s+1}} \right].
\end{equation}

For any $i,j\in V$, recall that we let $c_{ij}$ denote the graph effective conductance between the two nodes and let $\mathbb{P}^i[T_j^1<T_i^1]$ represent the escape probability. Then,
    \begin{theorem}\label{thm:escape probability}
        For the conductance matrix $\mathcal{C}^\sigma(i,j) = \begin{bmatrix}
                \mathcal{C}_{ii}^\sigma & \mathcal{C}_{ij}^\sigma \\
                \mathcal{C}_{ji}^\sigma & \mathcal{C}_{jj}^\sigma\end{bmatrix}$, one can interpret the blocks using escape probability and mean path signatures as follows:
        \begin{align*}
            \mathcal{C}_{ii}^\sigma & =\deg(i)\cdot\left(I_{d\times d} - (1-\mathbb{P}^i[T_j^1<T_i^1])\cdot \Omega_{ii}^1(j)\right) 
        \end{align*}
        \[\mathcal{C}_{ji}^\sigma = -\deg(j)\cdot \mathbb{P}^j[T_i^1<T_j^1]\cdot \Omega^1_{ji}(j).\]
   Then, by the fact that $c_{ij}=\deg(i)\cdot \mathbb{P}^i[T_j^1<T_i^1]=\deg(j)\cdot \mathbb{P}^j[T_i^1<T_j^1]$, we have that
        \[\mathcal{C}^\sigma(i,j) = c_{ij}\begin{bmatrix}
                I_{d\times d}  & -I_{d\times d} \\
                -I_{d\times d} & I_{d\times d}\end{bmatrix}+\begin{bmatrix}
                (\deg(i)-c_{ij})\cdot(I_{d\times d} - \Omega^1_{ii}(j)) & c_{ij}\cdot(I_{d\times d} - \Omega^1_{ij}(i))           \\
                c_{ij}\cdot(I_{d\times d} - \Omega^1_{ji}(j))           & (\deg(j)-c_{ij})\cdot(I_{d\times d} - \Omega^1_{jj}(i))\end{bmatrix}.\]
    \end{theorem}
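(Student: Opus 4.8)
The plan is to express each diagonal and off-diagonal block of $\mathcal{C}^\sigma(i,j) = \mathcal{L}/\mathcal{L}_{\{i,j\}^c}$ (cf.\ \Cref{defn:connection-conductance-matrix}) in terms of the connection voltage functions via \Cref{eq:conductance-laplacian-voltage}, and then substitute the probabilistic formula for the voltage function from \Cref{eq: voltage function} to read off the random-walk interpretation. First I would recall that by \Cref{eq:conductance-laplacian-voltage} the first column of $\mathcal{C}^\sigma(i,j)$ equals $\big[(\mathcal{L}\mathcal{V}_{i\to j})(i);\ (\mathcal{L}\mathcal{V}_{i\to j})(j)\big]$, and by definition of the connection Laplacian
$(\mathcal{L}\mathcal{V}_{i\to j})(i) = \deg(i)\,\mathcal{V}_{i\to j}(i) - \sum_{x\sim i} w_{ix}\sigma_{ix}\mathcal{V}_{i\to j}(x) = \deg(i) I_{d\times d} - \sum_{x\sim i} w_{ix}\sigma_{ix}\mathcal{V}_{i\to j}(x)$, using the boundary condition $\mathcal{V}_{i\to j}(i) = I_{d\times d}$. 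Plugging in \Cref{eq: voltage function}, namely $\mathcal{V}_{i\to j}(x) = \mathbb{P}^x[T_i^0 < T_j^0]\cdot\Omega^0_{xi}(j)$, the sum $\sum_{x\sim i} w_{ix}\sigma_{ix}\,\mathbb{P}^x[T_i^0<T_j^0]\,\Omega^0_{xi}(j)$ is, by a one-step (Markov) decomposition of a walk started at $i$, conditioned on returning to $i$ before reaching $j$, exactly $\deg(i)\cdot\mathbb{P}^i[T_i^1 < T_j^1]\cdot\Omega^1_{ii}(j) = \deg(i)\,(1 - \mathbb{P}^i[T_j^1<T_i^1])\,\Omega^1_{ii}(j)$; this is where the definition of the mean cycle product $\Omega^1_{ii}(j)$ in \Cref{defn:mean-path-product} is used. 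This yields the stated formula for $\mathcal{C}^\sigma_{ii}$. The expression for $\mathcal{C}^\sigma_{ji} = (\mathcal{L}\mathcal{V}_{i\to j})(j)$ is analogous but simpler: since $\mathcal{V}_{i\to j}(j) = 0_{d\times d}$, we get $(\mathcal{L}\mathcal{V}_{i\to j})(j) = -\sum_{x\sim j} w_{jx}\sigma_{jx}\,\mathbb{P}^x[T_i^0<T_j^0]\,\Omega^0_{xi}(j)$, and the one-step decomposition of a walk from $j$ conditioned on hitting $i$ before returning to $j$ identifies this with $-\deg(j)\,\mathbb{P}^j[T_i^1<T_j^1]\,\Omega^1_{ji}(j)$.

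The remaining two blocks $\mathcal{C}^\sigma_{jj}$ and $\mathcal{C}^\sigma_{ij}$ follow by symmetry: they are the corresponding entries of the second column of $\mathcal{C}^\sigma(i,j) = \mathcal{L}/\mathcal{L}_{\{i,j\}^c}$, which by \Cref{eq:conductance-laplacian-voltage} equals $\big[(\mathcal{L}\mathcal{V}_{j\to i})(i);\ (\mathcal{L}\mathcal{V}_{j\to i})(j)\big]$, so I would simply swap the roles of $i$ and $j$ in the computation above to obtain $\mathcal{C}^\sigma_{jj} = \deg(j)(I_{d\times d} - (1-\mathbb{P}^j[T_i^1<T_j^1])\Omega^1_{jj}(i))$ and $\mathcal{C}^\sigma_{ij} = -\deg(i)\,\mathbb{P}^i[T_j^1<T_i^1]\,\Omega^1_{ij}(i)$. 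Finally, to assemble the last displayed identity, I would invoke the classical fact $c_{ij} = \deg(i)\,\mathbb{P}^i[T_j^1<T_i^1] = \deg(j)\,\mathbb{P}^j[T_i^1<T_j^1]$ (from \Cref{thm:effective-resistance-random-walk}(ii)) and rewrite each block by adding and subtracting $c_{ij} I_{d\times d}$ (for diagonal blocks) or $c_{ij} I_{d\times d}$ (for off-diagonal): e.g.\ $\mathcal{C}^\sigma_{ii} = \deg(i) I_{d\times d} - (\deg(i) - c_{ij})\Omega^1_{ii}(j) = c_{ij} I_{d\times d} + (\deg(i)-c_{ij})(I_{d\times d} - \Omega^1_{ii}(j))$, and $\mathcal{C}^\sigma_{ij} = -c_{ij}\Omega^1_{ij}(i) = -c_{ij} I_{d\times d} + c_{ij}(I_{d\times d} - \Omega^1_{ij}(i))$; collecting these four rewrites into a $2\times 2$ block array gives precisely the claimed decomposition.

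The main obstacle I anticipate is justifying rigorously the one-step Markov decomposition that turns $\sum_{x\sim i} w_{ix}\sigma_{ix}\,\mathbb{P}^x[T_i^0<T_j^0]\,\Omega^0_{xi}(j)$ into $\deg(i)(1-\mathbb{P}^i[T_j^1<T_i^1])\,\Omega^1_{ii}(j)$ --- one must carefully account for the conditioning events (hitting $i$ before $j$, starting from the neighbor $x$, versus the first-return-to-$i$-before-$j$ event started at $i$), verify that the signature accumulated over the edge $(i,x)$ together with the path from $x$ back to $i$ reconstitutes the cycle product $\prod_{\ell=1}^{T_i^1}\sigma_{X_{\ell-1}X_\ell}$ in the definition of $\Omega^1_{ii}(j)$, and check that the probability normalizations combine correctly (the factor $w_{ix}/\deg(i)$ is the transition probability $\mathbb{P}_{i,x}$, so the $\deg(i)$ prefactor emerges cleanly). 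The degenerate cases where $i$ and $j$ are adjacent, or where some $\Omega$-terms are only defined by convention, should also be checked, but these are routine; the conditioning bookkeeping in the cycle-product identity is the crux and is most safely carried out by expanding $\Omega^1_{ii}(j)$ directly from \Cref{defn:mean-path-product} and matching terms with the first-step analysis of the walk.
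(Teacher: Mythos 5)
Your proposal is correct and follows essentially the same route as the paper: expand the blocks of $\mathcal{C}^\sigma(i,j)$ via the current-balance equations $(\mathcal{L}\mathcal{V}_{i\to j})(i)$ and $(\mathcal{L}\mathcal{V}_{i\to j})(j)$, substitute the probabilistic formula $\mathcal{V}_{i\to j}(x)=\mathbb{P}^x[T_i^0<T_j^0]\cdot\Omega^0_{xi}(j)$, and perform the one-step Markov decomposition to recover $\Omega^1_{ii}(j)$ and $\Omega^1_{ji}(j)$, obtaining the remaining blocks by swapping $i$ and $j$. The conditioning bookkeeping you flag as the crux is exactly the step the paper carries out in its fourth equality, and your final algebraic reassembly using $c_{ij}=\deg(i)\,\mathbb{P}^i[T_j^1<T_i^1]$ matches the paper's conclusion.
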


This result successfully separates the classical effective conductance based on graph structure from the mean path signature based on graph signatures and hence helps us to appreciate the definition of the conductance matrix.
\begin{proof}
    Using the current balance equation in \Cref{subsubsec:physical-perspective-conductance}, one has that
    \begin{align*}
        \mathcal{C}_{ii}^\sigma & = \sum_{x}(\mathcal{V}_{i\rightarrow j}(i)-\sigma_{ix}\mathcal{V}_{i\rightarrow j}(x))A_{ix}                                                                              \\
        & =\deg(i)\cdot(\mathcal{V}_{i\rightarrow j}(i) - \sum_{x}\mathbb{P}_{i,x}\sigma_{ix}\mathcal{V}_{i\rightarrow j}(x))                                                       \\
        & =\deg(i)\cdot\left(I_{d\times d} - \sum_{x}\mathbb{P}_{i,x}\sigma_{ix}\mathbb{P}^x[T_i^1<T_j^1]\cdot \mathbb{E}^x[\sigma_{p_{x,x_1,x_2,\ldots,x_n,i}}|T_i^1<T_j^1]\right) \\
        & =\deg(i)\cdot\left(I_{d\times d} - \sum_{x}\mathbb{P}^i[T_i^1<T_j^1]\cdot \mathbb{E}^i[\sigma_{p_{i,x,x_1,x_2,\ldots,x_n,i}}|T_i^1<T_j^1]\right)                          \\
        & =\deg(i)\cdot\left(I_{d\times d} - \mathbb{P}^i[T_i^1<T_j^1]\cdot \Omega_{ii}^1(j)\right).
    \end{align*}
    Here in the fourth equality, the expectation is over all paths starting at $i$, going to $x$ in the next step, and finally coming back to $i$ before hitting $j$. The formula for $\mathcal{C}_{ji}^\sigma$ can be similarly derived and we defer the details to \Cref{sec:missing_proofs}.
\end{proof}

As a direct consequence, one has the following result regarding cycle graphs.

\begin{example}[Cycle Graphs]\label{ex:cycle-graphs}
    If $G$ is a cycle graph, then
    $\mathcal{C}_{ii}^\sigma=\mathcal{C}_{jj}^\sigma=c_{ij}\cdot I_{d\times d}.$
    This follows directly from the fact hat $\Omega^1_{ii}(j)=I_{d\times d}$ for any $i,j$ in a cycle graph.
    We postpone the discussion of the off diagonal entries to later \Cref{ex:cycle-graphs-continued}.
\end{example}

We also note the following useful algebraic consequence.

\begin{lemma}\label{lm:invertible Cii}
    The matrix $\mathcal{C}_{ii}^\sigma$ is invertible.
\end{lemma}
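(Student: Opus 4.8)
The plan is to derive invertibility of $\mathcal{C}_{ii}^\sigma$ from the formula established in \Cref{thm:escape probability}, namely
\[
\mathcal{C}_{ii}^\sigma = \deg(i)\left(I_{d\times d} - (1-\mathbb{P}^i[T_j^1<T_i^1])\cdot \Omega_{ii}^1(j)\right),
\]
by showing that the subtracted term has operator norm strictly less than $\deg(i)$. The key observation is that $\Omega_{ii}^1(j)$ is, by \Cref{defn:mean-path-product}, a conditional expectation of a product of orthogonal matrices, hence a convex combination of elements of $\mathsf{O}(d)$; therefore $\|\Omega_{ii}^1(j)\|_2 \le 1$. Since $\mathbb{P}^i[T_j^1<T_i^1]>0$ strictly (the graph is connected, so there is positive probability of reaching $j$ before returning to $i$), the scalar factor $1-\mathbb{P}^i[T_j^1<T_i^1]$ is strictly less than $1$. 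Combining these, $\|(1-\mathbb{P}^i[T_j^1<T_i^1])\,\Omega_{ii}^1(j)\|_2 < 1$, so $I_{d\times d}$ minus that matrix is invertible (its eigenvalues avoid $0$ by the Neumann-series / norm argument), and hence so is $\mathcal{C}_{ii}^\sigma$.

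An alternative, perhaps cleaner, route avoids the probabilistic formula entirely and argues directly from the Schur-complement structure. By \Cref{lm: schur complement vs mean path signature} applied with the roles of $i$ and $j$ swapped, $\mathcal{C}^\sigma(i,j)/\mathcal{C}_{jj}^\sigma = \deg(i)(I_{d\times d}-\Omega_i^1)$ — wait, more to the point: one can instead invoke \Cref{corol:uniqueness-of-DP}, which tells us $\mathcal{L}_{i^c}$ is positive definite, together with the quotient identity \Cref{eq:quotient} to realize $\mathcal{C}_{ii}^\sigma$ itself as a Schur complement of a positive definite principal submatrix. Concretely, since $\mathcal{C}^\sigma(i,j) = \mathcal{L}/\mathcal{L}_{\{i,j\}^c}$ and $\mathcal{L}_{\{i,j\}^c}$ is positive definite (a principal submatrix of the positive definite $\mathcal{L}_{i^c}$), and $\mathcal{C}_{ii}^\sigma$ is obtained from $\mathcal{L}$ by a further Schur complement with respect to the even larger set $\{j\}\cup\{i,j\}^c = i^c$, the quotient identity gives $\mathcal{C}_{ii}^\sigma = \mathcal{L}/\mathcal{L}_{i^c}$, which is the Schur complement of the positive definite matrix $\mathcal{L}_{i^c}$ in the positive semidefinite matrix $\mathcal{L}$, restricted to the $i$-block. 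Such a Schur complement is positive semidefinite; it is in fact positive definite because $\mathcal{L}_{i^c}$ being exactly the "closed" complement means the only vector annihilated would force a global kernel element, and by the maximum-norm argument underlying \Cref{corol:uniqueness-of-DP} there is none unless it is the trivial extension. Either way, positive definiteness of $\mathcal{C}_{ii}^\sigma = \mathcal{L}/\mathcal{L}_{i^c}$ yields invertibility.

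I expect the main obstacle to be the strictness: one must be careful to justify that $\mathbb{P}^i[T_j^1<T_i^1]>0$ rather than merely $\ge 0$ (this uses connectedness, which is a standing assumption, so it is routine), and — in the norm approach — that $\|\Omega_{ii}^1(j)\|_2\le 1$ does not by itself suffice since an eigenvalue of $\Omega_{ii}^1(j)$ could in principle equal $1$; the decisive point is the strict contraction coming from $1-\mathbb{P}^i[T_j^1<T_i^1]<1$. In the Schur-complement approach the delicate point is instead arguing strict (not just weak) positive definiteness; this is exactly the content already extracted in \Cref{corol:uniqueness-of-DP} via the maximum norm principle, so I would simply cite that. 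I would write up the proof via the quotient-identity route, as it reuses \Cref{eq:quotient}, \Cref{rmk:schur j}, and \Cref{corol:uniqueness-of-DP} directly and keeps the argument to two or three lines.
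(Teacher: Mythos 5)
Your first argument is correct and is essentially the paper's own proof. The paper rewrites the formula from \Cref{thm:escape probability} as $\mathcal{C}_{ii}^\sigma = c_{ij}I_{d\times d} + (\deg(i)-c_{ij})\cdot(I_{d\times d}-\Omega_{ii}^1(j))$ and concludes positive definiteness from $c_{ij}>0$ together with positive semidefiniteness of $I_{d\times d}-\Omega_{ii}^1(j)$ (which uses that $\Omega_{ii}^1(j)$ is a \emph{symmetric} convex combination of orthogonal matrices). Your phrasing of the same identity as $\mathcal{C}_{ii}^\sigma=\deg(i)\bigl(I_{d\times d}-(1-p)\Omega_{ii}^1(j)\bigr)$ with $p=\mathbb{P}^i[T_j^1<T_i^1]>0$, followed by the strict norm contraction $\|(1-p)\Omega_{ii}^1(j)\|_2<1$, is an equivalent and perfectly sound variant; it even avoids having to justify the symmetry of $\Omega_{ii}^1(j)$. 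Had you written that up, the proof would match the paper's.

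The route you say you would actually write up, however, contains a genuine error. The quotient identity does \emph{not} give $\mathcal{C}_{ii}^\sigma=\mathcal{L}/\mathcal{L}_{i^c}$; it gives $\mathcal{L}/\mathcal{L}_{i^c}=\mathcal{C}^\sigma(i,j)/\mathcal{C}_{jj}^\sigma=\mathcal{C}_{ii}^\sigma-\mathcal{C}_{ij}^\sigma(\mathcal{C}_{jj}^\sigma)^{-1}\mathcal{C}_{ji}^\sigma$, which is exactly the content of \Cref{lm: schur complement vs mean path signature} --- a further Schur complement, not the $(i,i)$ block. The difference is not cosmetic: by \Cref{rmk:schur j} and \Cref{thm:absolutely inconsistent via mean path signature}, $\mathcal{L}/\mathcal{L}_{i^c}=\deg(i)(I_{d\times d}-\Omega_i^1)$ is \emph{singular} whenever $\sigma$ fails to be absolutely inconsistent (for the trivial signature it is the zero matrix, while $\mathcal{C}_{ii}^\sigma=c_{ij}>0$). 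So if your identification were correct, the lemma would be false, and the closing hand-wave about forcing a global kernel element cannot repair this. The Schur-complement idea can be salvaged, but with a different identity: the $(i,i)$ block of $\mathcal{L}/\mathcal{L}_{\{i,j\}^c}$ is $\mathcal{L}_{ii}-\mathcal{L}_{i,\{i,j\}^c}\mathcal{L}_{\{i,j\}^c}^{-1}\mathcal{L}_{\{i,j\}^c,i}$, which only involves rows and columns indexed by $j^c=\{i\}\cup\{i,j\}^c$, i.e.\ $\mathcal{C}_{ii}^\sigma=\mathcal{L}_{j^c}/\mathcal{L}_{\{i,j\}^c}$ is a Schur complement taken \emph{inside} the positive definite matrix $\mathcal{L}_{j^c}$ (\Cref{corol:uniqueness-of-DP}), and a Schur complement of a positive definite matrix is positive definite. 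Either write it that way, or fall back on your first argument.
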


\begin{proof}
    As $\Omega_{ii}^1(j)$ is the convex combination of orthonormal matrices and is symmetric, one has that $I_{d\times d}-\Omega_{ii}^1(j)$ is positive semidefinite.
    Furthermore, we have that $c_{ij}= \deg(i)\cdot \mathbb{P}^i[T_j^1<T_i^1]\leq \deg(i)$.
    Hence, $\mathcal{C}_{ii}^\sigma = c_{ij}I_{d\times d} + (\deg(i) - c_{ij}) \cdot (I_{d\times d} - \Omega_{ii}^1(j))$ is positive definite and hence invertible.
\end{proof}

\subsection{Some Examples of Conductance Matrices}\label{subsec:conductance-examples}

In this final subsection we cover two examples of effective conductance matrices. They are counterparts to the famous series and parallel combination of resistors in classical electrical networks.

\begin{example}[Series combination]
    Consider a line graph shown in \Cref{fig:line_graph}. Then, for vertices $1$ and $n$, one has that
    \[\mathcal{C}^\sigma(1,n) = \frac{1}{\sum_{l=1}^{n-1}w_{l,l+1}^{-1}}\begin{bmatrix}
            I_{d\times d}                  & - \prod_{l=1}^{n-1}\sigma_{l,l+1} \\
            -\prod_{l=n}^{2}\sigma_{l,l-1} & I_{d\times d}\end{bmatrix}.\]
    This follows directly from the observation that $\Omega_{11}^1(n)=I_{d\times d}$ and $\Omega_{n1}^1(n)=\prod_{l=1}^{n-1}\sigma_{l,l+1}.$
\end{example}

\begin{figure}[htb!]
    \centering
    \begin{subfigure}[c]{0.45\linewidth}
        \centering
        \includegraphics[width=\linewidth]{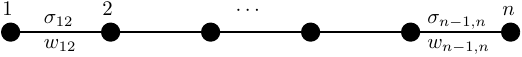}
        \caption{A line connection graph with $n$ vertices}
        \label{fig:line_graph}
    \end{subfigure}
    \hfill
    \begin{subfigure}[c]{0.45\linewidth}
        \centering
        \includegraphics[width=\linewidth]{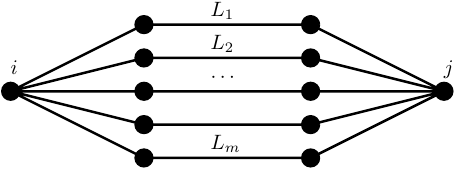}
        \caption{A parallel combination of $m$ line connection graphs $L_l$ with $l=1,\ldots,m$}
        \label{fig:parallel_graph}
    \end{subfigure}
    \caption{Series combination and parallel combination}
\end{figure}

\begin{example}[Parallel combination]
    Consider a combination of several line graphs shown in \Cref{fig:parallel_graph}.
    We numerate each line using index $l=1,\ldots,m$ and let $C^{\sigma,l}(i,j)$ denote the conductance matrix for the $l$-th line graph. Then, one has that (see \Cref{sec:missing_proofs} for a proof):
    \begin{equation}\label{eq:parallel}
        \mathcal{C}^\sigma(i,j) = \sum_{l=1}^m\mathcal{C}^{\sigma,l}(i,j).
    \end{equation}
\end{example}

Now, we use the above examples to continue our computation of the conductance matrix for a cycle graph in \Cref{ex:cycle-graphs}.

\begin{example}[Cycle Graphs - continued]\label{ex:cycle-graphs-continued}
    As shown in \Cref{ex:cycle-graphs}, if $G$ is a cycle graph, then for any distinct vertices $i,j\in V$, we have $\mathcal{C}_{ii}^\sigma=c_{ij}\cdot I_{d\times d}.$
    Now, we specify the off diagonal blocks in $\mathcal{C}^\sigma(i,j)$ using the previous examples. In this case, the two vertices result in a combination of two line graphs. Based on \Cref{prop:cycle}, we assume that only one edge incident to $i$ has the signature $\sigma$, while all other edges have the identity matrix as the signature. Consequently, we obtain:
    \[\mathcal{C}^\sigma(i,j)=\begin{bmatrix}
        (c_1+c_2) I_{d\times d} & -(c_1\sigma + c_2 I_{d\times d}) \\ -(c_1\sigma^\mathrm{T} + c_2 I_{d\times d}) & (c_1+c_2) I_{d\times d}
    \end{bmatrix}\]
    where $c_1$ is the effective conductance between $i$ and $j$ in the line graph containing one edge with signature $\sigma$ and $c_2$ is the effective conductance between $i$ and $j$ in the other line graph.
\end{example}

\section{Resistance on Connection Graphs}\label{subsec:connection-resistance-matrices}

Given the conductance matrix, one naturally wonders how to define a ``resistance matrix''. This question is more involved than its classical counterpart where the effective resistance is simply the reciprocal of the effective conductance. 
Instead of naively defining the resistance matrix as the pseudoinverse of the conductance matrix, we choose to first establish a Poisson type problem based on the Dirichlet problem studied in \Cref{eq: dirichlet boundary problem connection graph} and hence define a resistance matrix that is ``consistent'' with the classical definition in a certain sense. The resistance matrix we obtain is almost the pseudoinverse of the conductance matrix (cf. \Cref{prop:R = C}) and presents clean formulation for absolutely inconsistent signatures (cf. \Cref{prop:resistance-absolutely-inconsistent}).

Note that for the solution $\mathcal{V}_{i\to j}$ of the Dirichlet problem, we have that
\[ (\mathcal{L}\mathcal{V}_{i\rightarrow j}) (x)=\begin{cases}
        \mathcal{C}_{ii}^\sigma               & x= i             \\
        \mathcal{C}_{ji}^\sigma & x = j            \\
        0                           & \text{otherwise}\end{cases}.\]
Just as in the Poisson problem \Cref{eq: poisson problem classical} where the source terms are units, we hence normalize the right hand side by right multiplying it with $\mathcal{C}_{ii}^\sigma$ and obtain the following Poisson type problem:
\begin{equation}\label{eq:Poisson connection}
    (\mathcal{L}\mathcal{W}_{i\rightarrow j}) (x)=\begin{cases}
        I_{d\times d}               & x= i             \\
        -(\Omega_{ij}^0)^\mathrm{T} & x = j            \\
        0                           & \text{otherwise}\end{cases}.
\end{equation}

Note that the appearance of the mean path signature follows from \Cref{prop: mean path signature vs schur}.
There are two issues that immediately arise when constructing $\mathcal{W}$ in this manner: existence and uniqueness of $\mathcal{W}_{i\rightarrow j}$. The existence follows directly from the fact that $\mathcal{W}_{i\rightarrow j} := \mathcal{V}_{i\rightarrow j}\cdot (\mathcal{C}_{ii}^\sigma)^{-1}$ is a solution to \Cref{eq:Poisson connection} and hence that a solution $\mathcal{W}_{i\rightarrow j}$ exists in general.

Uniqueness does not hold in general. To resolve this we simply choose $\mathcal{W}_{i\rightarrow j}$ to be the unique solution to \Cref{eq:Poisson connection} with minimum Euclidean norm, i.e., using the pseudoinverse of $\mathcal{L}$ \cite[Ch. 3]{ben2003generalized}:
    \begin{equation}\label{eq:poisson-psuedoinv-soln}
        \mathcal{W}_{i\rightarrow j}:=\mathcal{L}^\dagger \begin{bmatrix}
            I_{d\times d}               \\
            -(\Omega_{ij}^0)^\mathrm{T} \\\hline
            0_{(n-2)d\times d}
        \end{bmatrix}.
    \end{equation}
Henceforth for any fixed $i,j$ we use the notation $W_{i\rightarrow j}$ to refer to the specific solution constructed in the manner \Cref{eq:poisson-psuedoinv-soln}.

Furthermore, by direct calculation, similar to the proof of \Cref{prop: mean path signature vs schur}, we find that
\begin{equation}\label{eq:half of resistance matrix}
    \begin{bmatrix}
        \mathcal{W}_{i\rightarrow j}(i)  \\
        \mathcal{W}_{i\rightarrow j} (j)
    \end{bmatrix} = (\mathcal{L}/\mathcal{L}_{\{i,j\}^c})^\dagger \begin{bmatrix}
        I_{d\times d}               \\
        -(\Omega_{ij}^0)^\mathrm{T}
    \end{bmatrix}.
\end{equation}

With the setup of $\mathcal{W}_{i\rightarrow j}$ in hand, we define the resistance matrix as follows:

\begin{definition}\label{def: resistance matrix}
    For any $i,j\in V$, we define the resistance matrix $\mathcal{R}^\sigma(i,j)$ as follows:
    \[\mathcal{R}^\sigma(i,j):=\begin{bmatrix}
            \mathcal{W}_{i\rightarrow j}(i)  & \mathcal{W}_{j\rightarrow i}(i) \\
            \mathcal{W}_{i\rightarrow j} (j) & \mathcal{W}_{j\rightarrow i}(j)
        \end{bmatrix}.\]
\end{definition}

In the underlying graph $G$, as described in \Cref{subsec:effective-resistance}, the effective conductance and resistance are related by reciprocal: $c_{ij} = r_{ij}^{-1}$. The following result shows that the resistance matrix is almost the pseudoinverse of the conductance matrix and it follows directly from \Cref{eq:half of resistance matrix} and the definition of $\mathcal{C}^\sigma(i,j)$.

\begin{proposition}\label{prop:R = C}
    For any $i,j\in V$, one has that
    \[\mathcal{R}^\sigma(i,j) = \mathcal{C}^\sigma(i,j)^\dagger \begin{bmatrix}
            I_{d\times d}               & -(\Omega_{ji}^0)^\mathrm{T} \\
            -(\Omega_{ij}^0)^\mathrm{T} & I_{d\times d}
        \end{bmatrix}.\]
\end{proposition}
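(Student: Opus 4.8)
The plan is to reduce the claim to \Cref{eq:half of resistance matrix}, which already expresses the $i$- and $j$-blocks of $\mathcal{W}_{i\to j}$ in terms of $(\mathcal{L}/\mathcal{L}_{\{i,j\}^c})^\dagger = \mathcal{C}^\sigma(i,j)^\dagger$. First I would recall that, by \Cref{eq:half of resistance matrix} applied with the roles of $i,j$ as given,
\[
\begin{bmatrix}
\mathcal{W}_{i\to j}(i)\\ \mathcal{W}_{i\to j}(j)
\end{bmatrix}
= \mathcal{C}^\sigma(i,j)^\dagger
\begin{bmatrix}
I_{d\times d}\\ -(\Omega_{ij}^0)^\mathrm{T}
\end{bmatrix},
\]
and then apply the same identity with $i$ and $j$ swapped. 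Here I must be careful about orientation: $\mathcal{C}^\sigma(j,i)$ is obtained from $\mathcal{C}^\sigma(i,j)$ by conjugating with the block swap permutation $\begin{bmatrix}0&I\\ I&0\end{bmatrix}$ (this is immediate from \Cref{defn:connection-conductance-matrix}, since reordering $\{i,j\}$ as $\{j,i\}$ just permutes the two retained blocks), and the pseudoinverse commutes with this conjugation. Hence \Cref{eq:half of resistance matrix} for the pair $(j,i)$ reads
\[
\begin{bmatrix}
\mathcal{W}_{j\to i}(j)\\ \mathcal{W}_{j\to i}(i)
\end{bmatrix}
= \begin{bmatrix}0_{d\times d}&I_{d\times d}\\ I_{d\times d}&0_{d\times d}\end{bmatrix}\mathcal{C}^\sigma(i,j)^\dagger\begin{bmatrix}0_{d\times d}&I_{d\times d}\\ I_{d\times d}&0_{d\times d}\end{bmatrix}
\begin{bmatrix}
I_{d\times d}\\ -(\Omega_{ji}^0)^\mathrm{T}
\end{bmatrix},
\]
so that
\[
\begin{bmatrix}
\mathcal{W}_{j\to i}(i)\\ \mathcal{W}_{j\to i}(j)
\end{bmatrix}
= \mathcal{C}^\sigma(i,j)^\dagger
\begin{bmatrix}
-(\Omega_{ji}^0)^\mathrm{T}\\ I_{d\times d}
\end{bmatrix}.
\]
Stacking the two computed column pairs side by side as the two block columns of $\mathcal{R}^\sigma(i,j)$ (per \Cref{def: resistance matrix}) gives exactly
\[
\mathcal{R}^\sigma(i,j) = \mathcal{C}^\sigma(i,j)^\dagger
\begin{bmatrix}
I_{d\times d} & -(\Omega_{ji}^0)^\mathrm{T}\\
-(\Omega_{ij}^0)^\mathrm{T} & I_{d\times d}
\end{bmatrix},
\]
which is the asserted identity.

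The one genuine point requiring care — and the step I expect to be the main obstacle — is the derivation of \Cref{eq:half of resistance matrix} itself, or rather making sure I may invoke it cleanly: one must verify that restricting the minimum-norm solution $\mathcal{W}_{i\to j} = \mathcal{L}^\dagger[I;-(\Omega_{ij}^0)^\mathrm{T};0]$ to the $\{i,j\}$ block coincides with applying $(\mathcal{L}/\mathcal{L}_{\{i,j\}^c})^\dagger$ to the reduced right-hand side. This is the standard fact that the pseudoinverse of $\mathcal{L}$, restricted to the block complementary to a positive-definite principal submatrix, agrees on that block with the pseudoinverse of the Schur complement — here $\mathcal{L}_{\{i,j\}^c}$ is invertible by \Cref{corol:uniqueness-of-DP}, so the Schur complement is honest and the block-elimination argument from the proof of \Cref{prop: mean path signature vs schur} goes through verbatim. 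Since \Cref{eq:half of resistance matrix} is already stated in the excerpt, the remaining work is purely the bookkeeping of the orientation swap described above, and no new estimate or structural result is needed.
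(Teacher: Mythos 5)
Your proposal is correct and follows essentially the same route as the paper, which derives \Cref{prop:R = C} directly from \Cref{eq:half of resistance matrix} together with $\mathcal{C}^\sigma(i,j)=\mathcal{L}/\mathcal{L}_{\{i,j\}^c}$; your explicit bookkeeping of the block-swap conjugation for the pair $(j,i)$ is exactly the detail the paper leaves implicit, and it is handled correctly.
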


To justify \Cref{def: resistance matrix}, we present a physical interpretation of the resistance matrix. In \Cref{subsubsec:physical-perspective-conductance}, we offer an interpretation of the conductance matrix as a representation of currents flowing between vertices $i$ and $j$ when ``unit" voltages are applied. Building upon this, we provide a dual interpretation of the resistance matrix. Specifically, when a source at vertex $i$ generates a ``unit" current (i.e., a matrix-valued current with a value of $I_{d\times d}$) and a sink at vertex $j$ receives a current of $\Omega_{ij}^\mathrm{T}$, the resistance matrix records the corresponding voltages at vertices $i$ and $j$.

\subsection{Resistance Matrix under Equivalence and Direct Sum}\label{subsec:resistance-direct-sum}

In this section we discuss how the effective resistance matrices operate at the level of signature equivalence classes and direct sums of signatures. Most proofs are elementary and can be found in \Cref{sec:missing_proofs}.

\begin{proposition}[Resistance matrix under equivalence]\label{prop:resistance-equivalence}
    Assume that $\sigma\simeq\tau$ and  let $f:V\to\mathsf{O}(d)$ be a switching map.
    Then for any $i,j\in V$, one has that
    \[ F_{ij}\mathcal{R}^\sigma(i,j)=\mathcal{R}^\tau(i,j)F_{ij},\,\text{ where } \,F_{ij}:=\begin{bmatrix}
            f(i)          & 0_{d\times d} \\
            0_{d\times d} & f(j)
        \end{bmatrix}.\]
\end{proposition}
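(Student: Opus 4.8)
The plan is to reduce the claim to the already-established behaviour of the connection Laplacian and the mean path signature under switching equivalence. Recall that $\mathcal{W}_{i\to j}$ was defined via the pseudoinverse of $\mathcal{L}^\sigma$ applied to the right-hand side $b^\sigma := \left[I_{d\times d};\, -(\Omega_{ij}^{\sigma,0})^\mathrm{T};\, 0_{(n-2)d\times d}\right]$ (cf. \Cref{eq:poisson-psuedoinv-soln}), and that $\mathcal{R}^\sigma(i,j)$ stacks the $i$- and $j$-blocks of $\mathcal{W}_{i\to j}^\sigma$ and $\mathcal{W}_{j\to i}^\sigma$. So the key ingredients are: (1) \Cref{prop:equivalence connection Laplacian}, giving $F\mathcal{L}^\sigma = \mathcal{L}^\tau F$ for the block-diagonal switching matrix $F$ with $i$-th block $f(i)$; (2) the pseudoinverse identity $F\,(\mathcal{L}^\sigma)^\dagger = (\mathcal{L}^\tau)^\dagger F$, which follows from (1) together with $F$ being orthogonal (so $F^{-1}=F^\mathrm{T}$ and conjugation by an orthogonal matrix commutes with taking the Moore–Penrose inverse); and (3) \Cref{prop:mean path equivalence}, which gives $f(i)\Omega_{ij}^{\sigma,0} = \Omega_{ij}^{\tau,0} f(j)$, equivalently $(\Omega_{ij}^{\sigma,0})^\mathrm{T} f(i)^\mathrm{T} = f(j)^\mathrm{T}(\Omega_{ij}^{\tau,0})^\mathrm{T}$, i.e. $f(j)(\Omega_{ij}^{\sigma,0})^\mathrm{T} = (\Omega_{ij}^{\tau,0})^\mathrm{T} f(i)$.

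The main computation is then to track how $F$ acts on the right-hand side vector. Write $F\,b^\sigma$ block by block: the $i$-block becomes $f(i)\cdot I_{d\times d} = f(i)$, the $j$-block becomes $f(j)\cdot(-(\Omega_{ij}^{\sigma,0})^\mathrm{T}) = -(\Omega_{ij}^{\tau,0})^\mathrm{T} f(i)$ by ingredient (3), and all other blocks stay $0$. Hence $F\,b^\sigma = b^\tau\, f(i)$, where $b^\tau = \left[I_{d\times d};\, -(\Omega_{ij}^{\tau,0})^\mathrm{T};\, 0\right]$ is exactly the right-hand side defining $\mathcal{W}_{i\to j}^\tau$. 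Therefore
\[
F\,\mathcal{W}_{i\to j}^\sigma = F\,(\mathcal{L}^\sigma)^\dagger b^\sigma = (\mathcal{L}^\tau)^\dagger F\, b^\sigma = (\mathcal{L}^\tau)^\dagger b^\tau\, f(i) = \mathcal{W}_{i\to j}^\tau\, f(i).
\]
Reading off the $i$- and $j$-blocks of this identity of matrix-valued functions yields $f(i)\,\mathcal{W}_{i\to j}^\sigma(i) = \mathcal{W}_{i\to j}^\tau(i)\, f(i)$ and $f(j)\,\mathcal{W}_{i\to j}^\sigma(j) = \mathcal{W}_{i\to j}^\tau(j)\, f(i)$. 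Swapping the roles of $i$ and $j$ gives the analogous identities for $\mathcal{W}_{j\to i}$, with $f(j)$ on the right. Assembling the four blocks into the $2\times2$ block form of $\mathcal{R}^\sigma(i,j)$, the left factor is $F_{ij} = \operatorname{diag}(f(i),f(j))$ and the right factor is $F_{ij}$ as well (since the first column of $\mathcal{R}$ carries $f(i)$ on the right and the second column carries $f(j)$), giving precisely $F_{ij}\mathcal{R}^\sigma(i,j) = \mathcal{R}^\tau(i,j) F_{ij}$.

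The one genuinely delicate point — the step I would be most careful about — is the pseudoinverse identity (2): it is true that $(Q A Q^\mathrm{T})^\dagger = Q A^\dagger Q^\mathrm{T}$ for orthogonal $Q$, so from $\mathcal{L}^\tau = F\mathcal{L}^\sigma F^\mathrm{T}$ one gets $(\mathcal{L}^\tau)^\dagger = F(\mathcal{L}^\sigma)^\dagger F^\mathrm{T}$ and hence $F(\mathcal{L}^\sigma)^\dagger = (\mathcal{L}^\tau)^\dagger F$; this is the only place where we really use that $f$ takes values in $\mathsf{O}(d)$ rather than in $\mathsf{GL}(d)$, and it is worth stating explicitly. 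Everything else is bookkeeping: matching the block right-hand sides and correctly identifying which switching matrix, $f(i)$ or $f(j)$, ends up multiplying on the right in each column of $\mathcal{R}$. I would present the computation for $\mathcal{W}_{i\to j}$ in full and then simply note that $\mathcal{W}_{j\to i}$ follows by interchanging $i$ and $j$.
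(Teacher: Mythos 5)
Your proof is correct. It takes a somewhat different route from the paper's: the paper first establishes $F_{ij}\mathcal{C}^\sigma(i,j)=\mathcal{C}^\tau(i,j)F_{ij}$ (\Cref{prop:conductance matrix under equivalence}) and then pushes the switching map through the $2d\times 2d$ identity $\mathcal{R}^\sigma(i,j)=\mathcal{C}^\sigma(i,j)^\dagger\bigl[\begin{smallmatrix}I & -(\Omega_{ji}^0)^\mathrm{T}\\ -(\Omega_{ij}^0)^\mathrm{T} & I\end{smallmatrix}\bigr]$ of \Cref{prop:R = C}, using \Cref{prop:mean path equivalence} to transform the right-hand factor. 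You instead work at the level of the full $nd\times nd$ Laplacian, starting from the defining formula \Cref{eq:poisson-psuedoinv-soln} for $\mathcal{W}_{i\to j}$, and use the intertwining $F\mathcal{L}^\sigma=\mathcal{L}^\tau F$ of \Cref{prop:equivalence connection Laplacian} together with orthogonality of $F$ to commute the pseudoinverse, arriving at the clean functional identity $F\,\mathcal{W}_{i\to j}^\sigma=\mathcal{W}_{i\to j}^\tau\,f(i)$ before restricting to the $i$- and $j$-blocks. The two arguments rest on the same two facts (an intertwining relation plus orthogonal conjugation of a pseudoinverse, and the mean-path-signature equivalence), so neither is more general; yours has the minor advantage of bypassing \Cref{prop:R = C} entirely and of producing the stronger statement that the whole function $\mathcal{W}_{i\to j}$, not just its values at $i$ and $j$, transforms equivariantly, while the paper's is shorter because the reduction to a $2d\times2d$ computation has already been done. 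Your explicit flag that $F^{-1}=F^\mathrm{T}$ is exactly what licenses $(\mathcal{L}^\tau)^\dagger=F(\mathcal{L}^\sigma)^\dagger F^\mathrm{T}$ is well placed; the same caveat is implicitly needed in the paper's version when inverting the conductance-matrix relation.
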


\begin{proposition}[Resistance matrix under direct sum]\label{prop:resistance-direct-sum}
    Let $\sigma$ and $\tau$ be signatures on $G$. Then for any $i,j\in V$, one has that $\mathcal{R}^{\sigma\oplus\tau}(i,j)$ is similar to $\mathcal{R}^\sigma(i,j)\oplus\mathcal{R}^\tau(i,j)$.
\end{proposition}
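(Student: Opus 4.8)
The plan is to reduce the claim to the analogous statement for the conductance matrix, namely Proposition~\ref{prop:conductance matrix under direct sum}, together with the explicit formula for $\mathcal{R}^\sigma(i,j)$ in terms of $\mathcal{C}^\sigma(i,j)$ and mean path signatures provided by Proposition~\ref{prop:R = C}. Concretely, Proposition~\ref{prop:R = C} expresses
\[
\mathcal{R}^\sigma(i,j) = \mathcal{C}^\sigma(i,j)^\dagger \begin{bmatrix} I_{d\times d} & -(\Omega_{ji}^{\sigma,0})^\mathrm{T} \\ -(\Omega_{ij}^{\sigma,0})^\mathrm{T} & I_{d\times d}\end{bmatrix},
\]
so it suffices to show that, after a fixed permutation similarity that regroups the $d$ and $d'$ coordinates, both factors on the right-hand side split as a direct sum of the corresponding factors for $\sigma$ and for $\tau$. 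For the second factor this is immediate from Proposition~\ref{prop:mean path direct sum}, which gives $\Omega_{ij}^{\sigma\oplus\tau,0} = \Omega_{ij}^{\sigma,0}\oplus\Omega_{ij}^{\tau,0}$ (and similarly with $i,j$ swapped), so that block matrix is, up to the same reordering, $\bigl(\text{the }\sigma\text{ block matrix}\bigr)\oplus\bigl(\text{the }\tau\text{ block matrix}\bigr)$. For the first factor, Proposition~\ref{prop:conductance matrix under direct sum} tells us $\mathcal{C}^{\sigma\oplus\tau}(i,j)$ is similar—via that same permutation matrix $P$—to $\mathcal{C}^\sigma(i,j)\oplus\mathcal{C}^\tau(i,j)$, and since the Moore--Penrose pseudoinverse commutes with orthogonal (in particular permutation) conjugation and with direct sums, $\mathcal{C}^{\sigma\oplus\tau}(i,j)^\dagger$ is similar via $P$ to $\mathcal{C}^\sigma(i,j)^\dagger\oplus\mathcal{C}^\tau(i,j)^\dagger$.

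The key steps, in order, are: (1) record that the permutation matrix $P\in\mathbb{R}^{2(d+d')\times 2(d+d')}$ realizing the similarity in Proposition~\ref{prop:conductance matrix under direct sum} is the same reordering that interleaves the $\mathbb{R}^d$ and $\mathbb{R}^{d'}$ coordinates within each of the two vertex-blocks $i$ and $j$; (2) apply $P^\mathrm{T}(\cdot)P$ to the product in Proposition~\ref{prop:R = C} for $\sigma\oplus\tau$, using $P^\mathrm{T}P=I$ to insert $PP^\mathrm{T}$ between the two factors; (3) invoke $\bigl(P^\mathrm{T}AP\bigr)^\dagger = P^\mathrm{T}A^\dagger P$ and Proposition~\ref{prop:conductance matrix under direct sum} to handle the pseudoinverse factor, and Proposition~\ref{prop:mean path direct sum} to handle the $\Omega$ factor; (4) observe that the product of the two block-diagonal (after conjugation) matrices is block-diagonal, and that each block is exactly the Proposition~\ref{prop:R = C} expression for $\mathcal{R}^\sigma(i,j)$ and $\mathcal{R}^\tau(i,j)$ respectively; (5) conclude $P^\mathrm{T}\mathcal{R}^{\sigma\oplus\tau}(i,j)P = \mathcal{R}^\sigma(i,j)\oplus\mathcal{R}^\tau(i,j)$.

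The main thing to be careful about—rather than a deep obstacle—is bookkeeping the index reordering: one must check that the permutation sorting the $2(d+d')$ coordinates of $\mathcal{C}^{\sigma\oplus\tau}(i,j)$ into the pattern $(i\text{-block of }\sigma,\ j\text{-block of }\sigma,\ i\text{-block of }\tau,\ j\text{-block of }\tau)$ is consistent with the ordering implicit in both Proposition~\ref{prop:conductance matrix under direct sum} and Proposition~\ref{prop:mean path direct sum}, so that the same $P$ works simultaneously for the pseudoinverse factor and the mean-path-signature factor. Once this is pinned down, every other step is a formal manipulation using the two cited propositions plus the standard facts that $(\cdot)^\dagger$ respects direct sums and orthogonal conjugation. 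Alternatively, and perhaps more cleanly, one can avoid $P$ entirely by working blockwise: define $\mathcal{R}^{\sigma\oplus\tau}(i,j)$ through Definition~\ref{def: resistance matrix} using $\mathcal{W}_{i\to j}$ for $\sigma\oplus\tau$, note from Remark~\ref{rmk:block-decomposition-direct-sum} that $\mathcal{L}^{\sigma\oplus\tau}$ is (permutation) similar to $\mathcal{L}^\sigma\oplus\mathcal{L}^{\sigma'}$, hence $(\mathcal{L}^{\sigma\oplus\tau})^\dagger$ decomposes correspondingly, and combine this with $\Omega_{ij}^{\sigma\oplus\tau,0} = \Omega_{ij}^{\sigma,0}\oplus\Omega_{ij}^{\tau,0}$ in the defining formula \eqref{eq:poisson-psuedoinv-soln} to see that $\mathcal{W}_{i\to j}^{\sigma\oplus\tau}(x)$ is the direct sum of $\mathcal{W}_{i\to j}^{\sigma}(x)$ and $\mathcal{W}_{i\to j}^{\tau}(x)$ for every $x$; assembling these at $x=i$ and $x=j$ yields the claimed similarity directly.
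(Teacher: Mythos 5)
Your proposal is correct and follows essentially the same route as the paper: the paper's (one-line) proof likewise combines the block-permutation similarity of \Cref{rmk:block-decomposition-direct-sum} with the decompositions of the conductance matrix (\Cref{prop:conductance matrix under direct sum}) and of the mean path signatures (\Cref{prop:mean path direct sum}), fed into the formula of \Cref{prop:R = C}. Your explicit bookkeeping of the single permutation $P$ that simultaneously block-diagonalizes both factors, and the identity $(P^\mathrm{T}AP)^\dagger = P^\mathrm{T}A^\dagger P$, are exactly the details the paper leaves implicit.
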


\begin{example}[Consistent Graphs]\label{ex:resistance-matrix-consistent-case}
    When $(G,\sigma)$ is consistent, by \Cref{ex:consistent signature} and \Cref{prop:conductance matrix under direct sum}, we have $\sigma\simeq\bigoplus_{i=1}^d\iota^1$. Then, for any $i,j\in V$, one has that
    \[\mathcal{R}^\sigma(i,j)\simeq \bigoplus_{i=1}^d\frac{1}{2}\begin{bmatrix}
            r_{ij}  & -r_{ij} \\
            -r_{ij} & r_{ij}
        \end{bmatrix}.\]
    This follows from
    \[\mathcal{C}^\sigma(i,j)^\dagger \begin{bmatrix}
        I_{d\times d}               & -(\Omega_{ji}^0)^\mathrm{T} \\
        -(\Omega_{ij}^0)^\mathrm{T} & I_{d\times d}
    \end{bmatrix} \simeq \bigoplus_{i=1}^d\begin{bmatrix}
        c_{ij}  & -c_{ij} \\
        -c_{ij} & c_{ij}
    \end{bmatrix}^\dagger\begin{bmatrix}
        1  & -1 \\
        -1 & 1
    \end{bmatrix}=\bigoplus_{i=1}^d\frac{1}{2}\begin{bmatrix}
        r_{ij}  & -r_{ij} \\
        -r_{ij} & r_{ij}
    \end{bmatrix}\]
\end{example}

Similarly to the case of conductance matrices, given the decomposition (cf. \Cref{thm:decomposition}) of any signature $\sigma$ on $G$: $\sigma\simeq (\bigoplus_{i=1}^\rho\iota^1)\oplus \tau$,
where $\tau$ is absolutely inconsistent, using \Cref{ex:resistance-matrix-consistent-case}, one obtains the following characterization of $\mathcal{R}^\sigma(i,j)$:
    \[\mathcal{R}^\sigma(i,j)\simeq \left(\bigoplus_{i=1}^\rho\frac{1}{2}\begin{bmatrix}
            r_{ij}  & -r_{ij} \\
            -r_{ij} & r_{ij}
        \end{bmatrix}\right)\oplus \mathcal{R}^\tau(i,j).\]

We now study $\mathcal{R}^\sigma(i,j)$ when the signature is absolutely inconsistent.
\begin{proposition}\label{prop:resistance-absolutely-inconsistent}
    Let $\sigma$ be absolutely inconsistent. Then, for any $i,j\in V$, 
    \[\mathcal{R}^\sigma(i,j)=\begin{bmatrix}
            (C_{ii}^\sigma)^{-1} & 0_{d\times d}        \\
            0_{d\times d}        & (C_{jj}^\sigma)^{-1}
        \end{bmatrix}.\]
\end{proposition}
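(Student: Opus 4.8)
The plan is to start from the closed form in \Cref{prop:R = C},
\[\mathcal{R}^\sigma(i,j) = \mathcal{C}^\sigma(i,j)^\dagger \begin{bmatrix} I_{d\times d} & -(\Omega_{ji}^0)^\mathrm{T} \\ -(\Omega_{ij}^0)^\mathrm{T} & I_{d\times d}\end{bmatrix},\]
and to simplify both factors using absolute inconsistency. First I would note that since $\sigma$ is absolutely inconsistent, $\mathcal{L}^\sigma$ is invertible by definition; because $\mathcal{L}_{\{i,j\}^c}$ is positive definite by \Cref{corol:uniqueness-of-DP}, the standard block-matrix fact (a matrix with an invertible principal block is invertible iff the corresponding Schur complement is) forces $\mathcal{C}^\sigma(i,j) = \mathcal{L}/\mathcal{L}_{\{i,j\}^c}$ to be invertible. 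Hence $\mathcal{C}^\sigma(i,j)^\dagger = \mathcal{C}^\sigma(i,j)^{-1}$, and the whole computation reduces to inverting a genuine $2d\times 2d$ matrix.

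Second, I would rewrite the right-hand matrix entirely in terms of the blocks of $\mathcal{C}^\sigma(i,j)$. Since $\mathcal{L}$ is symmetric its Schur complement $\mathcal{C}^\sigma(i,j)$ is symmetric, so $\mathcal{C}_{ij}^\sigma = (\mathcal{C}_{ji}^\sigma)^\mathrm{T}$ and $\mathcal{C}_{ii}^\sigma,\mathcal{C}_{jj}^\sigma$ are symmetric; both $\mathcal{C}_{ii}^\sigma$ and $\mathcal{C}_{jj}^\sigma$ are invertible by \Cref{lm:invertible Cii} (applied with the two choices of distinguished vertex in the pair). \Cref{prop: mean path signature vs schur}, together with its analogue obtained by exchanging $i$ and $j$, gives $\Omega_{ij}^0 = -(\mathcal{C}_{ii}^\sigma)^{-1}\mathcal{C}_{ij}^\sigma$ and $\Omega_{ji}^0 = -(\mathcal{C}_{jj}^\sigma)^{-1}\mathcal{C}_{ji}^\sigma$; transposing and invoking symmetry yields $-(\Omega_{ij}^0)^\mathrm{T} = \mathcal{C}_{ji}^\sigma(\mathcal{C}_{ii}^\sigma)^{-1}$ and $-(\Omega_{ji}^0)^\mathrm{T} = \mathcal{C}_{ij}^\sigma(\mathcal{C}_{jj}^\sigma)^{-1}$. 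Multiplying out then shows
\[\begin{bmatrix} I_{d\times d} & -(\Omega_{ji}^0)^\mathrm{T} \\ -(\Omega_{ij}^0)^\mathrm{T} & I_{d\times d}\end{bmatrix} = \begin{bmatrix} \mathcal{C}_{ii}^\sigma & \mathcal{C}_{ij}^\sigma \\ \mathcal{C}_{ji}^\sigma & \mathcal{C}_{jj}^\sigma\end{bmatrix}\begin{bmatrix} (\mathcal{C}_{ii}^\sigma)^{-1} & 0_{d\times d} \\ 0_{d\times d} & (\mathcal{C}_{jj}^\sigma)^{-1}\end{bmatrix} = \mathcal{C}^\sigma(i,j)\begin{bmatrix} (\mathcal{C}_{ii}^\sigma)^{-1} & 0_{d\times d} \\ 0_{d\times d} & (\mathcal{C}_{jj}^\sigma)^{-1}\end{bmatrix}.\]

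Finally, substituting this factorization back into the formula from \Cref{prop:R = C} and cancelling $\mathcal{C}^\sigma(i,j)^{-1}\mathcal{C}^\sigma(i,j) = I_{2d\times 2d}$ gives $\mathcal{R}^\sigma(i,j) = \diag\!\big((\mathcal{C}_{ii}^\sigma)^{-1},(\mathcal{C}_{jj}^\sigma)^{-1}\big)$, which is the claim. The only delicate point is the transpose/symmetry bookkeeping in the second step, needed so that the off-diagonal blocks of the right-hand matrix line up exactly with $\mathcal{C}_{ji}^\sigma(\mathcal{C}_{ii}^\sigma)^{-1}$ and $\mathcal{C}_{ij}^\sigma(\mathcal{C}_{jj}^\sigma)^{-1}$; once that is in place, the rest is immediate from results already established. (As with \Cref{defn:connection-conductance-matrix} and \Cref{prop: mean path signature vs schur}, the argument is for distinct $i,j$.)
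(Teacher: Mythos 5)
Your proof is correct, but it takes a genuinely different route from the paper's. The paper argues at the level of the Poisson problem: since $\mathcal{L}^\sigma$ is invertible when $\sigma$ is absolutely inconsistent, the solution of \Cref{eq:Poisson connection} is unique, so $\mathcal{W}_{i\to j}$ must coincide with $\mathcal{V}_{i\to j}\cdot(\mathcal{C}_{ii}^\sigma)^{-1}$, and the claim is read off from the boundary values $\mathcal{V}_{i\to j}(i)=I_{d\times d}$ and $\mathcal{V}_{i\to j}(j)=0_{d\times d}$. You instead work algebraically from \Cref{prop:R = C}: invertibility of $\mathcal{L}^\sigma$ together with the Schur determinant identity gives invertibility of $\mathcal{C}^\sigma(i,j)$ (so the pseudoinverse is a true inverse), and the symmetry of the Schur complement of a symmetric matrix combined with \Cref{prop: mean path signature vs schur} lets you factor the right-hand matrix as $\mathcal{C}^\sigma(i,j)\cdot\diag\bigl((\mathcal{C}_{ii}^\sigma)^{-1},(\mathcal{C}_{jj}^\sigma)^{-1}\bigr)$, after which everything cancels. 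All the ingredients you invoke are available at this point in the paper, and your transpose bookkeeping checks out: $-(\Omega_{ij}^0)^\mathrm{T}=\mathcal{C}_{ji}^\sigma(\mathcal{C}_{ii}^\sigma)^{-1}$ does follow from $\Omega_{ij}^0=-(\mathcal{C}_{ii}^\sigma)^{-1}\mathcal{C}_{ij}^\sigma$ once you use $(\mathcal{C}_{ij}^\sigma)^\mathrm{T}=\mathcal{C}_{ji}^\sigma$ and the symmetry of $\mathcal{C}_{ii}^\sigma$. The paper's argument is shorter and more conceptual---it explains \emph{why} the off-diagonal blocks vanish (the normalized voltage is literally zero at the sink)---whereas yours exposes the purely linear-algebraic mechanism by which the formula in \Cref{prop:R = C} collapses to a block diagonal, and in doing so essentially re-derives the identity $N_{ij}^\mathrm{T}\mathcal{L}^\dagger N_{ij}=(\mathcal{C}_{ii}^\sigma)^{-1}$ that the paper establishes separately in \Cref{prop:resistance}.
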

\begin{proof}
    As we know that $\mathcal{V}_{i\rightarrow j}\cdot (\mathcal{C}_{ii}^\sigma)^{-1}$ is a solution to \Cref{eq:Poisson connection} such that $(\mathcal{V}_{i\rightarrow j}\cdot (\mathcal{C}_{ii}^\sigma)^{-1})(j)=0_{d\times d}$.
    When $\sigma$ is absolutely inconsistent, we know that $\mathcal{L}^\sigma$ is invertible. Hence, $\mathcal{V}_{i\rightarrow j}\cdot (\mathcal{C}_{ii}^\sigma)^{-1}$ is the unique solution to \Cref{eq:Poisson connection}. Note that $\mathcal{V}_{i\rightarrow j}(i)=I_{d\times d}$ and $\mathcal{V}_{i\rightarrow j}(j)=0_{d\times d}$. Therefore, we have that
    \[\mathcal{W}_{i\to j}(i)=\mathcal{V}_{i\rightarrow j}(i)\cdot (\mathcal{C}_{ii}^\sigma)^{-1}=(\mathcal{C}_{ii}^\sigma)^{-1}\quad\text{and}\]
    \[\mathcal{W}_{i\to j}(j)=\mathcal{V}_{i\rightarrow j}(j)\cdot (\mathcal{C}_{ii}^\sigma)^{-1}=0_{d\times d}.\]
\end{proof}

\subsection{Connection Resistance: A Scalar Version of the Resistance Matrix}\label{sec:scalar resistance matrix}
Finally, we would like to end this paper with a definition of a scalar version of the resistance matrix.
Recall that in \Cref{subsubsec:energy-perspective} we showed how the effective resistance appears as the Dirichlet energy of a solution to the Poisson problem $Lf=e_i-e_j$ (cf. \Cref{eq: poisson problem classical}):
$r_{ij}=E(f)=f^\mathrm{T}Lf.$

Motivated by this insight, one might define the {scalar connection resistance} between $i$ and $j$ as the Dirichlet energy of a solution to \Cref{eq:Poisson connection}:
$\frac{1}{2}\tr (\mathcal{W}_{i\to j}^\mathrm{T}\mathcal{L}\mathcal{W}_{i\to j})$. We note, however, that this term is asymmetric in $i$ and $j$. To ensure symmetry, it is natural to consider $\frac{1}{2}\tr (\mathcal{W}_{i\to j}^\mathrm{T}\mathcal{L}\mathcal{W}_{i\to j}+\mathcal{W}_{j\to i}^\mathrm{T}\mathcal{L}\mathcal{W}_{j\to i})$, which turns out to be our final definition up to certain normalization\footnote{A normalization is necessary as if the $d$-dim signature is consistent, this term coincides with $d\cdot r_{ij}$.}. 

\begin{definition}[Connection effective resistance]\label{def:scalar_connection_er}
    For any $i,j\in V$, we define the \emph{connection effective resistance} between $i$ and $j$ as
    \[r_{ij}^\sigma := \frac{1}{2d}\tr\left(\mathcal{W}_{i\to j}^\mathrm{T}\mathcal{L}\mathcal{W}_{i\to j}+\mathcal{W}_{j\to i}^\mathrm{T}\mathcal{L}\mathcal{W}_{j\to i}\right).\]
\end{definition}

Just as the classical effective resistance can be written as $r_{ij}=(e_i-e_j)^\mathrm{T}L^\dagger(e_i-e_j)$, we can also characterize the connection effective resistance is a similar manner.
\begin{proposition}\label{def:trace-resistance}
    If we let    
$N_{ij}=\begin{bmatrix}
    0_{d\times d}, \cdots,
    I_{d\times d},\cdots, -\Omega_{ij}^0,\cdots
    ,0_{d\times d}\end{bmatrix}^\mathrm{T}$ for any $i,j\in V$, 
   then we have that
    \[r_{ij}^\sigma = \frac{1}{2d}(\tr(N_{ij}^\mathrm{T}\mathcal{L}^\dagger N_{ij})+\tr(N_{ji}^\mathrm{T}\mathcal{L}^\dagger N_{ji})).\]
\end{proposition}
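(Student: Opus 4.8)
The plan is to recognize $r_{ij}^\sigma$ as a symmetrized Dirichlet energy of the particular solution $\mathcal{W}_{i\to j}$ selected in \Cref{eq:poisson-psuedoinv-soln}, and then to invoke the elementary fact that the Dirichlet energy of the minimum-norm solution of a system $\mathcal{L}x=b$ equals $b^\mathrm{T}\mathcal{L}^\dagger b$. This is exactly the computation underlying the classical identity $r_{ij}=(e_i-e_j)^\mathrm{T}L^\dagger(e_i-e_j)$, so the proof of the proposition is a near-verbatim transcription of it into the matrix-valued setting.

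Concretely, I would first reconcile notation: comparing the block-column vector $N_{ij}$ in the statement with the right-hand side of \Cref{eq:poisson-psuedoinv-soln} (reading the outer transpose in the block-row expression for $N_{ij}$ as stacking the transposed blocks, so that the $i$-th block of $N_{ij}$ is $I_{d\times d}$ and the $j$-th block is $-(\Omega_{ij}^0)^\mathrm{T}$, consistent with \Cref{prop: mean path signature vs schur}), one sees that $\mathcal{W}_{i\to j}=\mathcal{L}^\dagger N_{ij}$. The re-enumeration of vertices used when writing \Cref{eq:poisson-psuedoinv-soln} merely conjugates $\mathcal{L}$ and $\mathcal{L}^\dagger$ by a fixed block-permutation matrix and hence leaves all the traces below unchanged, so one may work in the original vertex ordering throughout. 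Then, since $\mathcal{L}$ is symmetric positive semidefinite, one has $(\mathcal{L}^\dagger)^\mathrm{T}=\mathcal{L}^\dagger$ together with the Moore--Penrose identity $\mathcal{L}^\dagger\mathcal{L}\,\mathcal{L}^\dagger=\mathcal{L}^\dagger$, whence
\[
\mathcal{W}_{i\to j}^\mathrm{T}\mathcal{L}\mathcal{W}_{i\to j}
= N_{ij}^\mathrm{T}(\mathcal{L}^\dagger)^\mathrm{T}\mathcal{L}\,\mathcal{L}^\dagger N_{ij}
= N_{ij}^\mathrm{T}\mathcal{L}^\dagger N_{ij},
\]
and symmetrically $\mathcal{W}_{j\to i}^\mathrm{T}\mathcal{L}\mathcal{W}_{j\to i}=N_{ji}^\mathrm{T}\mathcal{L}^\dagger N_{ji}$. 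Taking traces, summing the two equalities, and dividing by $2d$ as in \Cref{def:scalar_connection_er} yields the asserted formula.

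There is essentially no obstacle here: the entire content reduces to the two pseudoinverse identities above plus the observation $\mathcal{W}_{i\to j}=\mathcal{L}^\dagger N_{ij}$. The only point demanding a bit of care is the block-transpose bookkeeping — verifying that the $j$-th entry of $N_{ij}$ is $-(\Omega_{ij}^0)^\mathrm{T}$ (and not $-\Omega_{ij}^0$), so that $N_{ij}$ indeed coincides, after the vertex re-ordering, with the driving vector in \Cref{eq:poisson-psuedoinv-soln}; once that is pinned down, the computation is two lines.
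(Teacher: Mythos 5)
Your proof is correct and is precisely the argument the paper leaves implicit (no proof is given in the text or appendix, only the analogy with $r_{ij}=(e_i-e_j)^\mathrm{T}L^\dagger(e_i-e_j)$): substitute $\mathcal{W}_{i\to j}=\mathcal{L}^\dagger N_{ij}$ from \Cref{eq:poisson-psuedoinv-soln} into \Cref{def:scalar_connection_er} and simplify via $(\mathcal{L}^\dagger)^\mathrm{T}=\mathcal{L}^\dagger$ and $\mathcal{L}^\dagger\mathcal{L}\mathcal{L}^\dagger=\mathcal{L}^\dagger$. Your attention to the block-transpose convention for $N_{ij}$ (so that its $j$-th block is $-(\Omega_{ij}^0)^\mathrm{T}$) and to the harmlessness of the vertex re-ordering is exactly the right bookkeeping.
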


We next specify some relationship between $N_{ij}^\mathrm{T}\mathcal{L}^\dagger N_{ij}$ and the conductance matrix and an explicit formula for computing the connection effective resistance.

\begin{lemma}\label{prop:resistance}
    Let $\sigma\simeq (\bigoplus_{i=1}^\rho\iota^1)\oplus \tau$ where $\tau$ is absolutely inconsistent. Then, 
    \[N_{ij}^\mathrm{T}\mathcal{L}^\dagger N_{ij}=(\mathcal{C}^\sigma_{ii})^{-1}\simeq  \left(\bigoplus_{i=1}^\rho[r_{ij}]\right)\oplus (\mathcal{C}^\tau_{ii})^{-1} \text{ and }\]
    \[N_{ji}^\mathrm{T}\mathcal{L}^\dagger N_{ji}=(\mathcal{C}^\sigma_{jj})^{-1}\simeq  \left(\bigoplus_{i=1}^\rho[r_{ij}]\right)\oplus (\mathcal{C}^\tau_{jj})^{-1}.\]
\end{lemma}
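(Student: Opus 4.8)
The plan is to prove the two displayed identities by combining three ingredients already available in the text: the formula for $\mathcal{W}_{i\to j}$ via the pseudoinverse (\Cref{eq:poisson-psuedoinv-soln}), the reduction to the Schur complement $\mathcal{C}^\sigma(i,j)=\mathcal{L}/\mathcal{L}_{\{i,j\}^c}$ (\Cref{eq:half of resistance matrix}), and the behavior of everything in sight under direct sums (\Cref{prop:mean path direct sum}, \Cref{prop:conductance matrix under direct sum}, \Cref{rmk:trivial-conductance-matrix}). I will prove the first identity; the second follows by interchanging the roles of $i$ and $j$.

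First I would establish the block identity $N_{ij}^\mathrm{T}\mathcal{L}^\dagger N_{ij}=(\mathcal{C}_{ii}^\sigma)^{-1}$. Observe that $N_{ij}$ is exactly the $nd\times d$ matrix whose top two $d\times d$ blocks (in the $i,j$ positions, after re-ordering so $i,j$ come first) are $\left[\begin{smallmatrix} I_{d\times d}\\ -(\Omega_{ij}^0)^\mathrm{T}\end{smallmatrix}\right]$ and whose remaining blocks vanish. Hence $N_{ij}^\mathrm{T}\mathcal{L}^\dagger N_{ij}$ equals the top-left $d\times d$ block of $\left(\mathcal{L}/\mathcal{L}_{\{i,j\}^c}\right)^\dagger\left[\begin{smallmatrix}I_{d\times d} & *\\ -(\Omega_{ij}^0)^\mathrm{T} & *\end{smallmatrix}\right]$ obtained in \Cref{eq:half of resistance matrix}, i.e. $N_{ij}^\mathrm{T}\mathcal{L}^\dagger N_{ij}=\mathcal{W}_{i\to j}(i)$, which by \Cref{prop:R = C} and \Cref{def: resistance matrix} is the $(1,1)$-block of $\mathcal{R}^\sigma(i,j)$. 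It then remains to show this block equals $(\mathcal{C}_{ii}^\sigma)^{-1}$. For the absolutely inconsistent part this is immediate from \Cref{prop:resistance-absolutely-inconsistent}. For a general $\sigma$, I would invoke the direct-sum decomposition $\sigma\simeq(\bigoplus_{i=1}^\rho\iota^1)\oplus\tau$ from \Cref{thm:decomposition}: under \Cref{prop:resistance-direct-sum} and \Cref{prop:conductance matrix under direct sum}, both $\mathcal{W}_{i\to j}(i)$ and $(\mathcal{C}_{ii}^\sigma)^{-1}$ decompose compatibly as $\bigl(\bigoplus_{l=1}^\rho[\tfrac{?}{?}]\bigr)\oplus(\text{absolutely inconsistent block})$, and on each block they agree — the consistent $\iota^1$ blocks contribute $r_{ij}=c_{ij}^{-1}$ by \Cref{ex:resistance-matrix-consistent-case} and \Cref{rmk:trivial-conductance-matrix}, and the $\tau$ block contributes $(\mathcal{C}_{ii}^\tau)^{-1}$ by \Cref{prop:resistance-absolutely-inconsistent}. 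This simultaneously yields the similarity $(\mathcal{C}_{ii}^\sigma)^{-1}\simeq\bigl(\bigoplus_{i=1}^\rho[r_{ij}]\bigr)\oplus(\mathcal{C}_{ii}^\tau)^{-1}$ stated in the lemma.

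Alternatively — and perhaps more cleanly — one can avoid the decomposition and argue directly: by the quotient identity for Schur complements one checks that the $(1,1)$ block of $\mathcal{C}^\sigma(i,j)^\dagger\left[\begin{smallmatrix}I & -(\Omega_{ji}^0)^\mathrm{T}\\ -(\Omega_{ij}^0)^\mathrm{T} & I\end{smallmatrix}\right]$ is $(\mathcal{C}^\sigma(i,j)/\mathcal{C}_{jj}^\sigma)^\dagger$ composed with appropriate blocks of the right factor, and then use $\Omega_{ij}^0=-(\mathcal{C}_{ii}^\sigma)^{-1}\mathcal{C}_{ij}^\sigma$ from \Cref{prop: mean path signature vs schur} together with $\mathcal{C}_{ji}^\sigma=(\mathcal{C}_{ij}^{\sigma^{-1}})^\mathrm{T}$-type symmetry to collapse the product to $(\mathcal{C}_{ii}^\sigma)^{-1}$; the invertibility of $\mathcal{C}_{ii}^\sigma$ guaranteed by \Cref{lm:invertible Cii} is what makes the $\dagger$ genuinely an inverse on the relevant block. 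I would present whichever of these two routes is shorter to typeset, most likely the direct-sum route since all the needed ingredients are packaged as lemmas.

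The main obstacle I anticipate is bookkeeping with the pseudoinverse: the full $\mathcal{L}^\dagger$ is not literally the inverse of $\mathcal{L}/\mathcal{L}_{\{i,j\}^c}$, so one must be careful that the reduction in \Cref{eq:half of resistance matrix} is exactly valid and that extracting the $(i,i)$-block commutes with taking $\dagger$ in the way claimed. The safest remedy is to never work with $\mathcal{L}^\dagger$ directly beyond citing \Cref{eq:poisson-psuedoinv-soln} and \Cref{eq:half of resistance matrix}, and instead phrase everything in terms of $\mathcal{W}_{i\to j}(i)$, which \Cref{prop:R = C} and \Cref{prop:resistance-direct-sum} already control; then the only substantive computation is the one-block agreement on the absolutely inconsistent summand, which is \Cref{prop:resistance-absolutely-inconsistent}.
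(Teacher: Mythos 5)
There is a genuine gap at the very first step of your main route. From \Cref{eq:poisson-psuedoinv-soln} and \Cref{eq:half of resistance matrix}, the quantity $N_{ij}^\mathrm{T}\mathcal{L}^\dagger N_{ij}$ is the full quadratic form
$[I_{d\times d},\,-\Omega_{ij}^0]\,\bigl(\mathcal{L}/\mathcal{L}_{\{i,j\}^c}\bigr)^\dagger\,[I_{d\times d},\,-\Omega_{ij}^0]^\mathrm{T}
= \mathcal{W}_{i\to j}(i)-\Omega_{ij}^0\,\mathcal{W}_{i\to j}(j)$,
not the single block $\mathcal{W}_{i\to j}(i)$ that you extract. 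The correction term $-\Omega_{ij}^0\,\mathcal{W}_{i\to j}(j)$ vanishes only on the absolutely inconsistent summand (where \Cref{prop:resistance-absolutely-inconsistent} gives $\mathcal{W}_{i\to j}(j)=0_{d\times d}$); on each consistent $\iota^1$ summand, \Cref{ex:resistance-matrix-consistent-case} gives $\mathcal{W}_{i\to j}(i)=\tfrac12 r_{ij}$ and $\mathcal{W}_{i\to j}(j)=-\tfrac12 r_{ij}$, so your identification would make the consistent blocks contribute $\tfrac12 r_{ij}$ rather than the $r_{ij}=(\mathcal{C}^{\iota^1}_{ii})^{-1}$ the lemma asserts. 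Your ``the blocks agree'' step therefore fails by a factor of $2$ on exactly the part of the decomposition where $\mathcal{L}$ is singular; the placeholder $[\tfrac{?}{?}]$ in your write-up is precisely where this breaks. Your alternative route has the same flaw (it again targets only the $(1,1)$-block of $\mathcal{R}^\sigma(i,j)$) and additionally leans on an unestablished ``$\mathcal{C}_{ji}^\sigma=(\mathcal{C}_{ij}^{\sigma^{-1}})^\mathrm{T}$-type symmetry'' that is not needed.

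The paper's proof repairs this with a short direct computation that you almost reach at the end of your second route: since $-\Omega_{ij}^0=(\mathcal{C}_{ii}^\sigma)^{-1}\mathcal{C}_{ij}^\sigma$ by \Cref{prop: mean path signature vs schur}, one has $[I_{d\times d},\,-\Omega_{ij}^0]=[(\mathcal{C}_{ii}^\sigma)^{-1},\,0_{d\times d}]\,\mathcal{C}^\sigma(i,j)$, and then symmetry of $\mathcal{C}^\sigma(i,j)$ together with $\mathcal{C}\mathcal{C}^\dagger\mathcal{C}=\mathcal{C}$ collapses the whole quadratic form to $[(\mathcal{C}_{ii}^\sigma)^{-1},\,0_{d\times d}]\,\mathcal{C}^\sigma(i,j)\,[(\mathcal{C}_{ii}^\sigma)^{-1},\,0_{d\times d}]^\mathrm{T}=(\mathcal{C}_{ii}^\sigma)^{-1}$, with no need to decompose $\sigma$ or to compute any block of $\mathcal{C}^\sigma(i,j)^\dagger$. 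The similarity statement then follows, as you correctly note, from \Cref{prop:conductance matrix under direct sum} and \Cref{rmk:trivial-conductance-matrix}. If you rewrite your argument around the full quadratic form rather than the $(1,1)$-block, the rest of your structure goes through.
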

\begin{proof}
    We prove the first equality below (the same proof applies to $N_{ji}^\mathrm{T}\mathcal{L}^\dagger N_{ji}$) and the rest follows from the decomposition results of blocks in conductance matrices (see \Cref{prop:conductance matrix under direct sum} and the discussion thereafter).

    Recall from \Cref{prop: mean path signature vs schur} that $-\Omega_{ij}^0=(\mathcal{C}_{ii}^\sigma)^{-1}\mathcal{C}_{ij}^\sigma$. Without loss of generality, we assume that $i<j$. Then, we have that
    \begin{align*}
        N_{ij}^\mathrm{T}\mathcal{L}^\dagger N_{ij}& = [I_{d\times d},-\Omega_{ij}^0]\mathcal{C}^\sigma(i,j)^\dagger[I_{d\times d},-\Omega_{ij}^0]^\mathrm{T}\\
        &=[(\mathcal{C}_{ii}^\sigma)^{-1}, 0_{d\times d}]\mathcal{C}^\sigma(i,j)^\mathrm{T}\mathcal{C}^\sigma(i,j)^\dagger \mathcal{C}^\sigma(i,j)[(\mathcal{C}_{ii}^\sigma)^{-1}, 0_{d\times d}]^\mathrm{T}\\
        &=[(\mathcal{C}_{ii}^\sigma)^{-1}, 0_{d\times d}]\mathcal{C}^\sigma(i,j)\mathcal{C}^\sigma(i,j)^\dagger \mathcal{C}^\sigma(i,j)[(\mathcal{C}_{ii}^\sigma)^{-1}, 0_{d\times d}]^\mathrm{T}\\
        &=[(\mathcal{C}_{ii}^\sigma)^{-1}, 0_{d\times d}]\mathcal{C}^\sigma(i,j)[(\mathcal{C}_{ii}^\sigma)^{-1}, 0_{d\times d}]^\mathrm{T}=(\mathcal{C}_{ii}^\sigma)^{-1}.
    \end{align*}
\end{proof}
Based on this result, we establish the following explicit formula for the effective resistance under decomposition of signatures.

\begin{theorem}\label{thm:resistance decomposition}
     Given $(G,\sigma)$, let $1\leq \rho\leq d$ be the dimension of the kernel of $\mathcal{L}^\sigma$, and let $\tau$ be the absolutely inconsistent component of $\sigma$. Then it holds:
        \[r^\sigma_{ij} = \frac{\rho}{d}r_{ij} + \frac{1}{2d}\tr\left(\left(\mathcal{C}^\tau_{ii}\right)^{-1}+(\mathcal{C}^\tau_{jj})^{-1}\right).\]
\end{theorem}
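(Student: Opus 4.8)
The plan is to assemble the identity directly from Proposition~\ref{def:trace-resistance} and Lemma~\ref{prop:resistance}, using only that the trace is a similarity invariant and is additive over direct sums.

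First I would fix the decomposition $\sigma\simeq(\bigoplus_{i=1}^{\rho}\iota^1)\oplus\tau$ supplied by \Cref{thm:decomposition}, so that $\rho$ is the nullity of $\mathcal{L}^\sigma$ and $\tau$ is the absolutely inconsistent component of $\sigma$, and invoke Proposition~\ref{def:trace-resistance} to write
\[
r^\sigma_{ij} = \frac{1}{2d}\left(\tr(N_{ij}^\mathrm{T}\mathcal{L}^\dagger N_{ij}) + \tr(N_{ji}^\mathrm{T}\mathcal{L}^\dagger N_{ji})\right).
\]
Next I would apply the first displayed identity of Lemma~\ref{prop:resistance}: the matrix $N_{ij}^\mathrm{T}\mathcal{L}^\dagger N_{ij}$ is similar to $\left(\bigoplus_{l=1}^{\rho}[r_{ij}]\right)\oplus(\mathcal{C}^\tau_{ii})^{-1}$, so taking traces gives $\tr(N_{ij}^\mathrm{T}\mathcal{L}^\dagger N_{ij}) = \rho\, r_{ij} + \tr\big((\mathcal{C}^\tau_{ii})^{-1}\big)$. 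The symmetric computation with the second identity of Lemma~\ref{prop:resistance} yields $\tr(N_{ji}^\mathrm{T}\mathcal{L}^\dagger N_{ji}) = \rho\, r_{ij} + \tr\big((\mathcal{C}^\tau_{jj})^{-1}\big)$.

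Substituting both expressions back and collecting terms gives
\[
r^\sigma_{ij} = \frac{1}{2d}\left(2\rho\, r_{ij} + \tr\big((\mathcal{C}^\tau_{ii})^{-1}\big) + \tr\big((\mathcal{C}^\tau_{jj})^{-1}\big)\right) = \frac{\rho}{d}\, r_{ij} + \frac{1}{2d}\tr\left((\mathcal{C}^\tau_{ii})^{-1} + (\mathcal{C}^\tau_{jj})^{-1}\right),
\]
which is exactly the claim. I would also remark that the boundary case $\rho = d$ (the consistent case) is covered: then $\tau$ is a $0$-dimensional signature, the blocks $\mathcal{C}^\tau_{ii}$ and $\mathcal{C}^\tau_{jj}$ are empty with trace $0$, and the formula reduces to $r^\sigma_{ij} = r_{ij}$, consistent with \Cref{ex:resistance-matrix-consistent-case}.

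Since every ingredient is already in place, there is essentially no remaining obstacle: the only things to be careful about are (i) ensuring that the decomposition used here is precisely the one from \Cref{thm:decomposition}, so the symbols $\rho$ and $\tau$ match those in Lemma~\ref{prop:resistance}, and (ii) correctly handling the empty-block convention when $\rho=d$. The real work — the cancellation of the pseudoinverse via $\mathcal{C}^\sigma(i,j)\,\mathcal{C}^\sigma(i,j)^\dagger\,\mathcal{C}^\sigma(i,j) = \mathcal{C}^\sigma(i,j)$ in Lemma~\ref{prop:resistance}, and the energy-to-pseudoinverse reduction in Proposition~\ref{def:trace-resistance} — has already been carried out, so this proof is a short bookkeeping argument.
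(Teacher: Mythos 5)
Your proof is correct and follows exactly the paper's route: the paper's proof of this theorem is the one-line "apply \Cref{prop:resistance} to \Cref{def:trace-resistance}," and you have simply spelled out the trace bookkeeping (plus the harmless $\rho=d$ edge case). No gaps.
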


\begin{proof}
    This follows from applying \Cref{prop:resistance} to \Cref{def:trace-resistance}.
\end{proof}

One immediate consequence of the above theorem is that the effective resistance is invariant under equivalence of signatures: this follows from \Cref{prop:conductance matrix under equivalence} and the fact that the trace of a matrix is invariant under similarity transformations. Another direct result is that when the signature is consistent, then the connection effective resistance is equal to the classical definition of the effective resistance.

Finally, we establish that the effective resistance is continuous with respect to change of signatures.

\begin{theorem}\label{thm:rijsigma_is_continous}
    Given $(G,\sigma)$, let $i,j\in V$ be fixed nodes, and let $r_{ij}$ be the effective resistance between $i,j$ for the underlying graph $G$. Then the function $\sigma\mapsto r^\sigma_{ij}$ is continuous.
\end{theorem}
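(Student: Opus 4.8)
The plan is to reduce $r^\sigma_{ij}$ to an expression built entirely from the \emph{conductance} matrix, whose continuity is already established in \Cref{prop: continuity of conductance matrix}, thereby sidestepping the fact that $\mathcal{L}^\dagger$ itself does \emph{not} vary continuously with $\sigma$ (the nullity of $\mathcal{L}^\sigma$ jumps, e.g.\ at consistent signatures, where a $d$‑dimensional kernel suddenly appears). The first step is to record a \emph{uniform} closed form for $r^\sigma_{ij}$. Every signature admits a decomposition $\sigma\simeq(\bigoplus_{l=1}^\rho\iota^1)\oplus\tau$ with $\tau$ absolutely inconsistent (take $\rho=d$ and $\tau$ trivial when $\sigma$ is consistent, by \Cref{ex:consistent signature}; otherwise invoke \Cref{thm:decomposition}, allowing $\rho=0$), so the first identity of \Cref{prop:resistance} applies and gives $N_{ij}^\mathrm{T}\mathcal{L}^\dagger N_{ij}=(\mathcal{C}^\sigma_{ii})^{-1}$ and $N_{ji}^\mathrm{T}\mathcal{L}^\dagger N_{ji}=(\mathcal{C}^\sigma_{jj})^{-1}$; in fact the proof of that identity only uses \Cref{prop: mean path signature vs schur} and Moore–Penrose algebra, so it is valid for every signature. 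Substituting into \Cref{def:trace-resistance} yields, for \emph{all} $\sigma$,
\[
    r^\sigma_{ij} = \frac{1}{2d}\left(\tr\left((\mathcal{C}^\sigma_{ii})^{-1}\right) + \tr\left((\mathcal{C}^\sigma_{jj})^{-1}\right)\right).
\]
Unlike the formula in \Cref{thm:resistance decomposition}, this one is not split along $\rho$ and $\tau$ (both of which are discontinuous in $\sigma$), which is precisely why it is the right vehicle for a continuity argument.

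Next I would check that each ingredient of this formula is continuous in $\sigma$, where $\sigma$ ranges over the space of signatures on $G$ with the subspace topology inherited from $(\R^{d\times d})^{E}$. The map $\sigma\mapsto\mathcal{L}^\sigma$ is continuous (indeed affine in the blocks $\sigma_{ij}$) by \Cref{def:connection-laplacian}, and $\mathcal{L}^\sigma_{\{i,j\}^c}$ is invertible for every $\sigma$ by \Cref{corol:uniqueness-of-DP}; hence the Schur complement $\mathcal{C}^\sigma(i,j)=\mathcal{L}^\sigma/\mathcal{L}^\sigma_{\{i,j\}^c}$ depends continuously on $\sigma$ — this is exactly \Cref{prop: continuity of conductance matrix} — and therefore so do its diagonal $d\times d$ blocks $\mathcal{C}^\sigma_{ii}$ and $\mathcal{C}^\sigma_{jj}$. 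By \Cref{lm:invertible Cii} these two blocks are invertible for \emph{every} $\sigma$, so $\sigma\mapsto\mathcal{C}^\sigma_{ii}$ and $\sigma\mapsto\mathcal{C}^\sigma_{jj}$ take values in the open set of invertible matrices, on which $A\mapsto A^{-1}$ is continuous (its entries are rational functions of the entries of $A$ with nonvanishing denominator $\det A$). Since $\tr$ is linear, hence continuous, the composition $\sigma\mapsto\mathcal{C}^\sigma_{ii}\mapsto(\mathcal{C}^\sigma_{ii})^{-1}\mapsto\tr((\mathcal{C}^\sigma_{ii})^{-1})$ is continuous, likewise with $j$ in place of $i$, and adding the two and rescaling by $\tfrac{1}{2d}$ preserves continuity; this proves the theorem.

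The hard part is not any computation but getting the representation right: written naively through $\mathcal{L}^\dagger$ and $\Omega^0_{ij}$ the quantity $r^\sigma_{ij}$ looks discontinuous, because both $\mathcal{L}^\dagger$ and the mean path signatures can jump exactly where $\mathcal{L}^\sigma$ changes rank, and the decomposition formula of \Cref{thm:resistance decomposition} merely reshuffles the discontinuity into $\rho$ and $\tau$. The essential point to isolate and emphasize is that the conductance matrix, and in particular its diagonal blocks which \Cref{lm:invertible Cii} guarantees are always invertible, varies continuously even across the loci where $\mathcal{L}^\sigma$ becomes singular, and the scalar connection resistance is a continuous function of precisely those blocks. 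The remaining verifications (continuity of the Laplacian in $\sigma$, of matrix inversion on $\mathrm{GL}(d,\R)$, and of the trace) are routine and can be stated without proof.
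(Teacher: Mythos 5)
Your proposal is correct and follows essentially the same route as the paper: both derive the uniform formula $r_{ij}^\sigma=\frac{1}{2d}\bigl(\tr((\mathcal{C}_{ii}^\sigma)^{-1})+\tr((\mathcal{C}_{jj}^\sigma)^{-1})\bigr)$ from \Cref{prop:resistance} and \Cref{def:trace-resistance}, and then conclude from the continuity of the conductance matrix together with the invertibility of its diagonal blocks. You merely spell out the routine steps (continuity of the Schur complement, of matrix inversion on invertible matrices, and of the trace) that the paper leaves implicit.
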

\begin{proof}
    Note  from \Cref{prop:resistance}
     that 
    \begin{equation}\label{eq:rijsigma_formula}
        r_{ij}^\sigma=\frac{1}{2d}\left(\tr((\mathcal{C}_{ii}^\sigma)^{-1})+\tr((\mathcal{C}_{jj}^\sigma)^{-1})\right).
    \end{equation}
    Then, the result follows from continuity and invertibility of both $\mathcal{C}_{ii}^\sigma$ and $\mathcal{C}_{jj}^\sigma$.
\end{proof}

Note that the new connection resistance is defined for every pair of vertices and is continuous with respect to the signature. This is in contrast to the definition by \cite{chung2014connection} and help justify our definition.

We end with one counterintuitive property of $r_{ij}^\sigma$ comparing the connection resistance with the effective resistance of underlying graphs. The proof is in \Cref{sec:missing_proofs}.

\begin{proposition}\label{prop:resistance-lower-bound}
    For any $i,j\in V$, it holds that $r_{ij}^\sigma\leq r_{ij}$. 
\end{proposition}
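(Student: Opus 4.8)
The plan is to reduce the desired inequality $r_{ij}^\sigma \le r_{ij}$ to a Loewner (positive semidefinite) domination between each diagonal block $\mathcal{C}_{ii}^\sigma$, $\mathcal{C}_{jj}^\sigma$ of the conductance matrix and the scalar matrix $c_{ij}I_{d\times d}$, and then to invoke the antitonicity of the matrix inverse on positive definite matrices together with the scalar identity $c_{ij}=r_{ij}^{-1}$.

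First I would invoke the formula \eqref{eq:rijsigma_formula} established in \Cref{prop:resistance}, namely $r_{ij}^\sigma = \frac{1}{2d}\bigl(\tr((\mathcal{C}_{ii}^\sigma)^{-1}) + \tr((\mathcal{C}_{jj}^\sigma)^{-1})\bigr)$. Thus it suffices to prove $\tr((\mathcal{C}_{ii}^\sigma)^{-1}) \le d\, r_{ij}$, and likewise for $\mathcal{C}_{jj}^\sigma$.

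Next I would recall, from \Cref{thm:escape probability} (exactly as in the proof of \Cref{lm:invertible Cii}), the decomposition $\mathcal{C}_{ii}^\sigma = c_{ij}I_{d\times d} + (\deg(i)-c_{ij})\bigl(I_{d\times d}-\Omega_{ii}^1(j)\bigr)$. Since $\Omega_{ii}^1(j)$ is symmetric and a convex combination of orthogonal matrices, its eigenvalues lie in $[-1,1]$, so $I_{d\times d}-\Omega_{ii}^1(j) \succeq 0$; and $c_{ij} = \deg(i)\,\mathbb{P}^i[T_j^1<T_i^1] \le \deg(i)$ forces $\deg(i)-c_{ij}\ge 0$. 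Hence $\mathcal{C}_{ii}^\sigma \succeq c_{ij}I_{d\times d} \succ 0$. Applying the fact that $A\succeq B\succ 0$ implies $A^{-1}\preceq B^{-1}$, we get $(\mathcal{C}_{ii}^\sigma)^{-1} \preceq c_{ij}^{-1}I_{d\times d} = r_{ij}I_{d\times d}$, and taking the trace gives $\tr((\mathcal{C}_{ii}^\sigma)^{-1}) \le d\, r_{ij}$. The identical argument applied to $\mathcal{C}_{jj}^\sigma = c_{ij}I_{d\times d} + (\deg(j)-c_{ij})\bigl(I_{d\times d}-\Omega_{jj}^1(i)\bigr)$, using $c_{ij}=\deg(j)\,\mathbb{P}^j[T_i^1<T_j^1]\le\deg(j)$, yields $\tr((\mathcal{C}_{jj}^\sigma)^{-1}) \le d\, r_{ij}$. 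Summing the two bounds and dividing by $2d$ gives $r_{ij}^\sigma \le r_{ij}$, as claimed.

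I do not expect a genuine obstacle: all the ingredients (the trace formula, the block decomposition, the positive semidefiniteness of $I_{d\times d}-\Omega_{ii}^1(j)$, and $c_{ij}\le\min(\deg(i),\deg(j))$) are already available in the excerpt. The only point deserving a line of care is the Loewner inequality $\mathcal{C}_{ii}^\sigma\succeq c_{ij}I_{d\times d}$, which is precisely the content of the computation inside the proof of \Cref{lm:invertible Cii}; conceptually this is also what makes the statement "counterintuitive", since it says the signature only ever inflates the conductance blocks above the scalar floor $c_{ij}I_{d\times d}$ and hence deflates their inverses.
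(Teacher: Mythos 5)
Your proof is correct and follows essentially the same route as the paper's: both rest on the block formula $\mathcal{C}_{ii}^\sigma = c_{ij}I_{d\times d} + (\deg(i)-c_{ij})(I_{d\times d}-\Omega_{ii}^1(j))$ from \Cref{thm:escape probability}, the positive semidefiniteness of $I_{d\times d}-\Omega_{ii}^1(j)$, the resulting eigenvalue bound $(\mathcal{C}_{ii}^\sigma)^{-1}\preceq r_{ij}I_{d\times d}$, and the trace formula \eqref{eq:rijsigma_formula}. The only cosmetic difference is that the paper first reduces to the absolutely inconsistent case via \Cref{thm:resistance decomposition}, whereas you apply the general formula directly; this changes nothing of substance.
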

The equality does not hold in general (see experiments in Section \ref{section:numerical_exp_scalar_cr}).
This result indicates that the presence of inconsistency in the signature reduces the energy of the solution to the Poisson problem, which is counterintuitive. Inconsistency is typically seen as obstacles in random walks within connection graphs, which should typically increase the energy. The interpretation of this phenomenon remains an open problem.

\subsubsection{Numerical Experiments on the Connection Resistance}\label{section:numerical_exp_scalar_cr}
We conduct three numerical experiments contrasting our connection resistance $r_{ij}^\sigma$, as defined in Definition \ref{def:scalar_connection_er} and implemented using \Cref{eq:rijsigma_formula}, with the standard effective resistance and the connection resistance measure proposed by Chung et al.~\cite{chung2014connection}. These experiments involve a dumbbell graph (\Cref{fig:dumbbell_graph}) and a Wheatstone-bridge graph (\Cref{fig:wheatstonebridge_graph}), where we assign 3-dimensional signatures to specific edges.

For these experiments, most edges are assigned the identity matrix $I_{3\times 3}$, with chosen edges assigned signatures of the form:
\begin{equation*}
    \sigma_{ij}(\theta) = 
    \begin{pmatrix}
    1 & 0 & 0 \\
    0 & \cos(\theta) & -\sin(\theta) \\
    0 & \sin(\theta) & \cos(\theta)
    \end{pmatrix}.
\end{equation*}

In the case of the dumbbell graph (\Cref{fig:dumbbell_graph}), the edge $(1,2)$ is assigned a signature $\sigma_{12}$, with $\theta_{12}$ sampled from a grid over the interval $[0, 2\pi]$. Beyond edge $(1,2)$, edge $(2,3)$ is given two different signature configurations: $\theta_{23}$ is chosen to be either $0$ or $\pi/2$.
For the Wheatstone-bridge graph (\Cref{fig:wheatstonebridge_graph}), the edge $(2,4)$ is assigned a signature $\sigma_{24}$, with $\theta_{24}$ sampled from a grid on the interval $[0, 2\pi]$.

\begin{figure}[htb!]
    \centering
    \begin{subfigure}[c]{0.7\linewidth}
        \centering
        \includegraphics[width=\linewidth]{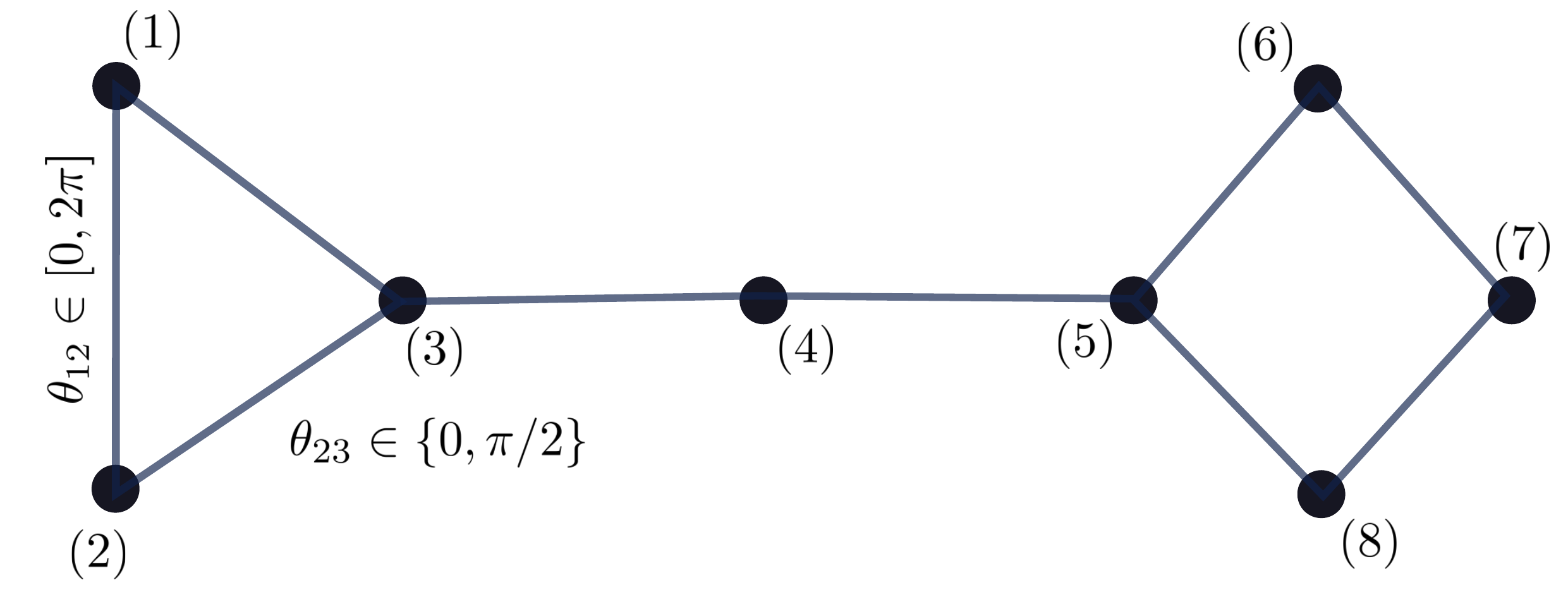}
        \caption{Dumbbell graph}
        \label{fig:dumbbell_graph}
    \end{subfigure}
    \hfill
    \begin{subfigure}[c]{0.25\linewidth}
        \centering
        \includegraphics[width=\linewidth]{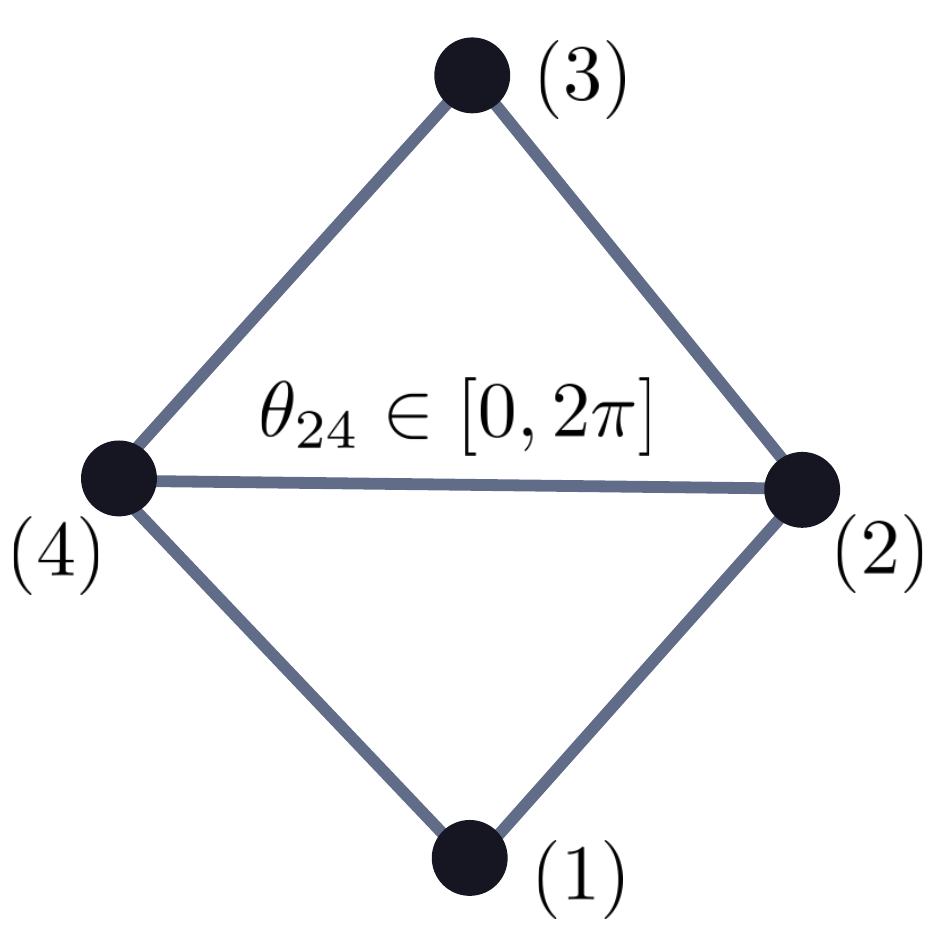}
        \caption{Wheatstone-bridge graph}
        \label{fig:wheatstonebridge_graph}
    \end{subfigure}
    \caption{Illustration of graphs considered in the numerical experiments}
    \label{fig:dumbbell_and_wheatstone_graphs_illustration}
\end{figure}

We observe the following properties of $r_{ij}^\sigma$ from these experiments. 
\begin{enumerate}
  \item $r_{ij}^\sigma$ varies continuously with signature changes, unlike the CR from \cite{chung2014connection}.
  \item $r_{ij}^\sigma\leq r_{ij}$ in accordance with Proposition \ref{prop:resistance-lower-bound};
  \item $r_{ij}^\sigma = r_{ij}$ when the graph is consistent.
\end{enumerate}

\begin{figure}[htb!]
    \centering
    \begin{subfigure}[c]{\linewidth}
        \centering
        \includegraphics[width=\linewidth]{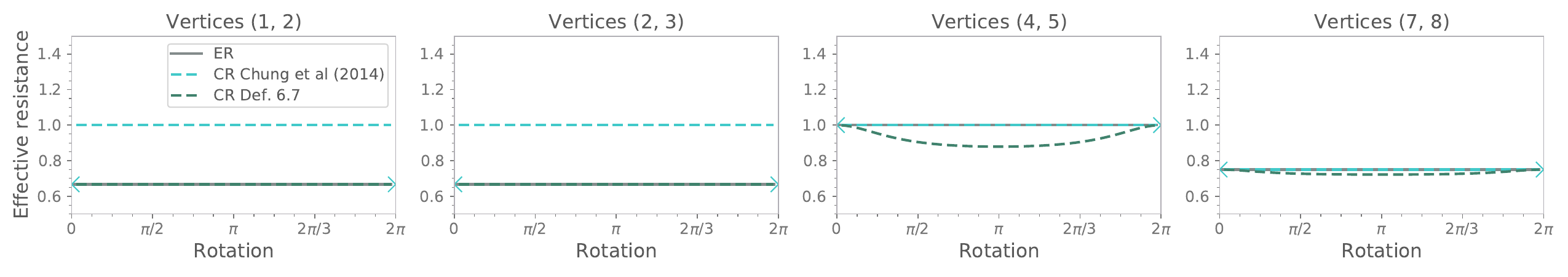}
        \caption{Dumbbell graph}
        \label{fig:exp_dumbbell_graph}
    \end{subfigure}
    \hfill
    \begin{subfigure}[c]{\linewidth}
        \centering
        \includegraphics[width=\linewidth]{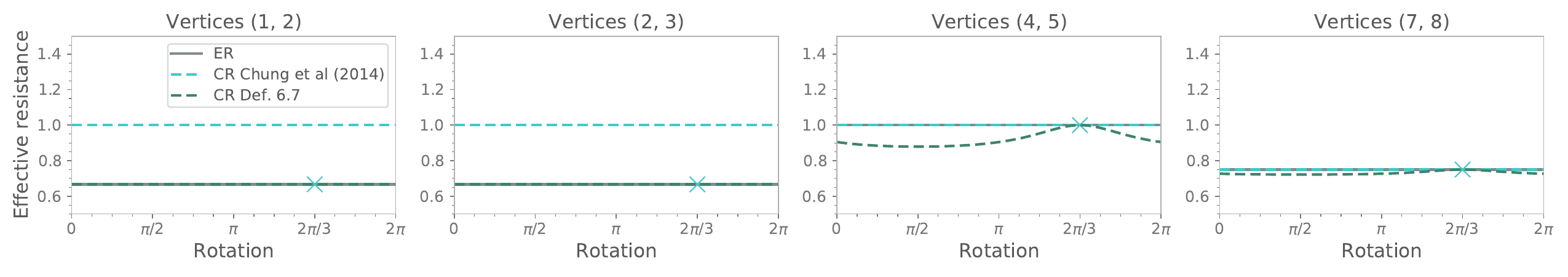}
        \caption{Dumbbell graph with fixed rotation $\theta= 90^o$ for the signature on edge $(2,3)$.}
        \label{fig:exp_dumbbell_graph_90deg}
    \end{subfigure}
    \hfill
    \begin{subfigure}[c]{\linewidth}
        \centering
        \includegraphics[width=\linewidth]{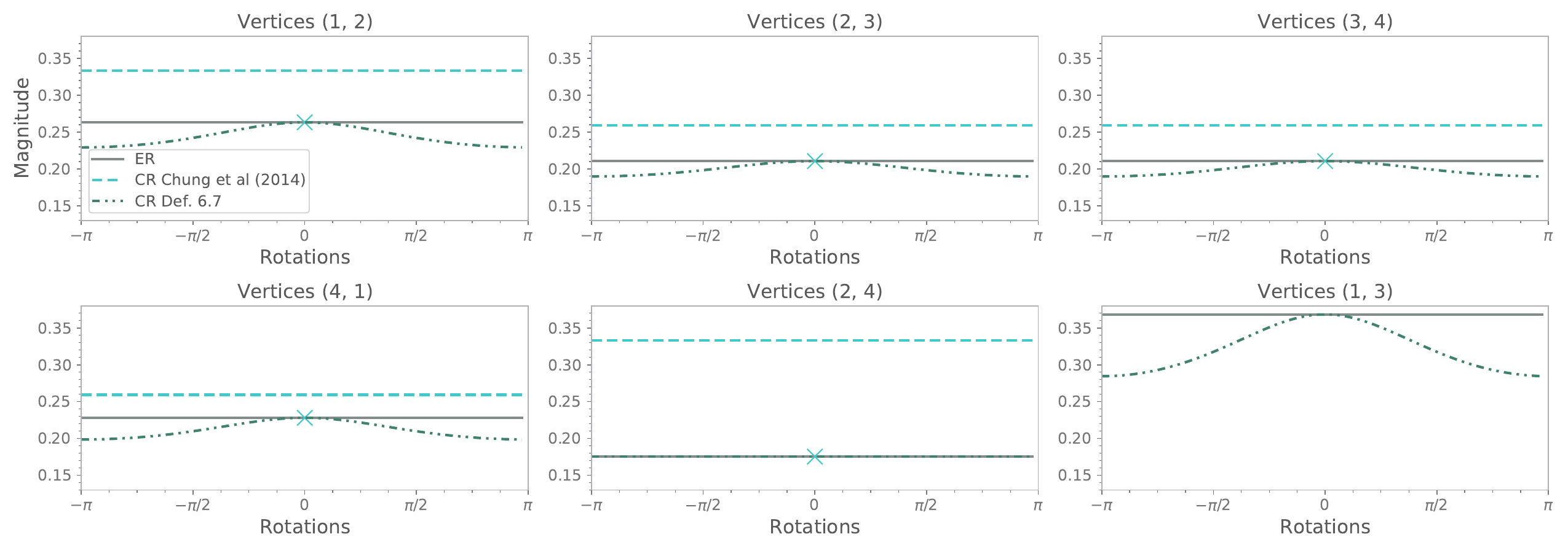}
        \caption{Wheatstone-bridge graph}
        \label{fig:exp_wheatstonebridge_graph}
    \end{subfigure}
    \caption{\textbf{Numerical comparison of different notions of effective resistance.}  For clarity, `ER' in the figures corresponds to the standard effective resistance, while `CR' refers to the connection effective resistance as per Definition \ref{def:scalar_connection_er} and Chung et al.~\cite{chung2014connection}, respectively.
    (a) A dumbbell graph where $\theta$ on edge $(1,2)$ is varied over $[0, 2\pi]$. 
    (b) A dumbbell graph similar to (a), but with $\theta$ fixed at $\pi/2$ for edge $(2,3)$. 
    (c) A Wheatstone-bridge graph where $\theta$ for the edge signature on edge $(1,3)$ is varied over $[0, 2\pi]$.}
    \label{fig:numerical_experiments}
\end{figure}

\section{Discussion}
We introduced a novel concept of effective resistance specifically tailored for connection graphs, featuring desirable attributes like continuity relative to graph signature and invariance under signature equivalence. Several potential research avenues below conclude our paper. 

\paragraph{Properties of the Connection Resistance} 
While we have established certain properties, further examination of connection resistance is intriguing. For instance, given that graph effective resistance is a metric, we can explore if this extends to connection resistance. A probabilistic viewpoint on connection resistance, perhaps via a commute time notion for connection graphs, could also be valuable.

\paragraph{Graph cut and Cheeger inequality in connection graphs} 
Prior work has attempted to define Cheeger constants and establish related inequalities for connection graphs. These methods separate graph structures from signatures. Following \cite{memoli2022persistent}, which links graph Cheeger constant to effective conductance, we are interested in seeing if connection conductance/resistance can be similarly used to define Cheeger constants.

\paragraph{Analysis of Graph Neural Networks} We note that the increasing adoption of generalized graphs, including magnetic graphs and connection graphs, in the development of neural networks for handling complex data has been a recent trend \cite{zhang2021magnet,barbero2022sheaf}. Inspired by these advancements, it is intriguing to consider the potential application of our notion of effective resistance in analyzing and understanding such neural networks. By leveraging insights from recent work in \cite{arnaiz2022diffwire,digiovanni2023oversquashing,black2023understanding}, we can explore the impact of effective resistance on network behavior and performance optimization.

\section*{Acknowledgements}

This work was supported by funding from the NSF (CCF 2217033 to GM, ZW and YW, CCF 2112665 to YW, DMS 2012266 to AC), the NIH (1RF1MH125317 to ZW and YW), a gift from Intel Research (to AC), Simula Research Laboratory (to AO), and the HDSI Graduate Prize Fellowship (to SR).

\bibliographystyle{plain}
\bibliography{references}


\appendix


\section{Missing Proofs}\label{sec:missing_proofs}

\begin{proof}[Proof of \Cref{prop:energy-min}]
    Since (DP) admits a unique solution per \Cref{prop:maximum-norm-principle}, we know that $\mathcal{L}$ is strictly positive definite on the region
    \begin{align}
        \mathcal{F} = \{f\in\ell^2(V;\mathbb{R}^{d\times d}):f|_{\partial H} = \phi\}
    \end{align}
    Therefore, $E(f)$ is strictly convex on $\mathcal{F}$ and hence has a unique minimizer $f_0$.

    Notice that if $\mathcal{L} f_0|_{H} = 0$, then the gradient of $\nabla_f E(f_0) = 0_{nd\times d}$ and by the strict convexity of $E(f)$ on $\mathcal{F}$, $f_0$ minimizes $E(f)$.

    On the other hand, if $f_0$ solves (EP), it remains to show that $\mathcal{L}f_0|_H = 0_{d\times d}$. Let $w\in\ell^2(V;\mathbb{R}^{d\times d})$ be any function which satisfies $w|_{\partial H} = 0_{d\times d}$. Then $(f_0 + w)|_{\partial H} = \phi$, and since $f_0$ minimizes $E(f)$, we have
    \[\frac{d}{dh}E(f_0+hw)\Big{|}_{h=0} = 0.\]
    We calculate, for fixed $h\in\R$,
    \begin{align} 
        E(f_0+hw) & = \frac{1}{2}\tr\left( (f_0+hw)^\mathrm{T} \mathcal{L} (f_0+hw) \right)                                                                               \\
                  & =\frac{1}{2}\tr\left( f_0^\mathrm{T}\mathcal{L}f_0 + hw^\mathrm{T}\mathcal{L} f_0 + hf_0^\mathrm{T}\mathcal{L}w + h^2 w^\mathrm{T}\mathcal{L}w\right)
    \end{align}
    The derivative $\frac{d}{dh}(\cdot)|_{h=0}$ will only recover the two terms from $E(f_0+hw)$ linear in $h$. Therefore by symmetry of $\mathcal{L}$ and the cyclic invariance property of trace,
    \begin{align}
        \frac{d}{dh}E(f_0+hw)\Big{|}_{h=0} & = \tr\left(w^T \mathcal{L} f_0\right) = 0.
    \end{align}
    Equivalently stated in the Hilbert-Schmidt inner product on $\R^{nd\times d}$,  $\langle w, \mathcal{L} f_0\rangle = 0.$ Since $w|_{\partial H} = 0_{d\times d}$, the inner product only depends on the values of each function on the interior of $H$. Moreover, our choice of $w$ did not specify its values on the interior of $H$, so it follows that $\mathcal{L} f_0 = 0_{d\times d}$ on $H$.
\end{proof}

\begin{proof}[Proof of \Cref{prop:mean-path-product-transpose}]
    For any path $X_0=i,X_1,\ldots,X_k=j$ such that $X_l\neq i,j$ for all $l\in\{1,\ldots,k\}$, we have that its inverse path $Y_l:=X_{k-l}$ satisfies that
    \begin{enumerate}
        \item $Y_0=j$ and $Y_k=i$;
        \item $Y_l\neq i,j$ for all $l\in\{1,\ldots,k\}$.
    \end{enumerate}
    The path $X_0=i,X_1,\ldots,X_k=j$ contributes to $\Omega_{ij}^s(i)$ the term $\sigma_{iX_1}\cdots\sigma_{X_{k-1}j}$, whereas the inverse path $Y_0=j,Y_1,\ldots,Y_k=i$ contributes to $\Omega_{ji}^s(j)$ the term $\sigma_{jY_1}\cdots\sigma_{Y_{k-1}i}=(\sigma_{iX_1}\cdots\sigma_{X_{k-1}j})^\mathrm{T}$. In this way, it is direct to see that
    \[\Omega_{ij}^s(i)  = (\Omega_{ji}^s(j))^\mathrm{T}.\] 
\end{proof}


\begin{proof}[Proof of \Cref{thm:absolutely inconsistent via mean path signature}]
    Note that for any $i\in V$, since $\Omega_i^1$ is symmetric and positive semidefinite. Hence, it suffices to prove that $\sigma$ is absolutely inconsistent iff for all $i\in V$ one has that eigenvalues of $\Omega_i^1$ are strictly smaller than $1$.

    Recall that
    \[\Omega_i^1 = \mathbb{E}\left[\prod_{\ell=1}^{T_{i}^1}\sigma_{X_{\ell-1}X_{\ell}}\middle|X_0=i\right].\]
    Hence,
    \[\|\Omega_i^1\|_2 = \left\|\mathbb{E}\left[\prod_{\ell=1}^{T_{i}^1}\sigma_{X_{\ell-1}X_{\ell}}\middle|X_0=i\right]\right\|_2\leq  \mathbb{E}\left[\left\|\prod_{\ell=1}^{T_{i}^1}\sigma_{X_{\ell-1}X_{\ell}}\right\|_2\middle|X_0=i\right]= 1.\]
    Here the first inequality follows from Jensen's inequality and the second inequality follows from the fact that $\|\sigma_{ij}\|_2=1$ for all $(i,j)\in E^\text{or}$.
    The equality above holds if and only if $\prod_{\ell=1}^{T_{i}^1}\sigma_{X_{\ell-1}X_{\ell}}$ is a constant almost surely given $X_0=i$. This implies that 
    \[\prod_{\ell=1}^{N}\sigma_{x_{\ell-1}x_{\ell}}=I_{d\times d}\]
    for any path $(i=x_0,\ldots,x_N=i)$ such that $N\geq 1$ and $x_\ell\neq i$ for all $\ell\neq 0,N$. Hence, for any $j\neq i$, we have that for any path $(i=x_0,\ldots,x_N=j)$, one has that $\prod_{\ell=1}^{N}\sigma_{x_{\ell-1}x_{\ell}}$ is a constant, i.e., is independent of the choice of path. We denote this constant by $\sigma_{ij}$ for all $j\neq i$. In this way, we construct $f:V\to\R^d$ by letting $f(i):=e_i$ and $f(j):=\sigma_{ij}e_i$ for all $j\neq i$. It is direct to check that $\mathcal{L}f=0$ and hence $\sigma$ is not absolutely inconsistent. This concludes the proof.
\end{proof}

\begin{proof}[Proof of \Cref{prop:conductance matrix under equivalence}]
    Recall by \Cref{prop:equivalence connection Laplacian} that $F\mathcal{L}^\sigma F^{-1}=\mathcal{L}^\tau$ where $F$ is the block diagonal matrix whose $i$th block is $f(i)$.
    Then, it follows from \cite[Proposition 9]{gulen2023generalization} that
    \[F_{ij} \mathcal{L}^\sigma/\mathcal{L}^\sigma_{\{ij\}^c} F_{ij}^{-1}=(F_{ij} \mathcal{L}^\sigma F_{ij}^{-1})/(F_{\{ij\}^c}\mathcal{L}^\sigma_{\{ij\}^c}F_{\{ij\}^c}^{-1})=\mathcal{L}^\tau/\mathcal{L}^\tau_{\{ij\}^c},\]
    where $F_{\{ij\}^c}$ is the block diagonal matrix whose $l$th block is $f(l)$ for all $l\neq i,j$. This implies that
    \[F_{ij} \mathcal{C}^\sigma(i,j)=\mathcal{C}^\tau(i,j) F_{ij}.\]
\end{proof}

\begin{proof}[Proof of \Cref{prop:conductance matrix under direct sum}]
    This follows from \Cref{rmk:block-decomposition-direct-sum} and the proof of \Cref{prop:conductance matrix under equivalence}. 
\end{proof}

\begin{proof}[Proof of \Cref{eq:alternative-voltage-function}]
    This can be seen easily as follows.
    \begin{align*}
        \mathbb{E}\left[ \chi_i \left(X_{T_{ij}^0} \right)\prod_{s=0}^{T_{ij}} \sigma_{X_s X_{s+1}} | X_0 = x  \right] & = \mathbb{E}\left[ \chi_i \left( i\right)\prod_{s=0}^{T_{ij}^0} \sigma_{X_s X_{s+1}} | X_0 = x, X_{T_{ij}^0} = i  \right]\mathbb{P}^x\left[X_{T_{ij}^0} = i\right] \\                        & +\mathbb{E}\left[ \chi_i \left( j \right)\prod_{s=0}^{T_{ij}^0} \sigma_{X_s X_{s+1}} | X_0 = x, X_{T_{ij}^0} = j  \right]\mathbb{P}^x\left[X_{T_{ij}^0} = j\right] \\
        & =\mathbb{E}\left[ \prod_{s=0}^{T_{ij}^0} \sigma_{X_s X_{s+1}} | X_0 = x, X_{T_{ij}^0} = i  \right]\mathbb{P}^x\left[X_{T_{ij}^0} = i\right].
    \end{align*}
\end{proof} 

\begin{proof}[The Missing Part of Proof of \Cref{thm:escape probability}]
    \begin{align*}
        \mathcal{C}_{ji}^\sigma & = \sum_{x}(\mathcal{V}_{i\rightarrow j}(j)-\sigma_{jx}\mathcal{V}_{i\rightarrow j}(x))A_{jx}                                                   \\
        & =\deg(j)\cdot(\mathcal{V}_{i\rightarrow j}(j) - \sum_{x}\mathbb{P}_{j,x}\sigma_{jx}\mathcal{V}_{i\rightarrow j}(x))                            \\
        & =-\deg(j)\cdot \sum_{x}\mathbb{P}_{j,x}\sigma_{jx}\mathbb{P}^x[T_i^1<T_j^1]\cdot \mathbb{E}^x[\sigma_{p_{x,x_1,x_2,\ldots,x_n,i}}|T_i^1<T_j^1] \\
        & =-\deg(j)\cdot \sum_{x}\mathbb{P}^j[T_i^1<T_j^1]\cdot \mathbb{E}^j[\sigma_{p_{j,x,x_1,x_2,\ldots,x_n,i}}|T_i^1<T_j^1]                          \\
        & =-\deg(j)\cdot \mathbb{P}^j[T_i^1<T_j^1]\cdot \Omega_{ji}^1(j).
    \end{align*}
\end{proof}

\begin{proof}[Proof of Equation \eqref{eq:parallel}]
    We only show that $\mathcal{C}_{ji}^\sigma=\sum_{l=1}^m\mathcal{C}_{ji}^{\sigma,l}$.
    Note that for any given $l=1,\ldots,m$, one has that
    \begin{align*}
        \mathcal{C}_{ji}^{\sigma,l} & = -w_{jl}\mathbb{P}[T_i^1<T_j^1|X_0=j,X_1 = l]\,\mathbb{E}\left[\prod_{\ell=1}^{T_{j}^1}\sigma_{X_{\ell-1}X_{\ell}}|X_0=i,X_1 = l, T_i^1<T_j^1\right] \\
        & = -w_{jl}\mathbb{E}\left[\chi_i(T_{ij}^1)\prod_{\ell=1}^{T_{j}^1}\sigma_{X_{\ell-1}X_{\ell}}|X_0=i,X_1 = l\right].
    \end{align*}
    Hence,
    \begin{align*}
        \sum_{l=1}^m\mathcal{C}_{ji}^{\sigma,l} & = -\sum_{l=1}^mw_{jl}\mathbb{E}\left[\chi_i(T_{ij}^1)\prod_{\ell=1}^{T_{j}^1}\sigma_{X_{\ell-1}X_{\ell}}|X_0=i,X_1 = l\right]                           \\
        & = -\deg(j)\sum_{l=1}^m\mathbb{P}[X_1 = l|X_0=j]\mathbb{E}\left[\chi_i(T_{ij}^1)\prod_{\ell=1}^{T_{j}^1}\sigma_{X_{\ell-1}X_{\ell}}|X_0=i,X_1 = l\right] \\
        & = -\deg(j)\mathbb{E}\left[\chi_i(T_{ij}^1)\prod_{\ell=1}^{T_{j}^1}\sigma_{X_{\ell-1}X_{\ell}}|X_0=i\right]                                              \\
        & =-\deg(j)\mathbb{P}^j[T_i^1<T_j^1]\Omega_{ji}^1(j) =\mathcal{C}_{ji}^\sigma.
    \end{align*}
\end{proof}

\begin{proof}[Proof of \Cref{prop:resistance-equivalence}]
    By \Cref{prop:conductance matrix under equivalence} and \Cref{prop:mean path equivalence}, one has that
        \[F_{ij}\mathcal{C}^\sigma(i,j)=\mathcal{C}^\tau(i,j)F_{ij}\quad\text{ and}\]
        \[f(i)\Omega_{ij}^{\sigma,0}=\Omega_{ij}^{\tau,0}f(j),\quad f(j)\Omega_{ji}^{\sigma,0}=\Omega_{ji}^{\tau,0}f(i).\]
    The claim then follows from \Cref{prop:R = C} after direct calculation. 
\end{proof}

\begin{proof}[Proof of \Cref{prop:resistance-direct-sum}]
    This follows from \Cref{rmk:block-decomposition-direct-sum} and the observations made in the proof of \Cref{prop:resistance-equivalence}.
\end{proof}

\begin{proof}[Proof of \Cref{prop:resistance-lower-bound}]
    By \Cref{thm:resistance decomposition}, we only need to prove that $r_{ij}^\sigma\leq r_{ij}$ when $\sigma$ is absolutely inconsistent. In this case, we know by \Cref{thm:escape probability} that
    \[\mathcal{C}_{ii}^\sigma =c_{ij}I_{d\times d} +(\deg(i)-c_{ij})\cdot(I_{d\times d} - \Omega^1_{ii}(j))\] 
    By Jensen's inequality, we have that 
    \[\|\Omega_{ii}^1(j)\|_2 = \left\|\mathbb{E}^i\left[\prod_{\ell=1}^{T_{i}^1}\sigma_{X_{\ell-1}X_{\ell}}\middle|T_i^1<T_j^1\right]\right\|_2\leq  \mathbb{E}^i\left[\left\|\prod_{\ell=1}^{T_{i}^1}\sigma_{X_{\ell-1}X_{\ell}}\right\|_2\middle|T_i^1<T_j^1\right]= 1.\]
    This implies that $I_{d\times d}-\Omega_{ii}^1(j)$ is positive semidefinite. Since $\deg(i)\geq c_{ij}$, we have that all eigenvalues of $\mathcal{C}_{ii}^\sigma$ are lower bounded by $c_{ij}$. Therefore, we have that the eigenvalues of $(\mathcal{C}_{ii}^\sigma)^{-1}$ are upper bounded by $1/c_{ij}=r_{ij}$. In this way, by \Cref{thm:resistance decomposition} we have that 
    \[r_{ij}^\sigma =\frac{1}{2d}\left(\tr((\mathcal{C}_{ii}^\sigma)^{-1})+\tr((\mathcal{C}_{jj}^\sigma)^{-1})\right)\leq \frac{1}{2d}\cdot 2dr_{ij}=r_{ij}.\]
\end{proof}

\end{document}